  \newcommand{\mc}{\mathcal}
\DeclarePairedDelimiter\floor{\lfloor}{\rfloor}
\renewcommand\d{{\mathcal d}}
\renewcommand\x\times
\renewcommand|{\vert}
\renewcommand{\v}{\nu}
\newcommand{\VA}{V_\forall}
\newcommand{\VE}{V_\exists}
\newcommand*{\indata}{*}
\newcommand*{\crossSection}[2]{\Lbag #1 \Rbag_{#2}}
\newif\ifshowfixes
  \newcommand{\manufixed}[3][]{\todo[inline,color=green!50!blue!30,#1]{\small M. #2\\{\sc \checkmark} #3}}
  \newcommand{\manufixed}[3][]{}
\newcommand\D{\mathcal D}
\newcommand{\inp}{\mathbbm{i}}
\newcommand{\outp}{\mathbbm{o}}
\newcommand{\dom}{\mathrm{dom}}
\newcommand{\update}{\mathit{update}}
\newcommand{\qinit}{q_\iota}
\newcommand{\Val}{\D^R}
\newcommand{\tst}{\textnormal{\textsf{tst}}}
\newcommand{\asgn}{\textnormal{\textsf{asgn}}}
\newcommand{\Tst}{\textnormal{\textsf{Tst}}}
\newcommand{\Asgn}{\textnormal{\textsf{Asgn}}}
\newcommand{\inc}{\mathrm{inc}}
\newcommand{\dec}{\mathrm{dec}}
\newcommand{\ifz}{\mathrm{ifz}}
\newcommand{\halt}{\mathrm{halt}}
\newcommand{\invlightning}{\rotatebox[origin=c]{180}{$\lightning$}}
\newcommand{\badState}{\lightning}
\newcommand{\goodState}{\invlightning}
\newcommand{\bbN}{\mathbb{N}}
\newcommand{\bbQ}{\mathbb{Q}}
\newcommand{\bbZ}{\mathbb{Z}}
\newcommand{\impl}{\rightarrow}
\newcommand{\Impl}{\Rightarrow}
\newcommand{\Implied}{\Leftarrow}
\newcommand\LAND\bigwedge
\newcommand\LOR\bigvee
\newcommand{\da}{\!\downarrow\!}
\newcommand{\tup}[1]{(#1)}
\newcommand\li{\begin{itemize}}
\newcommand\il{\end{itemize}}
\renewcommand{\-}{\item}
\newcommand\lo{\begin{enumerate}}
\newcommand\ol{\end{enumerate}}
\newcommand{\parit}[1]{\smallskip\noindent {\it #1.}}
\newcommand\tr{\triangleright}
\newcommand{\otherPlayer}[1]{\overline{#1}}
\newcommand\Bound{\text{\sc b}}
\newcommand{\idle}{\mathrm{idle}}
\newcommand\pto{\rightharpoonup}
\newcommand{\length}[1]{\left\lVert #1 \right\rVert}
\newcommand{\size}[1]{\left| #1 \right|}
\newcommand{\QFeasible}{\textsf{QFeasible}}
\newcommand{\Feasible}{\textsf{Feasible}}
\renewcommand{\:}{\!:}
\newcommand\condAT{\ensuremath{\sf A\frak{2}}}
\newcommand\condBT{\ensuremath{\sf B\frak{2}}}
\newcommand\condAO{\ensuremath{\sf A\frak{1}}}
\newcommand\condBO{\ensuremath{\sf B\frak{1}}}
\theoremstyle{thmstyleone}%
\newtheorem{theorem}{Theorem}
\newtheorem{proposition}[theorem]{Proposition}%
\newtheorem{lemma}[theorem]{Lemma}%
\newtheorem{claim}[theorem]{Claim}%
\theoremstyle{thmstyletwo}%
\newtheorem{example}{Example}%
\newtheorem{remark}{Remark}%
\newtheorem*{remark*}{Remark}%
\theoremstyle{thmstylethree}%
\newtheorem{definition}{Definition}%
\begin{document}

\title[Church Synthesis on Register Automata with a Linear Order]{Church Synthesis on Register Automata over Linearly Ordered Data Domains\footnote{This article is an extended version of~\cite{DBLP:conf/stacs/ExibardFK21}, which features full proofs and incorporates elements of~\cite[Chapter~7]{ExibardThesis}.}}

\author[1]{\fnm{L\'{e}o} \sur{Exibard}}
\author[2]{\fnm{Emmanuel} \sur{Filiot}}
\author[2]{\fnm{Ayrat} \sur{Khalimov}}

\affil[1]{\orgname{Reykjavik University}, \orgaddress{\country{Iceland}}}
\affil[2]{\orgname{Universit\'{e} libre de Bruxelles}, \orgaddress{\country{Belgium}}}

\abstract{In a Church synthesis game, two players, Adam and Eve,
  alternately pick some element in a \emph{finite} alphabet, for an
  infinite number of rounds. The game is won by Eve if the
  $\omega$-word formed by this infinite interaction belongs to a given
  language $S$, called the specification. It is well-known that for
  $\omega$-regular specifications, it is decidable whether Eve has a strategy to
  enforce the specification no matter what Adam does.
  We study the extension of Church synthesis games to the linearly ordered data
  domains $(\bbQ,\leq)$ and $(\bbN,\leq)$. In this setting, the infinite interaction
  between Adam and Eve results in an $\omega$-\emph{data} word, i.e.,
  an infinite sequence of elements in the domain.

  We study this problem when specifications are given as register
  automata. Those automata consist in finite automata equipped with a finite set of registers
  in which they can store data values, that they can then compare with
  incoming data values with respect to the linear order. Church games over $(\bbN,\leq)$ are however undecidable, even for
  deterministic register automata. Thus, we introduce one-sided Church games,
  where Eve instead operates over a finite alphabet, while Adam
  still manipulates data. We show that they are determined, and that deciding
  the existence of a winning strategy is in \textsc{ExpTime}, both for
  $\bbQ$ and $\bbN$. This follows from a study of constraint sequences, which
  abstract the behaviour of register automata, and allow us to reduce
  Church games to $\omega$-regular games. We present an application of
  one-sided Church games to a transducer synthesis
  problem. In this application, a transducer models a reactive system
  (Eve) which outputs data stored in its registers, depending on its
  interaction with an environment (Adam) which inputs data to the
  system.
}

\manufixed{R2: Already in the abstract it is unclear why we restrict Eve instead of Adam because it is not yet clear that Eve represents the system. This is only mentioned in the introduction.}{}

\manufixed{R2: - Your recorded STACS'21 talk starts with the higher level picture and introduces new technical concepts only when needed. This makes the talk much easier to follow than this paper where you start with technical results about chains when the reader did not yet fully understand how these results are connected to the core problem or why the results are needed. Before giving the necessary technical details, more motivation and connection to the big picture of these details is needed in the written presentation to guide the reader.
- To achieve the goal of guiding the reader, a good introduction is very important. Your current introduction is closer to preliminaries as you present the basics of church synthesis, register automata, and constraint sequences. Instead, you should move the technical foundations to a proper, currently missing preliminaries section and focus on the big picture and motivation in your introduction. By line 50 on page 5, the reader needs to know why this topic is interesting and important in the first place (give context and motivation for your topic), what it is you are trying to achieve (explain the gap and what the open problem is that you solve), and get a high level idea of your innovative solution (describe the important milestones along the way) so the readers are able to follow your subsequent presentation.
- Smaller, more digestible chapters usually also help the reader to keep track of the bigger picture, especially when recapping the context, motivation, and intuition of each section - which you have missed e.g. at the beginning of Sections 2.2 / 2.2.1 and at the end of Section 3.2.
- The paper is not self-contained as it e.g. relies on and uses automata classes and operations (e.g. page 14 in $A_{neg inf}$), and clique theory (page 11, line 54) without properly introducing them or even citing a reference for the reader to find the missing foundations. Consider adding the needed foundations to a preliminaries section or an appendix.
}{}

\manufixed{R2: - While you motivate your work through its application to transducer synthesis the latter deserves some more motivation of its own. From reading the paper it does not become super clear why the reader should care about the results you are presenting and some more discussion on the connection to real problems would help in appreciating your contribution.
- I would appreciate a more in depth discussion of related work and the current state of the art.
}{}

\keywords{Synthesis, Church Game, Register Automata, Register Transducers, Ordered Data Words}

\pacs[2012 ACM Subject Classification]{\begin{tabular}{lll}Theory of computation & \textrightarrow{} & Logic and verification \\ Theory of computation & \textrightarrow{} & Automata over infinite objects \\ Theory of computation & \textrightarrow{} & Transducers \end{tabular}}


\maketitle

\manufixed{General comment: we could simplify decidability by invoking a
  paper we submitted with Edwin, showing that max-counter games can be
  solved in exptime. So, from Theorem 6, we immediately get the
  result. This would grealy simplify the paper, but we would have to
  invoke a conference paper.}{maybe, but 1) be careful about the exact conditions and 2) data-assignment function is still needed. Therefore, doesn't look like ``greatly simplifies.}

\section{Introduction}\label{sec1}

\manufixed{R2 asks for a proper intro and suggest to move formal details in
  a preliminary section. I agree, it should be more to the point,
  first give some motivation, explain our objective, and give our
  contributions. We could keep the example and the text about
  constraint sequences, we it should be moved at the end of the
  introduction in a ``proof techniques'' section, which would also
  mentioned interesting side results.}{}

\subparagraph{Church synthesis}
Reactive synthesis is the problem of automatically constructing a
reactive system from a specification of correct executions, i.e. a
non-terminating system which interacts with an environment, and whose
executions all comply with the specification, no matter how the
environment behaves. The earliest formulation of synthesis dates back
to Church, who proposed to formalize it as a game problem:
two players, Adam in the role of the environment and Eve in the role of the system,
alternately pick the elements from two finite alphabets $I$ and $O$ respectively.
Adam starts with $i_0 \in I$, Eve responds with $o_0 \in O$, ad infinitum.
Their interaction results in the $\omega$-word $w = i_0 o_0 i_1 o_1... \in (I\cdot O)^\omega$.
The winner is decided by a winning condition,
represented as a language $S \subseteq (I \cdot O)^\omega$ called \emph{specification}:
if $w\in S$, the play is won by Eve, otherwise by Adam.
Eve wins the game if she has a strategy $\lambda_\exists: I^+ \to O$ to pick
elements in $O$,
depending on what has been played so far, so that no matter the input
sequence $i_0i_1\dots$ chosen by Adam, the resulting $\omega$-word
$i_0\lambda(i_0)i_1\lambda(i_0i_1)\dots$ belongs to $S$.
Similarly, Adam wins the game if he has a strategy $\lambda_\forall : O^* \to I$ to win against any strategy Eve uses.
In the original Church problem, specifications are $\omega$-regular
languages, i.e. languages definable in monadic second-order logic with
one successor or
equivalently, deterministic parity automata.
The seminal papers~\cite{BL69,Rab72} have shown that Church games (for
$\omega$-regular specification) are \emph{determined}: either Eve wins
or otherwise Adam wins. Moreover, given a Church game, the winner of
the game is computable. Finally, justifying the use of Church games as a formulation of reactive synthesis,
finite-memory strategies are sufficient to win (both for Eve and
Adam). This implies that if Eve wins a Church game, one can
effectively construct a finite-state machine (e.g. a Mealy machine)
implementing a winning strategy.

Church synthesis and games on graphs have been extensively studied
for specifications given in linear-time temporal logic
(LTL)~\cite{PR89a} --~recently supported by a tool competition~\cite{Syn14}~--, as well
as in many other settings, for example, quantitative, distributed,
non-competitive~(see
\cite{DBLP:reference/mc/BloemCJ18,DBLP:journals/siglog/Bruyere21} and
the references therein). Yet, those works focus on control, sometimes
with complex interactions between the synthesized systems, rather than
on data. This is reflected already in the original formulation by
Church: Adam and Eve interact via \emph{finite} alphabets $I$ and
$O$, intended to model control actions rather than proper pieces of
data. But real-life systems often operate values from a large to \emph{infinite} data domain.
Examples include data-independent programs~\cite{Wol86,HDB97,LN00},
software with integer parameters~\cite{BHM03},
communication protocols with message parameters~\cite{DST13},
and more~\cite{BHJS07,Via09,CFBBCM02}. The goal of this paper is to
study extensions of reactive synthesis, and its formulation as Church
games, to infinite data domains: $(\bbQ,\leq)$ and $(\bbN,\leq)$ in
particular.

\subparagraph{Church synthesis over infinite data domains} Church
games naturally extend to an infinite data domain $\D$: Adam and Eve
alternately pick data in $\D$, and their infinite interaction results in
an $\omega$-\emph{data} word $d_0d'_0d_1d'_1\dots \in \D^\omega$. The
game is won by Eve if it belongs to a given specification $S\subseteq
\D^\omega$. Accordingly, strategies for Eve have type $\D^+\rightarrow
\D$, while strategies for Adam have type $\D^*\rightarrow \D$. In this
paper, we study specifications given by a standard extension of
finite-state automata to infinite data domains called
\emph{register automata}~\cite{KF94}: they use a finite set of
registers to store data values, and a finite set of predicates over the data domain
to test those values. In each step, the automaton reads a data value from $\D$ and
compares it with the values held in its registers using the predicates
(and possibly constants).
Depending on this comparison, it decides to store the value in some of the registers,
and then moves to a successor state. This way, it builds a sequence of
\emph{configurations} (pairs of state and register values)
representing its run on reading a data word from $\D^\omega$: it is
accepted if the visited states satisfy a certain parity condition. In
this paper, we study specifications given by deterministic register
automata over $\bbQ$ or $\bbN$, which can use the predicate $\leq$ and
the constant $0$ to test data values.

\subparagraph{Contributions} Our first result is an impossibility
result: deciding the winner of a Church game for specifications given
by deterministic register automata over $(\bbN,\leq)$ is an
undecidable problem (Theorem~\ref{thm:Church_2sided_undec}). We
introduce the \emph{one-sided restriction} on Church games: Adam still
has the full power of picking data values, but Eve's behaviour is restricted
to picking elements from a \emph{finite} alphabet only. Despite being
asymmetric, one-sided Church games are quite expressive. For example,
they model synthesis scenarios for runtime data monitors
that monitor the input data stream and raise a Boolean flag when a
critical trend happens (like oscillations above a certain
amplitude), and for systems that need to take control actions
depending on sensor measurements (a heating controller for instance).
Formally, in one-sided Church games, there is a finite set of elements
$\Sigma$ in which Eve picks her successive choices. Accordingly,
specifications are languages $S\subseteq (\D\Sigma)^\omega$, in this
paper defined by deterministic one-sided register automata (defined naturally by
alternating between register automata transitions and finite-state
automata transitions). Eve's strategies have type
$\lambda_\exists:\D^+\rightarrow \Sigma$ while Adam's strategies have type
$\lambda_\forall:\Sigma^*\rightarrow \D$. We prove the following about
one-sided Church games whose specifications are given by one-sided
    deterministic register automata over $(\bbQ,\leq)$ and
    $(\bbN,\leq)$:
\begin{enumerate}
  \item they are determined: every game is either won by Eve or Adam
  \item they are decidable:  the winner can be computed in time exponential in the number of registers of the specification,
  \item if Eve wins, then she has a winning strategy which can be
    implemented by a transducer with registers (which can be
    effectively constructed). 
\end{enumerate}
Transducers with registers extend Mealy machines with a finite set of
registers: they have finitely many states, and given any state and a
test over the input data value, deterministically, they assign the current value to
some registers (or none), output an element of $\Sigma$, and update their
state. Therefore, the last result echoes the similar result in the $\omega$-regular
setting (finite-memory strategies can be effectively constructed for
the winner), and supports the fact that one-sided Church games on
register automata are an adequate framework for effective synthesis of
machines processing streams of data.


\manufixed{R1: Some motivation why the asymmetric case is nice should be
  put here, before the example.}{DONE}

\begin{example}
    Figure~\ref{fig:UpWeGo} illustrates a specification given by a
    deterministic one-sided register automaton, alternating between
    square and circle states, depending on whether their outgoing
    transitions read data values or elements in a finite alphabet $\Sigma = \{a,b\}$. It can be
    seen as a game arena where Adam controls the square states while
    Eve controls the circle states. To simplify the presentation, two
    parts of the automaton are not depicted and have been summarised
    as ``Eve wins'' and ``Eve loses'': any run going in the former part
    is non-accepting and any run going in the latter part is accepting
    (this can be modelled by a parity condition). So, Eve's objective
    is to force executions into ``Eve wins'', whatever input data values are
    issued by Adam. There are two registers, $r_M$ and $r_l$.
The test $\top$ (true) means that the transition can be taken
irrespective of the value played,
the test $r_l<\indata<r_M$ means that the value should be between the values of registers $r_l$ and $r_M$,
and the test `else' means the opposite.
The writing $\da r$ means that the value is stored into the register $r$.
At first, Adam provides some data value $\d_M$, serving as a maximal value stored in $r_M$.
Register $r_l$, initially $0$, holds the last data value $\d_l$ played by Adam.
Consider state $C$:
if Adam provides a value outside of the interval $]\d_l,\d_M[$, he loses;
if it is strictly between $\d_l$ and $\d_M$, it is stored into register $r_l$ and the game proceeds to state $D$.
There, Eve can either respond with label $b$ and move to state $E$, or with $a$ to state $C$.
In state $E$,
Adam wins if he can provide a data value strictly between $\d_l$ and $\d_M$,
otherwise he loses.
Eve wins this game in $\bbN$:
for example, she could always respond with label $a$,
looping in states $C$--$D$.
After a finite number of steps,
Adam is forced to provide a data value $\geq\d_M$, losing the game.
An alternative Eve winning strategy, that does depend on Adam data,
is to loop in $C$--$D$ until $\d_M-\d_l=1$
(thus, she has to memorise the first Adam value $\d_M$),
then move to state $E$, where Adam loses.
In the dense domain $(\bbQ,\leq)$, however, the game is won by Adam,
because he can always provide a value within $]\d_l,\d_M[$ for any $\d_l<\d_M$,
so the game either loops in $C$--$D$ forever or reaches ``Eve loses''.

\begin{figure}[ht]
  \resizebox{\textwidth}{!}{%
    ~~~~\begin{tikzpicture}[auto, node distance=2.5cm]
      \tikzstyle{every state}=[text=black,font=\scriptsize]
      \tikzstyle{input}=[rectangle,fill=red!30,minimum size=0.7cm,inner sep=0cm]
      \tikzstyle{output}=[fill=green!30,minimum size=0.8cm,inner sep=0cm]
      \tikzstyle{input char}=[text=purple] \tikzstyle{output char}=[text=teal]

      \tikzset{every edge/.append style={font=\small}}

      \node[state, input] (1) {$A$};
      \node[state, right= of 1, output] (2) {$B$};
      \node[state, right= of 2, input] (3) {$C$};
      \node[state, below=1.8cm of 3, output] (4) {$D$};
      \node[state, right= of 4, input] (5) {$E$};
      \node at (3-|5) (6) {Eve wins};
      \node[right= of 5] (7) {Eve loses};
      \node (8) at ($(3)!0.52!(4)$) {\footnotesize\begin{tabular}{c}Infinite: \\ Eve \\ loses \end{tabular}};

      \path[->, >=stealth'] (1) edge node[above,input char] {$\top / \da r_M$} (2);
      \path[->, >=stealth'] (2) edge node[above,output char] {$a,b$} (3);
      \path[->, >=stealth'] (3) edge[bend right=35] node[left,input char] {$r_l < \indata < r_M / \da r_l$} (4);
      \path[->, >=stealth'] (3) edge node[above,input char] {$else$} (6);
      \path[->, >=stealth'] (4) edge[bend right=35] node[right,output char] {$a$} (3);
      \path[->, >=stealth'] (4) edge node[above,output char] {$b$} (5);
      \path[->, >=stealth'] (5) edge node[right,input char] {$else$} (6);
      \path[->, >=stealth'] (5) edge node[above,input char] {$r_l < \indata < r_M$} (7);
    \end{tikzpicture}~~~~
    }
    \caption{Eve wins this game in $\bbN$ but loses in $\bbQ$.}
    \label{fig:UpWeGo}
\end{figure}

\end{example}


\subparagraph{Proof overview} We give intuitions about the main
ingredients to show decidability. The key idea used to solve problems about register automata is
to forget the precise values of input data and registers, and track instead the constraints (sometimes called types) describing the relations between them.
In our example,
all registers start in $0$ so the initial constraint is $r^1_l = r^1_M$,
where $r^{i}$ abstracts the value of register $r$ at step $i$.
Then, if Adam provides a data above the value of $r_l$,
the constraint becomes $r^2_l<r^2_M$ in state $B$.
Otherwise, if Adam had provided a data equal to the value in $r_l$, the constraint would be $r^2_l = r^2_M$.
In this way the constraints evolve during the play, forming an infinite sequence.
Looping in states $C$--$D$ induces the constraint sequence
$\big(r_l^{i} \!\!<\! r_l^{i+1} \!\!<\! r_M^{i} \!=\! r_M^{i+1}\big)_{i>2}$.
It forms an infinite chain $r_l^3 < r_l^4 < ...$ bounded by constant $r^3_M=r^4_M=...$ from above.
In $\bbN$, as it is a well-founded order, it is not possible to assign values to the registers at every step to satisfy all constraints,
so the sequence is not satisfiable.
Before elaborating on how this information can be used to solve Church games,
we describe our results on satisfiability of constraint sequences.
This topic was inspired by the work~\cite{ST11} which studies, among others, the nonemptiness problem of constraint automata,
whose states and transitions are described by constraints.
In particular, they show~\cite[Appendix C]{ST11} that
satisfiability of constraint sequences can be checked by \emph{nondeterministic} $\omega$B-automata~\cite{BC06}.
Nondeterminism however poses a challenge in synthesis,
and it is not known whether games with a winning objective given as a nondeterministic $\omega$B-automaton are decidable.
In contrast, we describe a \emph{deterministic} max-automaton~\cite{B11}
characterising the satisfiable constraint sequences in $\bbN$. As a consequence
of~\cite{DBLP:conf/icalp/Bojanczyk14a}, games over such automata are decidable.
Then we study two kinds of constraint sequences inspired by Church games with register automata.
First, we show that the satisfiable lasso-shaped\footnote{Lasso-shaped words are also called regular words or ultimately periodic words in the literature.} constraint sequences, of the form $uv^\omega$, are recognisable by deterministic \emph{parity} automata.
Second, we show how to assign values to registers on-the-fly in order to satisfy a constraint sequence induced by a play in the Church game.

To solve one-sided Church games with a specification given as a register automaton $S$ for $(\bbN,\leq)$ and $(\bbQ,\leq)$,
we reduce them to certain finite-arena zero-sum games, which we call automata games.
The states and transitions of the game are those of the specification automaton $S$.
The winning condition requires Eve to satisfy the original objective of $S$
only on feasible plays, i.e. those that induce satisfiable constraint sequences.
In our example,
the play $A \cdot B \cdot (C \cdot D)^\omega$ does not satisfy the parity condition, yet it
is won by Eve in the automaton game since it is not satisfiable in $\bbN$,
and therefore there is no corresponding play in the Church game.
We show that if Eve wins the automaton game,
then she wins the Church game,
using a strategy that simulates the register automaton $S$ and simply picks one of its transitions.
It is also sufficient: if Adam wins the automaton game then he wins the Church game.
To prove this, we construct, from a winning strategy of Adam in the automaton game,
a winning strategy of Adam (that manipulates \emph{data}) in the Church game.
This step uses the previously mentioned results on satisfiability of constraint sequences. Over $(\bbN, \leq)$, we cannot solve the automaton game directly, as it is not $\omega$-regular. We instead reduce it to an $\omega$-regular approximation of it which considers \emph{quasi}-feasible sequences, a notion which is more liberal than feasibility but coincides with it on lasso-shaped words.

\subparagraph{Related works} 

This paper is an extended version of the conference
paper~\cite{DBLP:conf/stacs/ExibardFK21}. It follows a line of works
about synthesis from register automata specifications~\cite{ESK14,KMB18,KK19,DBLP:journals/lmcs/ExibardFR21}, which  focused on register
automata over data domains $(\D,=)$ equipped with
\emph{equality} tests only. The synthesis of data systems has also
been investigated in~\cite{FKPS19,KMMMV20}. They do not rely
on register automata and are also limited to equality tests or do not
study data comparison. Thus, systems that output the largest value seen so far,
grant a resource to a process with the lowest \textsc{id},
or raise an alert when a heart sensor reads values forming a dangerous
curve, are out of reach of those synthesis methods. These systems
require $\leq$.

In this paper, we consider specifications given by
\emph{deterministic} register automata. Already in the case of
infinite alphabets $(\D,=)$, dropping the determinism requirement
leads to undecidability: finding a winner of a Church game is
undecidable when specifications are given as nondeterministic or
universal register
automata~\cite{ESK14,DBLP:journals/lmcs/ExibardFR21}. To recover
decidability, in the case of universal register automata, those works
restrict Eve strategies to register transducers with an a priori fixed
number of registers. This problem is called \emph{register-bounded
  synthesis}. Recently in~\cite{DBLP:conf/icalp/ExibardF022},
register-bounded synthesis have been extended to various data domains
such as $(\bbN,\leq)$, $(\mathbb{Z},\leq)$, or $(\Sigma^*,\preceq)$
where $\Sigma$ is an arbitrary finite alphabet and $\preceq$ is the prefix relation. The results
of~\cite{DBLP:conf/icalp/ExibardF022} are
orthogonal to the results of this paper, although they rely on the study of
constraint sequences we conduct here.

The paper~\cite{FK20} studies synthesis from variable automata with arithmetic. Those automata are incomparable with register automata: on the one hand, they allow addition on top of a dense order predicate, but on the other hand they do not allow updating the content of the registers along the run. Note that they do not consider the case of a discrete order.
The paper~\cite{FK17b} studies strategy synthesis but, again, mainly over a dense domain.
A one-sided setting similar to ours was studied in~\cite{DBLP:journals/lmcs/FigueiraMP20} for Church games whose winning condition is given by formulas of the Logic of Repeating Values (a fragment of LTL with the freeze quantifier~\cite{DL09}), but only for $(\D,=)$.
That work was extended to domain $(\bbZ,\leq)$ in~\cite{BP22}.
There,
the authors show that the realisability problem in one-sided setting on $(\bbZ,\leq)$ for Constraint LTL and its prompt variant are 2EXPTIME-complete.
Deterministic register automata are more expressive than Constraint LTL,
so our work subsumes their decidability result,
yet the lower expressivity of Constraint LTL enables simpler arguments.
We note that our proof ideas ---~abstracting data words by finite-alphabet words and utilising regularity of abstracted words~--- are somewhat similar to those in papers on Constraint LTL~\cite{DD07,BP22}.
The work on automata with atoms \cite{KL19} implies our decidability result for $(\bbQ,\leq)$,
even in the two-sided setting,
but not the complexity result, and it does not apply to $(\bbN,\leq)$.
Our setting in $\bbN$ is loosely related to monotonic games~\cite{ABd03}:
they both forbid infinite descending behaviours,
but the direct conversion is unclear.
Games on infinite arenas induced by
pushdown automata~\cite{Wal00,BSW03,DBLP:conf/csl/AbdullaAHMKT14}
or one-counter systems~\cite{DBLP:conf/fossacs/Serre06,DBLP:conf/lics/GollerMT09}
are orthogonal to our games.

\subparagraph{Outline}
In Section~\ref{sec:prelim}, we introduce preliminary notions.
Section~\ref{sec:synt-games} introduces Church synthesis games along with the main tools and results (with proofs postponed).
Section~\ref{sec:Ncase} presents the postponed proofs for Church synthesis,
relying on results about satisfiability of constraint sequences over $(\bbN,\leq)$ described in Section~\ref{sec:constraints}.

\manufixed{R2: - If your space restrictions allow, please expand on the 'simple' and omitted parts of the proofs and provide some details of the required steps for completeness.}{WON'T fix, but we improved the proofs and the paper, as an implication this should make those omitted details clear}

\section{Preliminaries}\label{sec:prelim}

In this paper, $\mathbb{N} = \{0,1,\dots\}$ is the set of natural numbers
(including 0). We assume some knowledge of $\omega$-regular languages and $\omega$-automata, and refer to e.g.~\cite{Cac02b} for an introduction.

\subparagraph{$\omega$-data words} In this paper, an \emph{ordered data domain},
or simply \emph{data domain}, $\D$ is an infinite
countable set of elements called \emph{data}, linearly ordered by some order
denoted $<$.
We consider two data domains, $\bbN$ and $\bbQ$, with their usual
order. An $\omega$-data word over $\D$ is an infinite sequence
$d_0d_1\dots$ of data in $\D$. We denote by $\D^\omega$ the set of
$\omega$-data words. Similarly, we denote by $\D^*$ the set of \emph{finite}
sequences (possibly empty) of elements in $\D$.

\subparagraph{Registers}
Let $R$ be a finite set of elements called \emph{registers},
intended to contain data values, i.e.\ values in $\D$.
A \emph{register valuation} is a mapping $\v : R \to \D$ (also written
$\v \in \D^R$). For any data $\d\in\D$, we write $\d^R$ to denote the
constant valuation $\v_\d(r) = \d$ for all $r\in R$.

A \emph{test} is a maximally consistent set of atoms of the form
$* \bowtie r$ for $r \in R$ and ${\bowtie} \in \{=,<,>\}$.
We may represent tests as conjunctions of atoms instead of sets.
The symbol `$*$' is used as a placeholder for incoming data.
For example, for $R = \{r_1,r_2\}$, the expression $r_1<*$ is not a test because it
is not maximal, but $(r_1<*)\wedge (*<r_2)$ is a
test.  We denote $\Tst_R$
the set of all tests and just $\Tst$ if $R$ is clear from the context.
A register valuation $\v \in \D^R$ and data $\d \in \D$ \emph{satisfy} a test $\tst \in \Tst$,
written $\tup{\v,\d} \models \tst$,
if all atoms of $\tst$ get satisfied when we replace the placeholder $*$ by $\d$
and every register $r \in R$ by $\v(r)$.
An \emph{assignment} is a subset $\asgn\subseteq R$.
Given an assignment $\asgn$, a data $\d\in\D$, and a valuation $\v$,
we define $\update(\v,\d,\asgn)$ to be the valuation $\v'$
s.t.\ $\forall r \in \asgn\: \v'(r) = \d$ and $\forall r \not\in\asgn\: \v'(r) = \v(r)$.

\subparagraph{Register automata}\label{sec:def:automata}
A \emph{specification deterministic register automaton}, or simply \emph{deterministic register automaton} is a tuple
$S = (Q, \qinit, R, \delta, \alpha)$
where
$Q = Q_A \uplus Q_E$ is a set of \emph{states} partitioned into Adam and Eve states, the state $\qinit \in Q_A$ is \emph{initial},
$R$ is a set of \emph{registers},
$\delta = \delta_A\uplus\delta_E$ is a (total and deterministic)
\emph{transition function} where, for $P \in \{A,E\}$, we have, by setting
$\otherPlayer{A} = E$ and $\otherPlayer{E} = A$:
$\delta_P : (Q_P \x \Tst  \to  \Asgn \x Q_{\otherPlayer{P}})$; and
$\alpha : Q \to \{1,...,c\}$ is a \emph{priority function} where $c$ is the \emph{priority index}.

A \emph{configuration} of $A$ is a pair $\tup{q,\v}\in Q\x\D^R$,
describing the state and register content;
the \emph{initial configuration} is $\tup{\qinit,0^R}$.
A \emph{run} of $S$ on a word
$w = \d_0 \d_1 ... \in \D^\omega$
is a sequence of configurations $\rho = (q_0,\v_0) (q_1,\v_1)... \in ((Q_A \times \Val) (Q_E \times \Val))^\omega$
starting in the initial configuration ($(q_0,\v_0) = (\qinit, 0^R)$)
and such that for every $i\geq0$:
by letting $\tst_i$ be a unique test for which $(\v_i,\d_i)\models \tst_i$,
we have $\delta(q_i,\tst_i) = (\asgn_i, q_{i+1})$ for some $\asgn_i$
and
$\v_{i+1} = \update(\v_i, \d_i, \asgn_i)$.
Because the transition function $\delta$ is deterministic and total,
every word induces a unique run in $S$.
The run $\rho$ is \emph{accepting} if
the maximal priority visited infinitely often is even.
A word is \emph{accepted} by $S$ if it induces an accepting run.
The \emph{language} $L(S)$ of $S$ is the set of all words it accepts.

\subparagraph{Interleavings} Specification register automata are meant to recognise
interleavings of inputs (provided by Adam) and output (provided by Eve), hence
the partitioning of states. Often, we need to combine them or conversely tell
them apart. Thus, given two words $u = u_0 u_1 \dots \in \D^\omega$ and $v = v_0
v_1 \dots \in \D^\omega$, we formally define their \emph{interleaving} $u \otimes v = u_0
v_0 u_1 v_1 \dots \in \D^\omega$. We note that given a word $w = w_0 w_1 \dots
\in \D^\omega$, it can be uniquely decomposed into $w = u \otimes v$, where $u =
w_0 w_2 \dots \in \D^\omega$ and $v = w_1 w_3 \dots \in \D^\omega$.

\subparagraph{Games} A \emph{two-player zero-sum game}, or simply a \emph{game}, is a tuple
$G = (\VA,\VE, v_0, E, W)$
where $\VA$ and $\VE$ are disjoint sets of \emph{vertices} controlled by Adam and Eve,
$v_0 \in \VA$ is \emph{initial},
$E \subseteq (\VA\x \VE) \cup (\VE\x \VA)$
is a turn-based \emph{transition relation}, and
$W\subseteq (\VA \cup \VE)^\omega$ is a \emph{winning objective}.
An \emph{Eve strategy} is a mapping
$\lambda_\exists : (\VA\VE)^+ \to \VA$
such that $(v_\exists,\lambda(v_\forall^0v_\exists^0 ... v_\forall^k v_\exists^k))\in E$
for all paths $v_\forall^0v_\exists^0 ... v_\forall^kv_\exists^k$ of $G$
starting in $v_\forall^0 = v_0$ and ending in $v_\exists^k \in \VE$ (where $k \geq 0$). Note that $\lambda_\exists$ only depends on the $\VE$ component, since the $\VA$ part is determined by the $\VE$ part, so we sometimes define it as $\lambda_\exists: \VE^+ \rightarrow \VA$.
Adam strategies are defined similarly, by inverting the roles of $\exists$ and $\forall$. A strategy is \emph{finite-memory} if it can be computed by a finite-state machine, and \emph{positional} if it only depends on the current vertex.
A \emph{play} is a sequence of vertices starting in $v_0$ and satisfying the edge relation $E$.
It is \emph{won} by Eve if it belongs to $W$ (otherwise it is won by Adam).
An infinite play $\pi = v_0v_1\dots$ is \emph{compatible} with an Eve strategy $\lambda$
when
for all $i\geq 0$ s.t.\ $v_i\in \VE$: $v_{i+1} = \lambda(v_0\dots v_i)$.
An Eve strategy is \emph{winning} if all infinite plays compatible with it are winning. A game is \emph{determined} (respectively, \emph{finite-memory determined}, \emph{positionally determined}) if either Adam or Eve has a winning strategy (resp., a finite-memory winning strategy, a positional winning strategy).

A \emph{finite-arena game} is a game whose arena is finite, i.e. where $\VA$ and $\VE$ are finite. Among them, we distinguish \emph{$\omega$-regular games}, where the winning condition is an $\omega$-regular language. In particular, a \emph{parity game} is a game whose winning condition is defined through a \emph{parity function} $\alpha : \VA \uplus \VE \to \{1,...,c\}$, where a play $v_0 v_1 \dots$ is winning for Eve if and only if the maximal priority seen infinitely often is even. It is well-known that $\omega$-regular games are finite-memory determined and reduce to parity games, which are positionally determined and can be solved in $n^c$~\cite{GTW02} (see also~\cite{CJKLS17}), where $n$ is the size of the game and $c$ the priority index.

Note that in register automata, Adam is represented as $A$ and Eve as $E$, while in games he is $\forall$ and she is $\exists$. This is to visually distinguish automata from games.

\section{Church Synthesis Games}
\label{sec:synt-games}

A \emph{Church synthesis game} is given as a tuple $G = (I,O,S)$,
where $I$ is an \emph{input} alphabet, $O$ is an \emph{output} alphabet,
and $S \subseteq (I \cdot O)^\omega$ is a specification. Its semantics is provided by the game $(\{v_0\} \cup O, I, v_0, E, S)$, where $E = ((\{v_0\} \cup O) \times I) \cup (I \times O)$, but we rephrase it to provide a stronger intuition. In particular, it is at first counter-intuitive that Adam owns $O$ vertices, and Eve $I$ vertices; this is because both players choose their move by \emph{targeting} a specific vertex.

Thus, in a Church synthesis game, two players, Adam (the environment, who provides inputs) and Eve (the system,
who controls outputs), interact. Their strategies are respectively represented as
mappings $\lambda_\forall : v_0 \cdot O^* \to I$ (often simply represented as $\lambda_\forall : O^* \to I$ for symmetry) and $\lambda_\exists : I^+ \to O$.
Given $\lambda_\forall$ and $\lambda_\exists$,
the \emph{outcome} $\lambda_\forall \| \lambda_\exists$ is the infinite sequence
$i_0 o_0 i_1 o_1 ...$ such that for all $j\geq0$:
$i_j = \lambda_\forall(o_0 ... o_{j-1})$ and $o_j = \lambda_\exists(i_0 ... i_j)$.
If $\lambda_\forall \| \lambda_\exists \in S$, the outcome is won by Eve, otherwise by Adam.
Eve wins the game if she has a strategy $\lambda_\exists$ such that
for every Adam strategy $\lambda_\forall$,
the outcome $\lambda_\forall \| \lambda_\exists$ is won by Eve.
Solving a synthesis game amounts to finding whether Eve has a winning strategy.
Synthesis games are parameterised by classes of alphabets and specifications.
A game class is \emph{determined} if every game in the class is either won by Eve or by Adam.

The class of synthesis games where $I$ and $O$ are finite and where $S$ is an $\omega$-regular language is known as \emph{Church games};
they are decidable and determined.
They also enjoy the finite-memoriness property:
if Eve wins a game then she can win it with a strategy that is represented
as a finite-state machine~\cite{BL69} (see also
\cite{DBLP:conf/fossacs/Thomas09} for a game-theoretic presentation of those
results).

We study synthesis games where $I = O = \D$ is an ordered data domain
and the specifications are described by deterministic register automata. In the
following, we let $G^{\D}_S = (\D,\D,S)$ be the Church synthesis game with input
and output alphabet $\D$ and specification $S$, and simply write $G_S$ when $\D$ is clear
from the context.

\subsection{Church games on register automata}

We start our study with a negative result, that highlights the difficulty of the
problem: over the data domain $(\bbN, \leq)$, Church games are undecidable.
Indeed, if the two players pick data values, one can simulate a two-counter
machine as follows: one player provides the values of the counters, while the other
checks that no cheating happens on the increments and decrements. This can be done using the
fact that $c' = c+1$ whenever there does not exist any $\d$ such that $c < \d < c'$.
\begin{theorem}
  \label{thm:Church_2sided_undec}
  Deciding the existence of a winning strategy for Eve in a
  Church game whose specification is a deterministic register automaton over $(\bbN,\leq)$ is undecidable.
\end{theorem}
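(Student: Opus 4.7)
The plan is to reduce from the halting problem for deterministic two-counter Minsky machines, exploiting the observation flagged in the paper's preamble: in $\bbN$, one has $c' = c+1$ iff no datum lies strictly between $c$ and $c'$. Given a two-counter machine $M$, I construct a deterministic register automaton $S$ over $(\bbN,\leq)$ such that Eve wins the Church game with specification $L(S)$ iff $M$ does not halt. Since non-halting of Minsky machines is undecidable, the theorem follows.

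The DRA $S$ uses registers $r_0, r_1, r_2, r_t$: $r_0$ is never reassigned, so it keeps its initial value $0$; $r_1, r_2$ encode the current counter contents; and $r_t$ is a scratch register. The finite state records (i) the current instruction of $M$ and (ii) two bits indicating whether each counter is currently zero, updated as the simulation progresses. To each instruction I associate a gadget of one Adam transition followed by one Eve transition. For an increment of $c_i$: Adam plays a datum $\d_A$ purporting to be the new value; the test $\d_A > r_i$ is required (otherwise the run enters an Eve-winning sink with only even priorities) and $\d_A$ is stored in $r_t$. Eve then plays $\d_E$; the ``accusation'' test $r_i < \d_E < r_t$ routes the run to the Eve-winning sink, while any other value proceeds to the next instruction's state after committing $r_i := r_t$. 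The decrement gadget is dual, with test $\d_A < r_i$ required and accusation test $r_t < \d_E < r_i$. The zero-test instruction branches on the zero-bit carried in the state; since the DRA can only advance on data, both players are forced to emit a dummy datum each under the $\top$-test. Finally, the halting instruction of $M$ leads to a state $q_h$ of odd priority while all other states receive even priority.

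Correctness is then routine. If $M$ does not halt, Eve plays the strategy ``accuse whenever the accusation test is satisfied''. Against honest Adam play the simulation loops forever over even priorities without ever reaching $q_h$; against any cheat on an increment (resp.\ decrement), the leftover gap contains an integer that Eve can name, sending the run to the Eve-winning sink. Conversely, if $M$ halts, Adam plays the true counter trajectory, leaving Eve no accusation to make, and the run eventually reaches $q_h$ and fails the parity condition. The main subtlety I anticipate is the zero-test bookkeeping: making sure the two dummy data moves neither corrupt the registers $r_1, r_2, r_t$ (achieved by leaving their assignment empty) nor accidentally trigger accusation tests (achieved by placing accusation tests only inside the increment/decrement gadgets, gated by the position within the instruction). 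This is a matter of care rather than genuine difficulty.
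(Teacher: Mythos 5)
Your overall route is the same as the paper's: reduce from two-counter machines, encode the counters in registers, and use the non-density of $(\bbN,\leq)$ so that a cheated increment/decrement leaves an integer gap that the opposing player can name. The only structural difference is that you swap the roles: in the paper Eve simulates the machine and Adam polices her (so Eve wins iff $M$ halts, via a reachability condition on $\goodState$), whereas you have Adam simulate and Eve police (so Eve wins iff $M$ does not halt). Both directions give undecidability, and your state-carried zero-bits are a workable substitute for the paper's replay-and-compare-with-$z$ gadget, since the test on Adam's datum at a decrement already records whether it equals the never-assigned register $r_0$.

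There is, however, a concrete gap in your increment/decrement gadget: the phrase ``proceeds to the next instruction's state after committing $r_i := r_t$.'' In the register-automaton model of this paper an assignment can only store the \emph{current input datum} into registers; there is no register-to-register copy. So if Eve's non-accusation datum is arbitrary, the automaton cannot set $r_i$ to the value held in $r_t$. This is not merely cosmetic: if you drop the commit (or store Eve's arbitrary datum into $r_i$), the register contents drift away from the true counter values, and then even an \emph{honest} Adam playing the true trajectory of a halting machine will eventually propose a datum that opens a spurious gap $r_i < {} \cdot {} < r_t$ relative to the stale register $r_i$; Eve accuses and wins, breaking the direction ``$M$ halts $\Rightarrow$ Adam wins.'' The fix is exactly the paper's replay trick, transposed to Eve: the only non-losing, non-accusing Eve transition must carry the test $* = r_t$ with assignment $\downarrow r_i$, and every other Eve datum (e.g.\ $* \leq r_i$ or $* > r_t$) must be routed to an Adam-winning sink so that Eve is forced either to accuse or to faithfully commit. (Alternatively, one can avoid committing altogether by permuting the roles of $r_i$ and $r_t$ in the finite control after each update.) With that repair, and the observation that a decrement of a zero counter forces Adam into the Eve-winning sink because no datum below $0$ exists in $\bbN$ — consistent with blocked runs not counting as halting — your argument goes through.
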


\begin{proof}[Proof idea]
  We reduce from the halting problem of 2-counter machines, which is
  undecidable~\cite{Min67}. We define a specification with 4 registers $r_1,r_2,
  z$ and $t$. Registers $r_1$ and $r_2$ each store the value of one counter; $z$ stores
  $0$ to conduct zero tests and $t$ is used as a buffer. We now describe how to
  increment $c_1$ (see \autoref{fig:gadget_incc1}); the cases of $c_2$ and of
  decrementing are similar. Eve suggests a value $d > r_1$, which
  is stored in $t$. Then, Adam checks that the increment was done
  correctly: Eve cheated if and only if Adam can provide a
  data $d'$ such that $r_1 < d' < d$. If he cannot, $d$ is stored in $r_1$, thus
  updating the value of the counter. The acceptance condition is then a
  reachability one, asking that a halting instruction is eventually met. Now, if
  $M$ halts, then its run is finite and the values of the counters are bounded
  by some $B$. As a consequence, there exists a strategy of Eve
  which simulates the run by providing the values of the counters along the run.
  Conversely, if $M$ does not halt, then no halting instruction is reachable by
  simulating $M$ correctly, and Adam is able to check that Eve
  does not cheat during its simulation.
\end{proof}
\begin{figure}[h]
\centering
\begin{subfigure}[t]{.47\textwidth}
  \centering
  \resizebox{\textwidth}{!}{
    \begin{tikzpicture}[->, >=stealth', auto, node distance=2.5cm]
      \tikzstyle{every state}=[text=black,font=\scriptsize]
      \tikzstyle{input}=[rectangle,fill=red!30,minimum size=0.8cm,inner sep=0cm]
      \tikzstyle{output}=[fill=green!30,minimum size=0.9cm,inner sep=0cm]
      \tikzstyle{input char}=[text=purple] \tikzstyle{output char}=[text=teal]

      \tikzset{every edge/.append style={font=\small}}

      \node[state, output] (ko) {$k$};
      \node[state, right= of ko, input] (ki) {};
      \node[state,right= of ki, output] (k'o) {$k+1$};
      \node[above right= 0.75cm and 2.5cm of ki] (l) {$\badState$};
      \node[below right= 0.75cm and 2.5cm of ki] (w) {$\goodState$};

      \path (ko) edge node[above,output char] {$* > r_1, \downarrow {} t$} (ki);
      \path (ki) edge node[above,input char,sloped] {$r_1 < * < t$} (l);
      \path (ki) edge node[above,input char] {$*=t, \downarrow{} r_1$} (k'o);
      \path (ki) edge node[above,input char,sloped] {$* \leq r_1 \vee * > t$} (w);
    \end{tikzpicture}
    }
    \caption{Gadget for instruction $\inc_1$.}
    \label{fig:gadget_incc1}
\end{subfigure} \hfill%
\begin{subfigure}[t]{.45\textwidth}
  \centering
  \resizebox{\textwidth}{!}{
    \begin{tikzpicture}[->, >=stealth', auto, node distance=2.5cm]
      \tikzstyle{every state}=[text=black,font=\scriptsize]
      \tikzstyle{input}=[rectangle,fill=red!30,minimum size=0.8cm,inner sep=0cm]
      \tikzstyle{output}=[fill=green!30,minimum size=0.9cm,inner sep=0cm]
      \tikzstyle{input char}=[text=purple] \tikzstyle{output char}=[text=teal]

      \tikzset{every edge/.append style={font=\small}} 

      \node[state, output] (ko) {$k$};
      \node[state, above right=0.5cm and 2.5cm of ko, input] (kiz) {};
      \node[state, below right=0.5cm and 2.5cm of ko, input] (kin) {};
      \node[state, right= of kiz, output] (k') {$k'$};
      \node[state, right= of kin, output] (k'') {$k''$};

      \path (ko) edge node[above,sloped,output char] {$* = r_1 \wedge * = z$} (kiz);
      \path (ko) edge node[above,sloped,output char] {$* = r_1 \wedge * > z$} (kin);
      \path (kiz) edge node[above,input char] {$\top$} (k');
      \path (kin) edge node[above,input char] {$\top$} (k'');
    \end{tikzpicture}
    }
    \caption{Gadget for instruction $\ifz_1(k',k'')$.}
    \label{fig:gadget_ifzc1}
\end{subfigure}
\caption{Gadgets for 2CM instructions. The instruction number $k$ is stored in the state of the automaton.
        The state $\badState$ (resp.\ $\goodState$) is a rejecting sink (resp.\ accepting sink). Non-depicted transitions go to the sink state that is losing for the player that takes them.}
\end{figure}

\begin{proof}
  We reduce from the halting problem of deterministic 2-counter machines, which
  is undecidable~\cite{Min67}. Among multiple formalisations of counter
  machines, we pick the following one: a 2-counter machine has two counters
  which contain integers, initially valued 0. It is composed of a finite set of
  instructions $M = (I_1, \dots, I_m)$, each instruction being of the form
  $\inc_j,\dec_j,\ifz_j(k',k'')$ for $j = 1,2$ and $k',k'' \in \{1, \dots, m\}$,
  or $\halt$. The semantics are defined as follows: a configuration of $M$ is a
  triple $(k,c_1,c_2)$, where $1 \leq k \leq m$ and $c_1, c_2 \in \bbN$. The
  transition relation (which is actually a function, as $M$ is deterministic) is
  then, from a configuration $(k,c_1,c_2)$:
  \begin{itemize}
  \item If $I_k = \inc_1$, then the machine increments $c_1$ and jumps to the
    next instruction $I_{k+1}$: $(k,c_1,c_2) \rightarrow (k+1,c_1+1,c_2)$.
    Similarly for $\inc_2$.
  \item If $I_k = \dec_1$ and $c_1 > 0$, then $(k,c_1,c_2) \rightarrow
    (k+1,c_1-1,c_2)$. If $c_1 = 0$, then the computation fails and there is no
    successor configuration. Similarly for $\dec_2$.
  \item If $I_k = \ifz_1(k',k'')$, then $M$ jumps to $k'$ or $k''$ according to
    a zero-test on $c_1$: if $c_1 = 0$, then $(k,c_1,c_2) \rightarrow
    (k',c_1,c_2)$, otherwise $(k,c_1,c_2) \rightarrow
    (k'',c_1,c_2)$. Similarly for $\ifz_2$.
  \end{itemize}
  A run of the machine is then a finite or infinite sequence of successive
  configurations, starting at $(1,0,0)$. We say that $M$ \emph{halts} whenever
  it admits a finite run which ends in a configuration $(k,c_1,c_2)$ such that
  $I_k = \halt$.

  Let $M = (I_1, \dots, I_m)$ be a 2-counter machine.
  We associate to it the following specification deterministic register automaton:
  $S$ has states $Q = Q_A \uplus Q_E$,
  where, for $P \in \{A,E\}$,
  $Q_P = \big(\{0, \dots, m+1\} ~\cup~ (\{0, \dots, m+1\} {\times} \{y,n\}) ~\cup~ \{\badState, \goodState\}\big) \times \{P\}$.
  The letters $y$ and $n$ are used to remember whether an $\ifz$ test evaluated to true or false;
  they are only used by $A$, but we included them in $Q_E$ for symmetry.
  The initial state of $S$ is $(0,A)$.
  The automaton has four registers $r_1, r_2, t, z$.
  The acceptance is defined by the reachability condition $F = \{(\goodState,A)\}$,
  while $\badState$ signals rejecting sink states.
  The transitions of $S$ are defined by the following procedure:
  \begin{itemize}
  \item Initially, there is a transition $(0,A) \xrightarrow{\top} (1,E)$ so
    that the implementation can start the simulation.
  \item Then, for each $k \in \{1, \dots, m\}$:
    \begin{itemize}
    \item If $I_k = \inc_j$ for $j = 1,2$, then we add to the transitions of $S$ the gadget from
      \autoref{fig:gadget_incc1}, i.e. output transition $(k,E) \xrightarrow{* > r_1,
      \downarrow{} t} (k,A)$ and input transitions $(k,A) \xrightarrow{r_1 < *
        < t} (\badState,E)$, $(k,A) \xrightarrow{* = t, \downarrow{} r_1} (k+1,E)$
        and $(k,A) \xrightarrow{* \leq r_1} (\goodState,E)$, $(k,A) \xrightarrow{* >
      t} (\goodState,E)$.
  \item The case $I_k = \dec_j$ for $j=1,2$ is similar: we add output transition
    $(k,E) \xrightarrow{* < r_1, \downarrow{} t} (k,A)$ and input transitions
    $(k,A) \xrightarrow{t < * < r_1} (\badState,E)$, $(k,A) \xrightarrow{* = t,
      \downarrow{} r_1} (k+1,E)$ and $(k,A) \xrightarrow{* \geq r_1}
    (\goodState,E)$, $(k,A) \xrightarrow{* < t} (\goodState,E)$. Note that in our
    definition, if $c_j = 0$, then the instruction $\dec_j$ should be blocking,
    i.e. the computation should fail, which is consistent with the fact that in
    that case, the implementation cannot provide $d < r_1$.
    \item If $I_k = \ifz_j(k',k'')$, then we add the gadget of
      \autoref{fig:gadget_ifzc1}, i.e. output transitions $(k,E) \xrightarrow{*
        = r_1 \wedge * = z} (k,y,A)$, $(k,E) \xrightarrow{*
        = r_1 \wedge * > z} (k,n,A)$ and input transitions $(k,y,A)
      \xrightarrow{\top} (k',E)$ and $(k,n,A)
      \xrightarrow{\top} (k'',E)$.
      \item If $I_k = \halt$, we add a transition $(k,E) \xrightarrow{\top} (\goodState,A)$.
    \end{itemize}
  \item Finally, $(\goodState, P) \xrightarrow{\top} (\goodState,\otherPlayer{P})$ and
    $(\badState, P) \xrightarrow{\top} (\badState,\otherPlayer{P})$ for $P \in
    \{A,E\}$, so that both $\goodState$ and $\badState$ are sink states alternating between the players.
    In the following,
    we sometimes write $\goodState$ for $(\goodState, P)$ and $\badState$ for $(\badState, P)$,
    since the owner of the state does not matter.
  \end{itemize}
  Now, assume that $M$ admits an accepting run $\rho = (k_1,c^1_1, c^1_2) \rightarrow
  \dots \rightarrow (k_n,c^n_1, c^n_2)$, where $n \in \bbN$, $k_1 = 1$, $c^1_1 =
  c^1_2 = 0$ and $I_{k_n} = \halt$. The values of the counters are bounded
  by some $B \leq n$.
  Then, let $\lambda^\rho$ be the strategy of Eve which ignores the input provided
  by Adam and plays the output $w_\rho = c_0^{j_0} \dots
  c_{n-1}^{j_{n-1}} 0^\omega$, where for $1 \leq l < n$, $j_l$ is the index of
  the counter modified or 
  tested at step $l$ (i.e. $j_l = 1,2$ is such that $I_{k_l} = \inc_{j_l},
  \dec_{j_l}$ of $\ifz_{j_l}(k',k'')$). Formally, for all $u \in \bbN^+$ of length
  $l \geq 0$, we let $\lambda^\rho(u) = c_l^{j_l}$ if $l \leq n-1$ and
  $\lambda^\rho(u) = 0$ otherwise.

  Let us show that $\lambda^\rho$ is a
  winning strategy for Eve. Let $u \in \bbN^\omega$ be an input word provided by
  Adam. We show by induction
  on $l$ that in $S$ the partial run over $(u \otimes w)[{:}2l+1]$ is either in state
  $\goodState$ or $S$ is in configuration $((k_l,E),\tau_l)$, where $\tau_l(r_1) =
  c_l^1$ and $\tau_l(r_2) = c_l^2$.

  Initially, $S$ is in configuration $((0,A),\tau^0_R)$. Then, whatever Adam
  plays, it transitions to $((1,E), \tau^0_R)$, so the invariant holds.
  Now, assume it holds up to step $l$. If $S$ is in $(\goodState,E)$, the only
  available transition is $(\goodState,E) \xrightarrow{\top} (\goodState,A)$,
  and then $(\goodState,A) \xrightarrow{\top} (\goodState,E)$, so the invariant
  holds at step $l+2$ ($\goodState$ is a sink state). Otherwise, necessarily $l
  < n$, $S$ is in configuration $((k_l,E),\tau_l)$ and there are four cases:
  \begin{itemize}
  \item $I_{k_l} = \inc_j$. By definition, $j = j_l$. We treat the case $j = 1$, the
    other case is similar. Then, Eve plays $c_l^1 = c_{l-1}^1 + 1$, which is
    such that $c_l^1 > \tau_l(r_1)$.
    Then, there does not exist $d$ such that $\tau_l(r_1) < d < \tau_l(t)$ since
    $\tau_l(r_1) = c_{l-1}^1$ and $\tau_l(t) = c_{l-1}^1 + 1$, so the play cannot 
    transition to $(\badState,E)$. Now, either Adam plays $u_{l+1} = \tau_l(t)
    = c_{l-1}^1 + 1$, in which case $S$ evolves to configuration $((k_{l+1},E),
    c^1_{l+1},c^2_{l+1})$, and the invariant holds. Otherwise, $u_{l+1} \neq
    \tau_l(t)$ and $S$ goes to $(\goodState,E)$ and the invariant holds as well.
  \item The case of $I_{k_l} = \dec_j$ is similar. Let us just mention that
    the computation does not block at this step, otherwise $\rho$ is not a run
    of $M$, so the transition $d < r_j$ can indeed be taken by Eve.
  \item $I_{k_l} = \ifz_j(k',k'')$. Again, $j = j_l$, and we treat the case $j=1$.
    Eve plays $c_l^1$; there are two cases. If $c_l^1 = 0$, the
    transition $* = r_1 \wedge * = z$ is taken in $S$, since at every step, $\tau_l(z) =
    0$ (this register is never modified). If $c_l^1 \neq 0$, then the transition $*
    = r_1 \wedge * > z$ is taken. In both cases, whatever Adam plays, $S$ then evolves
    to $((k_{l+1},E),\tau_{l+1})$ (where $\tau_{l+1} = \tau_l$) and the invariant
    holds.
  \item Finally, if $I_{k_l} = \halt$, then whatever Eve plays, $S$ transitions
    to $(\goodState,A)$, and whatever Adam plays, the automaton transitions to $(\goodState,E)$.
  \end{itemize}
  As a consequence, $\goodState$ is eventually reached whatever the input, which
  means that for all $u \in \bbN^\omega, u \otimes I(u) \in S$, i.e. $I$ is indeed
  an implementation of $S$.

  Conversely, assume that Eve has a winning strategy $\lambda_\exists$ in $G_S$. Let $\rho$ be the
  maximal run of $M$ (i.e. either $\rho$ ends in a configuration with no
  successor, or it is infinite). It is unique since $M$ is deterministic. Let $n
  = \length{\rho}$, with the convention that $n = \infty$ if $\rho$ is infinite.
  Let us build by induction a play of a strategy\footnote{We only construct the given play, since the rest of the strategy does not matter.} of Adam $\lambda_\forall$ such that for all $l < n, (\lambda_\forall \| \lambda_\exists)[{:}2l] =
  c_l^{j_l}$. and the configuration reached by $S$ over $(\lambda_\forall \otimes \lambda_\exists)[{:}2l]$ is $((k_l,E),\tau_l)$.
  Initially, let $u_0 = 0$. As the initial test is $\top$, $S$ anyway evolves
  to state $(1,E)$, with $\tau(r_1) = \tau(r_2) = 0$.

  Now, assume we built such input $u$ up to $l$. There are again four cases:
  \begin{itemize}
  \item $I_{k_l} = \inc_j$. Then, Eve provides some output data $d_E >
    \tau_l(r_j)$. Assume by contradiction that $d_E > \tau_l(r_j) + 1$. Then,
    $\lambda_\exists$ is not winning because if Adam plays $d_A = \tau_l(r_j) + 1$,
    $S$ goes to state $(\badState,E)$, which is a sink rejecting state, so the
    play is losing irrelevant of what both players play after this move. So, necessarily,
    $d_E = \tau_l(r_j) + 1 = c_l^{j_l}$, and $S$ evolves to configuration
    $(k_{l+1}, \tau_{l+1})$.
  \item The case $I_{k_l} = \dec_j$ is similar. Necessarily, $c_j^l > 0$,
    otherwise Eve cannot provide any output data and the play is losing for Eve,
    which contradicts the fact that $\lambda_\exists$ is winning. Thus, the
    computation does not block here.
  \item $I_{k_l} = \ifz_j(k',k'')$. The output transitions of the gadget
    constrain Eve to output $d_E = \tau_l(r_j) = c_l^{j_l}$, and irrelevant of
    what Adam plays $S$ then evolves to
    configuration $((k_{l+1},E), \tau_{l+1})$.
  \item $I_{k_l} = \halt$. Then, it means that $n < \infty$ and $l=n$, so the
    invariant vacuously holds.
  \end{itemize}
  Now, $\rho$ cannot be infinite, otherwise $\lambda_\forall \| \lambda_\exists$ is not
  accepted by $S$ because $\goodState$ is never reached and Eve would not win. It moreover cannot block on some
  $\dec_j$ instruction, as demonstrated in the induction. Thus, a $\halt$ instruction is eventually reached, which
  means that $\rho$ is a halting run of $M$: $M$ halts.
\end{proof}

\subsection{Church games on one-sided register automata}

In light of this
undecidability result, we consider one-sided synthesis games, where Adam
provides data but Eve reacts with labels from a \emph{finite} alphabet (a
similar restriction was studied in~\cite{DBLP:journals/lmcs/FigueiraMP20} for
domain $(\D,=)$). Specifications are now given as a language $S \subseteq
(\D\cdot\Sigma)^\omega$, recognised by a one-sided deterministic register automaton.

\begin{definition}
A \emph{one-sided deterministic register automaton}, or simply \emph{one-sided register automaton} $S = (\Sigma, Q, \qinit, R, \delta, \alpha)$ is a deterministic register automaton that additionally has a finite alphabet $\Sigma$ of Eve \emph{labels}. Its states are again partitioned into Adam and Eve states $Q = Q_A \uplus Q_E$, and it has an \emph{initial state} $\qinit \in Q_A$. Its \emph{transition function} $\delta = \delta_A \uplus \delta_E$ is again total, but now has $\delta_E : Q_E \x \Sigma  \to  Q_A$. The rest is defined as for deterministic register automata: $\delta_A : Q_A \x \Tst \to \Asgn \x Q_E$; $R$ is a set of \emph{registers}, and finally $\alpha : Q \to \{1,...,c\}$ is a \emph{priority function} where $c$ is the \emph{priority index}.

The notions of configurations and runs are defined analogously, except for the asymmetry between input and output: a \emph{configuration} of $A$ is a pair $\tup{q,\v}\in Q\x\D^R$,
describing the state and register content;
the \emph{initial configuration} is $\tup{\qinit,0^R}$.
A \emph{run} of $S$ on a word
$w = \d_0 a_0 \d_1 a_1 ... \in (\D \Sigma)^\omega$ (note the interleaving of $\D$ and $\Sigma$)
is a sequence of configurations $\rho = (q_0,\v_0) (p_0, \v_1) (q_1,\v_1) (p_0, \v_2)... \in ((Q_A \times \Val) (Q_E \times \Val))^\omega$
starting in the initial configuration (i.e. $(q_0,v_0) = (\qinit,0^R)$)
and such that for every $i\geq0$:
\begin{itemize}
\item (reading an input data value) by letting $\tst_i$ be a unique test for which $(\v_i,\d_i)\models \tst_i$, we have $\delta(q_i,\tst_i) = (\asgn_i, p_i)$ for some $\asgn_i$ and $\v_{i+1} = \update(\v_i, \d_i, \asgn_i)$, as for deterministic register automata;
\item (reading an output letter from $\Sigma$) $\delta(p_i, a_i) = q_{i+1}$, as for finite-state automata.
\end{itemize}
Again, because the transition function $\delta$ is deterministic and total,
every word induces a unique run in $S$.
The run $\rho$ is \emph{accepting} if
the maximal priority visited infinitely often is even.
A word is \emph{accepted} by $S$ if it induces an accepting run.
The \emph{language} $L(S)$ of $S$ is the set of all words it accepts.

Figure~\ref{fig:UpWeGo} shows an example of a one-sided automaton.
For instance, it rejects the words $3a1b2(\Sigma\D)^\omega$ and accepts the
words $3a1a2b(\D\Sigma)^\omega$.
\end{definition}

The rest of this paper is dedicated to showing that Church games whose specification are defined by one-sided register automata over $(\bbQ, \leq)$ or $(\bbN, \leq)$ are decidable in exponential time, and that those games are determined. Formally,
\begin{theorem}\label{thm:synt-games}
  Let $S = (\Sigma, Q, \qinit, R, \delta, \alpha)$ be a one-sided register automaton over $(\bbN,\leq)$ or $(\bbQ,\leq)$.
  \lo[1.~]
  \- The problem of determining the winner of the Church synthesis game $G = (\D,\D,S)$ is decidable in time polynomial in $\size{Q}$ and exponential in $c$ and $\size{R}$. 
  \- $G_S$ is determined, i.e. either Eve or Adam has a winning strategy in $G_S$.
  \ol
\end{theorem}
The above is a wrapper theorem, that aggregates Theorems~\ref{thm:synt-games-Q} for $(\bbQ, \leq)$ and~\ref{thm:synt-games-N} for $(\bbN, \leq)$. We defer the proof to Section~\ref{sec:Ncase}. The result for $(\bbQ, \leq)$ can be derived from~\cite{DD07} or~\cite[Section~7]{KL19}, but we include it for pedagogical reasons, as it allows us to introduce the main tools in a simple setting and to highlight the difficulties that creep up when we shift to $(\bbN, \leq)$.

In the case of a finite alphabet, the game-theoretic approach to solving Church games whose specification is given by a deterministic finite-state automaton consists in playing \emph{on} the automaton, in the following sense: the arena consists of the automaton, and Adam and Eve alternately choose an input (respectively, output) letter, or equivalently (since the automaton is deterministic) an input (resp., output) transition of the automaton. Then, Eve wins whenever the word they jointly produced is accepted by the automaton.

Here, we follow the same approach, with the additional difficulty that the players manipulate data values from an infinite alphabet. Thus, it is not immediate to relate the data values they choose with the corresponding transitions of the automaton. To that end, we study the link between the automaton game (where players pick transitions in the automaton) and the corresponding Church game. This is done through the key notion of \emph{feasible} action words: a sequence of transition labels is feasible whenever it labels a run over some data word. Adam is then asked to provide feasible action words, otherwise he loses. To show that the automaton game is equivalent with the Church game, it remains to show that a strategy of Adam in the automaton game can be translated to a strategy in the Church game. The key ingredient is to be able to instantiate a given action by a data value on-the-fly, while the play unfolds.

Over $(\bbQ, \leq)$, as we demonstrate, the set of feasible action words is $\omega$-regular, so the automaton game is $\omega$-regular as well. Moreover, from a given configuration, one can locally determine whether an action can be instantiated with a data value, and pick it accordingly, which yields the sought strategy translation. Thus, both games are equivalent, and we get decidability since $\omega$-regular games are decidable. The case of $(\bbN, \leq)$ is much more involved and requires further developments, so we start the presentation with $(\bbQ, \leq)$ to sharpen our tools.

\subsection{The automaton game}
\label{sec:automaton_game}

For the rest of this section, fix a one-sided
register automaton $S = (\Sigma,Q,\qinit,R,\delta,\alpha)$ over an ordered data domain $\D$ (it can be either $(\bbQ,\leq)$ or $(\bbN,\leq)$).

Before introducing the game itself, we define the main technical notion, which relates the syntax and semantics of register automata.
\begin{definition}
  \label{def:action_word}
An \emph{action word} is a sequence
$(\tst_0,\asgn_0)(\tst_1,\asgn_1)...$ from $(\Tst\x\Asgn)^{*,\omega}$.
It is \emph{$\D$-feasible} (or simply \emph{feasible} when $\D$ is clear from the context)
if
there exists a sequence $\v_0\d_0\v_1\d_1\dots$ of register valuations $\v_i$ and data $\d_i$ over $\D$
such that $\v_0 = 0^R$
and for all $i$:
$\v_{i+1} = \update(\v_i,\d_i,\asgn_i)$ and $(\v_i,\d_i) \models \tst_i$.

We denote by $\Feasible_\D(R)$ the set of action words over $R$ feasible in $\D$.
\end{definition}

With the Church game $(\D,\D,S)$,
we associate the following \emph{automaton game}, which is a finite-arena game $G_S^f = (\VA,\VE,v_0,E, W_S^f)$.
Essentially, it memorises the transitions taken by the automaton $S$ during the play of Adam and Eve.
It has
$\VA = \{\qinit\} \cup (\Sigma \x Q_A)$,
$\VE = \Tst\x\Asgn \x Q_E$,
$v_0 = \qinit$,
$E = E_0 \cup E_\forall \cup E_\exists$
where:
\li
\-
$E_0 = \big\{\big(v_0, (\tst,\asgn,u_0)\big) \mid \delta(v_0,\tst)=(\asgn,u_0)\big\}$,
\-
$E_\forall = \big\{ \big((\sigma,v),(\tst,\asgn,u)\big)\mid \delta(v,\tst)=(\asgn,u) \big\}$, and
\-
$E_\exists = \big\{ \big((\tst,\asgn,u),(\sigma,v)\big)\mid \delta(u,\sigma)=v \big\}$.
\il

We let:
$$
W_S^f = \left\{ v_0(\tst_0,\asgn_0,u_0)(\sigma_0,v_1) \ldots\;\middle|\;
  \begin{array}{l}
               (\tst_0\asgn_0) \ldots \in \Feasible_\D(R) \\
               \Impl
    v_0u_0v_1u_1\dots\models \alpha
    \end{array}\right\}
$$%
The strategies of Adam and Eve in the automaton game are of the form
$\lambda_\forall^f : \VA(\VE\VA)^* \to \VE$ and
$\lambda_\exists^f : (\VA\VE)^+ \to \VA$.
Since the automaton $S$ is deterministic, they can equivalently be expressed as
$\lambda_\forall^f : \Sigma^* \to \Tst$ and $\lambda_\exists^f : \Tst^+ \to \Sigma$.

Let us show that $G_S^f$ is a sound abstraction of $G_S$, in the sense that a winning strategy of Eve in $G_S^f$ can be translated to a winning strategy of Eve in $G_S$, for both $(\bbQ, \leq)$ and $(\bbN, \leq)$:
\begin{proposition}
  \label{prop:Gf_GS_sound}
  Let $S$ be a deterministic register automaton. If Eve has a winning strategy in $G_S^f$, then she has a winning strategy in the Church game $G_S$.
\end{proposition}
\begin{proof} The main idea of the proof is that is $G_S$, Eve has more information than in $G_S^f$, since she knows what data values Adam played, while in $G_S^f$ she can only access the corresponding tests.

  Formally, let $\lambda^f_\exists : (\VA\VE)^+ \to \VA$ be a winning Eve strategy in $G_S^f$.
  We construct a winning Eve strategy $\lambda_\exists : \Tst^+ \to \Sigma$ in $G_S$ as follows\footnote{%
    What we really need is a winning Eve strategy of the form $\lambda^\D_\exists : \D^+ \to \Sigma$.
    The strategy $\lambda_\exists : \Tst^+ \to \Sigma$ that we construct encodes $\lambda_\exists^\D$ as follows:
    it has the same set $R$ of registers as the automaton $G_S$,
    and performs the same assignment actions as the automaton.
    Then, on seeing a new data value, the strategy compares it with the register values,
    which induces a test, and passes this test to $\lambda_\exists$.%
  }.
  Fix an arbitrary sequence $\tst_0...\tst_k$;
  we define $\lambda_\exists(\tst_0...\tst_k)$.
  First, for all $0 \leq i \leq k-1$,
  we inductively define $v_0,u_0,v_1,u_1,\dots,v_k\in (Q_A \cup Q_E)$, $\asgn_0,...,\asgn_k$, and $\sigma_1,\dots,\sigma_{k} \in \Sigma$:
  \li
  \- The state $v_0 = \qinit$ is the initial state of $S$.
  \- For all $0\leq i\leq k$,
     define
     $u_i \in Q_E$ and $\asgn_i$ to be such that $(\asgn_i,u_i) = \delta(v_i,\tst_i)$,
     $\sigma_{i+1} = \lambda^f_\exists\big(v_0(\tst_0,\asgn_0,u_0)(\sigma_1,v_1) \dots (\tst_i,\asgn_i,u_i)\big)$, and
     $v_{i+1} = \delta(u_i,\sigma_i)$.
  \il
  We then set $\lambda_\exists(\tst_0...\tst_k) = \sigma_{k+1}$.
  We now show that the constructed Eve strategy $\lambda_\exists$ is winning in $G_S$.
  Consider an arbitrary Adam data strategy $\lambda^\D_\forall$,
  and let $(v_0,\v_0)(u_0,\v_1)(v_1,\v_1)(u_1,\v_2)...$ be an infinite run in $G_S$ on reading the outcome $\lambda^\D_\forall \| \lambda_\exists$;
  it is enough to show that $v_0 u_0 v_1 u_1 ...$ satisfies the parity condition.
  Let $\d_0 \d_1...$ be the sequence of data produced by Adam during the play,
  let $\sigma_0 \sigma_1 ...$ be the labels produced by Eve strategy $\lambda_\exists$, and
  let $\overline{a} = (\tst_0,\asgn_0)(\tst_1,\asgn_1)...$ be the tests and assignments performed by the automaton during the run.
  Then, the sequence
  $v_0 (\tst_0,\asgn_0,u_0) (\sigma_0,v_1) (\tst_1,\asgn_1,u_1) ...$
  constitutes a play in $G_S^f$,
  which is compatible with $\lambda^f_\exists$.
  Moreover, as witnessed by $\v_0\d_0\v_1\d_1...$, the action word $\overline{a}$ is feasible.
  Therefore, since $\lambda^f_\exists$ is winning,
  the sequence $v_0 u_0 v_1 u_1 ...$ satisfies the parity condition.
\end{proof}

The converse direction of the above proposition is in general harder, as it amounts to showing that the information provided by tests is enough. For the case of $(\bbQ, \leq)$, the density of the domain allows to instantiate tests on-the-fly, in a way that does not jeopardise the feasibility of the overall sequence (Section~\ref{sec:synt-games-Q}). The case of $(\bbN,\leq)$ is much harder, and is the subject of most of Section~\ref{sec:Ncase}.

\subsection{Application to transducer synthesis}
\label{sec:transducer_synthesis}

The Church synthesis game
models the reactive synthesis problem:
$S$ is a specification, and a winning strategy in $G$ corresponds to a
reactive program which implements $S$, i.e.\ whose set of behaviours abides by $S$.

In the finite alphabet case, Church synthesis games are $\omega$-regular. Since those games are finite-memory determined, it means that if a specification admits an implementation, then it admits a finite-state one~\cite{BL69}, that can be modelled as a finite-state transducer (i.e., a Mealy machine). In this section, we study at which conditions we can get an analogue of this result for specifications defined by input-driven register automata~\cite{DBLP:journals/lmcs/ExibardFR21}.
Those specifications consist in two-sided automata where the output data values are restricted to be the content of some
register (in other words, the implementation is not allowed to generate data).
Input-driven automata can be simulated by one-sided automata, in that
output registers can be seen as finite labels.
Correspondingly, we target register transducers, which generalise finite-state transducers to data domains in the same way as register automata generalise finite-state automata. We then show that finite-memory strategies in the automaton game induce register transducer implementations. Indeed, a finite-memory strategy corresponds to a sub-automaton of $S$, which picks output
transitions in $S$ with the help of its memory. This sub-automaton can then be interpreted as a
register transducer with $R$ registers. Note that this result is reminiscent of
Proposition~5~in~\cite{DBLP:journals/lmcs/ExibardFR21}.

We now define input-driven register automata, register transducers,
and then define the synthesis problem and show that it is decidable.

\subparagraph{Input-driven register automata}
An input-driven deterministic register automaton is a two-sided register
automaton whose output data are required to be the content of some register.
Formally, it is a tuple $S = (Q, \qinit, R, \delta, \alpha)$
where $Q = Q_A \uplus Q_E$, $\qinit \in Q_A$ and the transition function is
$$
\delta :
(Q_A \x \Tst \to \Asgn \x Q_E) \cup (Q_E \x \Tst_= \to
\Asgn_\emptyset \x Q_A),$$%
where $\Tst_=$ consists of tests which contain at least one atom of the form
$*=r$ for some $r \in R$, i.e.\ the output data value must be equal to some
specification register, and $\Asgn_\emptyset = \{\emptyset\}$ meaning that output data values
are never assigned to any register. This is without loss of generality, given
that the output value has to be equal to the content of some register.

\subparagraph{Correspondence with one-sided register automata}
  To an input-driven register automaton specification, we associate a
  one-sided register automaton by treating output registers as finite
  labels. Formally, let $S = (Q, \qinit, R, \delta, \alpha)$ be an input-driven register automaton. Its \emph{associated one-sided automaton} is $S' =
  (\Tst_=, Q, \qinit, R, \delta', \alpha)$ (note that the finite output alphabet is
  $\Tst_=$). Up to remembering equality relations between registers, we
  can assume that from an output state, all outgoing
  transitions can be taken, independently of the registers' configuration, i.e.
  that from a reachable output configuration $(q_E, \tau)$, for all transitions
  $t=q_E \xrightarrow{\tst_=, \varnothing} q'_A$, there exists $\d$ such that
  $q_E \xrightarrow[t]{\d} q'_A$. This however induces a blowup of
  $Q$ exponential in $\size{R}$.

  The transition function is
  $\delta'_A = \delta_A$, and $\delta'_E(q_E,\tst) = q'_A$ if and only if
  $\delta_E(q_E,\tst) = (\varnothing, q'_A)$. Overall, the size of $S'$ is
  exponential in $\size{R}$ (because of the assumption we made on output
  transitions) and polynomial in $\size{Q}$.

\subparagraph{Register transducers}\label{sec:def:transducers}
A \emph{register transducer} (RT) is a tuple
$T = (Q, \qinit, R, \delta)$,
where
$Q$ is a set of \emph{states} and $\qinit \in Q$ is \emph{initial},
$R$ is a finite set of \emph{registers}.
The \emph{transition function} $\delta$ is a (total) function
$\delta\:Q \x \Tst  \to \Asgn \x R  \x  Q$.

The semantics of $T$ are provided by the associated register automaton $S_T$. It
has states $Q' = (Q_A \cup \{\badState_A\}) \uplus (Q_E \cup \{\badState_E\})$, where $Q_A$ and $Q_E$ are two
disjoint copies of $Q$ and $\badState_A$, $\badState_E$ jointly form a rejecting sink.
It has initial state $\qinit$ and set of registers $R$. Its transition function is
defined as $q_A \xrightarrow[S_T]{\tst, \asgn} q_E \xrightarrow[S_T]{r^=,
  \varnothing} q'_A$ and $q_E \xrightarrow[A_T]{r^{\neq}, \varnothing} \badState_A$ whenever $q \xrightarrow[T]{\tst \mid \asgn, r} q'$, where $q \xrightarrow[T]{\tst \mid \asgn, r} q'$ stands for $\delta(q,\tst)
= (\asgn, r, q')$ (similarly for $A_T$). Additionally, we let $\badState_A \xrightarrow[A_T]{\top,\varnothing} \badState_E \xrightarrow[A_T]{\top,\varnothing} \badState_A$. The priority function is defined as $\alpha
: q \in Q' \mapsto 2$ and $\badState_A, \badState_E \mapsto 1$, i.e. all states but $\badState_A, \badState_E$ are accepting. Then, $T$ recognises the (total)
function $f_T: \d^A_0 \d^A_1 \dots \mapsto \d^E_0 \d^E_1 \dots$ such
that $\d^A_0 \d^E_0 \d^A_1 \d^E_1 \dots \in L(A_T)$. For each input $\omega$-data word, the associated output $\omega$-data word exists since
all states but $\badState_A, \badState_E$ are accepting. It is moreover unique since the output transitions that avoid the sink state are
determined by the input ones, and they only contain equality tests so the
corresponding output data values are unique.

\paragraph{Synthesis for input-driven output specifications}
Given a specification $S$, we say that a function $f$ realises $S$ if they have
the same domain and its graph
is included in $S$, i.e. $\dom(f) = \dom(S)$ and for all input $x \in \dom(S)$,
$(x,f(x)) \in S$. We then say that a register transducer $T$ \emph{realises} the
register automaton specification $S$ if $f_T$ does, i.e. $L(T) \subseteq L(S)$.

The \emph{register transducer synthesis problem} then asks to produce a $T$ that realises $S$
when such $T$ exists, otherwise output ``unrealisable''.
Note that $T$ and $S$ can have different sets of registers.

\begin{proposition}
  \label{prop:RT_implem_win_strat_GS}
  Let $S = (Q, \qinit, R, \delta, \alpha)$ be an input-driven register automaton, and $S'$ its associated one-sided register automaton.
  If $S$ admits a register transducer implementation, then Eve has a winning strategy in the Church game $G_{S'}$ associated with $S'$.
\end{proposition}
\begin{proof}
  Assume that there exists a register transducer $T$ which realises $S$.
  From $T$, we define a strategy $\lambda^T$ in $G$, which simulates $T$ and $S$ in
  parallel. Given a history $\d^\inp_0 \dots \d^\inp_n$, let $\d^\outp_n$ be the data output by $T$.
  As $S$ is deterministic, there exists a unique run over the history $\d^\inp_0
  \d^\outp_0 \dots \d^\inp_n \d^\outp_n$; let $t = q_E \xrightarrow{\tst_=,
    \varnothing} q'_A$ be the transition taken by $S$ on reading $\d^\outp_n$.
  Then, define $\lambda^T(\d^\inp_0 \dots \d^\inp_n) = \tst_=$. Now, for a play in
  $G$ consistent with $\lambda^T$, consider the associated run in $S'$. As $T$
  is an implementation and the sequence of transitions is feasible (as witnessed
  by the data given as input), this run is necessarily accepting, so $\lambda^T$ is
  indeed a winning strategy in $G$.
\end{proof}
\begin{proposition}
  \label{prop:FM_Gf_RT_implem}
  Let $S = (Q, \qinit, R, \delta, \alpha)$ be an input-driven register automaton, and $S'$ its associated one-sided register automaton. If Eve wins $G^f_{S'}$ with a finite-memory strategy, then $S$ admits a register transducer implementation.
\end{proposition}
\begin{proof}
  Let $S = (Q, \qinit, R, \delta, \alpha)$ be an input-driven register automaton, and $S'$ its associated one-sided register automaton.
  Assume that Eve has a finite-memory winning strategy in $G_S^f$ that is computed by a finite-state automaton $M$ with states $P$,
  initial memory $p_0$, transition function $\mu : P \times V_\exists \rightarrow
  P$ and move selection $s: P \rightarrow V_\forall$. Thus, given a history $h = v_0 \dots v_n \in V_\exists^+$, $\lambda_\exists(h) : V_\exists^+ \rightarrow \Tst_=$ is defined as $s(p)$, where $p_0 \xrightarrow[M]{h} p$.
  Then, consider $T = (Q \times P, (\qinit,p_0), R, \delta')$. We define $\delta'$
  as follows: assume the transducer is in state $(q,p)$. Then, the transducer
  receives input satisfying some test $\tst$. In $S$, it corresponds to some
  input transition $\delta(q,\tst) = (\asgn, q')$. The memory is updated to
  $\mu(p,(\tst,\asgn)) = p'$, and $s(p') = \tst_=$. Let $r$ be such that
  $\tst_= \Rightarrow r^=$ (such $r$ necessarily exists by definition of
  $\Tst_=$). Then, we let
  $\delta((q,p), \tst) = (\asgn, r, (q',p'))$. Now, let $w = \d^A_0 \d^A_1
  \dots$ be an input data word, and $T(w) = \d^E_0 \d^E_1 \dots$. By
  construction, the run of $S$ over $w \otimes T(w) = \d^A_0 \d^E_0 \d^A_1 \d^E_1
  \dots$ corresponds to a play consistent with $\lambda_\exists$, so it is accepting
  (since it is feasible, as witnessed by $w \otimes T(w)$). As a consequence, $w
  \otimes T(w) \in L(S)$, which means that $T$ is indeed a register transducer
  implementation of $S$.
\end{proof}

In the proof of Theorem~\ref{thm:Church_2sided_undec}, Eve's strategy consists in outputting a finite data word with $B \geq 0$ distinct data values, and then only zeroes. Thus, it can be implemented with a register transducer with $B$ registers, provided that its registers can be initialised with non-zero data values (in our setting, we assume all registers are initialised to $0$). As a consequence, we get:
  \begin{theorem}\label{thm:synt-undecidable-N}
    For specifications defined by two-sided deterministic register
    automata over data domains $(\bbQ,\leq)$,
    the register transducer synthesis problem is undecidable, provided that registers can be initialised to an arbitrary valuation.
  \end{theorem}
  \begin{remark}
    The decidability status of the synthesis problem for register transducers with a fixed initial valuation $0^R$ is open.
  \end{remark}

\section{Solving Church Synthesis Games on $(\bbN,\leq)$}\label{sec:Ncase}

We now have the main tools in hand to solve Church synthesis games over ordered data domains. As an introduction, before the case of $(\bbN, \leq)$, we apply those tools to $(\bbQ, \leq)$.

\subsection{Warm-up: the case of $(\bbQ,\leq)$}
\label{sec:synt-games-Q}
 First, let us observe that in that case, the automaton game is $\omega$-regular:
\begin{proposition}
  \label{prop:Gf_Q_omega_regular}
  Let $S$ be a one-sided register automaton over $(\bbQ,\leq)$. Then $G_S^f$ is an $\omega$-regular game.
\end{proposition}
\begin{proof}
  Let $S = (\Sigma, Q, \qinit, R, \delta, \alpha)$ be a one-sided register automaton over $(\bbQ,\leq)$, and let $G_S^f = (\VA,\VE,v_0,E, W_S^f)$ be its associated automaton game. $G_S^f$ is a finite-arena game; it remains to show that it is $\omega$-regular, i.e. that $W_S^f$ is $\omega$-regular. Recall that $W_S^f = \big\{ v_0(\tst_0,\asgn_0,u_0)(\sigma_0,v_1) \ldots \mid (\tst_0\asgn_0) \ldots \in \Feasible_\D(R) \Impl v_0u_0v_1u_1\dots\models \alpha \big\}$. By Theorem~\ref{thm:0-satisf-Q} (on page~\pageref{thm:0-satisf-Q}), we know that $\Feasible_\D(R)$ is $\omega$-regular; since $\alpha$ is a parity condition, one can then build an $\omega$-regular automaton recognising $W_S^f$ using standard automata constructions.
\end{proof}

From Proposition~\ref{prop:Gf_GS_sound}, we already know that for all one-sided register automata $S$ (over $(\bbQ, \leq)$ or $(\bbN, \leq)$), $G^f_S$ soundly abstracts $G_S$. We now show the converse for $(\bbQ, \leq)$:
\begin{proposition}
  \label{prop:Gf_GS_complete}
Let $S$ be a one-sided register automaton over $(\bbQ,\leq)$.
If Eve has a winning strategy in $G_S$, then she has a winning strategy in the Church game $G_S^f$.
\end{proposition}
\begin{proof}
We show the result by contraposition. Assume that Eve does not win $G_S^f$.
As $G_S^f$ is $\omega$-regular (Proposition~\ref{prop:Gf_Q_omega_regular}), it is determined, so
Adam has a winning strategy $\lambda^f_\forall : \VA(\VA\VE)^* \to \VE$ in $G_S^f$.
  We construct the winning Adam data strategy $\lambda_\forall^\bbQ$ in $G_S$ step-by-step, by instantiating the tests on-the-fly. When the test is an equality, pick the corresponding data, and when it is of
  the form $r < * < r'$, take some rational number strictly in the
  interval.

  Formally, suppose we are in the middle of a play:
  $\d_0 ... \d_{k-1}$ has been played by Adam $\lambda_\forall^\bbQ$
  and $\sigma_0 ... \sigma_{k-1}$ has been played by Eve;
  both sequences are empty initially.
  We want to know the value $\d_k$ for $\lambda_\forall^\bbQ(\sigma_0 ... \sigma_{k-1})$.
  Let $(v_0,\v_0)(u_0,\v_1)(v_1,\v_1)(u_1,\v_2)...(v_k,\v_k)$ be the current run prefix of the register automaton $G_S$ (initially $(v_0,\v_0)$).
  We construct the corresponding play prefix
  $v_0 (\tst_0,\asgn_0,u_0)(\sigma_0,v_1)(\tst_1,\asgn_1,u_1)(\sigma_1,v_2)...(\sigma_{k-1},v_k)$ of $G_f$
  (initially $v_0$).
  We assume that this play prefix adheres to $\lambda^f_\forall$ (this holds initially).
  We now consult $\lambda^f_\forall$:
  let $(\tst_k,\asgn_k,u_k) = \lambda^f_\forall(\sigma_{k-1},v_k)$.
  Using $\tst_k$ and $\v_k$,
  we construct $\d_k$ as follows.
  \li
  \- If $\tst_k$ contains $*=r$ for some $r \in R$,
     we set $\d_k = \v_k(r)$.

  \- If $\tst_k$ is of the form $r<*$ for all $r \in R$,
     then set $\d_k = \max(\v_k)+1$,
     i.e.\ take the largest value held in the registers plus $1$.

  \- Similarly, if $\tst_k$ is of the form $*<r$ for all $r \in R$,
     then set $\d_k = \min(\v_k) - 1$.

  \- Otherwise, for every $r \in R$,
     the test $\tst_k$ has either $r<*$ or $*<r$.
     We now pick two registers $r,s$
     such that
     the test contains $r<*$ and $*<s$ and no register holds a value between $\v_k(r)$ and $\v_k(s)$.
     Then we set $\d_k = \frac{\v_k(r) + \v_k(s)}{2}$.
  \il
  It is easy to see that $\d_k$ satisfies $\tst_k$, i.e.\ $(\v_k,\d_k) \models \tst_k$.
  Finally, define $\v_{k+1} = \update(\v_k,\d_k,\asgn_k)$.
  Thus, the next configuration of the run in the register automaton is $(u_k,\v_{k+1})$.
  In $G_f$, the play is extended by $(\tst_k,\asgn_k,u_k)$;
  notice that the resulting extended play again adheres to the winning Adam strategy $\lambda^f_\forall$.
  Therefore,
  starting from the empty sequences of Adam data choices and Eve label choices,
  step-by-step we construct the values for $\lambda_\forall^\bbQ$.

  Then, each play consistent with this strategy in $G_S$ corresponds to a
  unique run in $S$, which is also a play in $G_f$. As $\lambda_\forall^f$ is winning,
  such a run is accepting, so $\lambda_\forall$ is winning: Eve does not win $G_S$.
\end{proof}
We are now ready to show:
\begin{theorem}\label{thm:synt-games-Q}
  Let $S = (\Sigma, Q, \qinit, R, \delta, \alpha)$ be a one-sided register automaton over ${(\bbQ,\leq)}$.
  \lo[1.~]
  \- \label{itm:synt-games-Q-dec} The problem of determining if Eve wins the Church synthesis game $G = (\D,\D,S)$ is decidable in time polynomial in $\size{Q}$ and exponential in $c$ and $\size{R}$. 
  \- \label{itm:synt-games-Q-det} $G_S$ is determined, i.e. either Eve or Adam has a winning strategy in $G_S$.
  \ol
\end{theorem}
\begin{proof}[Proof of Theorem~\ref{thm:synt-games-Q}]
  First, by Propositions~\ref{prop:Gf_GS_sound} and~\ref{prop:Gf_GS_complete}, we know that Eve $G_S$ iff she wins $G_S^f$.

  By analysing the constructions of Propositions~\ref{prop:Gf_Q_omega_regular} and Theorem~\ref{thm:0-satisf-Q}, we get that the automaton game $G_S^f$ is of size polynomial in $\size{Q}$ and
  exponential in $\size{R}$, and has a number of priorities linear in $c$, so it
  can be solved in $O((poly(\size{Q})2^{poly(\size{R})})^c)$, which yields item~\ref{itm:synt-games-Q-dec} of the theorem.

  Then, determinacy (item~\ref{itm:synt-games-Q-det}) follows from the determinacy of $G_S^f$, since it is equivalent with $G_S$.
\end{proof}
As a consequence of Propositions~\ref{prop:RT_implem_win_strat_GS} and~\ref{prop:FM_Gf_RT_implem}, we also get:
\begin{proposition}
  Let $S$ be an input-driven register automaton, and $S'$ its associated one-sided register automaton. The following are equivalent:
  \begin{itemize}
  \item Eve has a winning strategy in $G_{S'}$
  \item Eve has a winning strategy in $G_{S'}^f$
  \item Eve has a finite-memory winning strategy in $G_{S'}^f$
  \item $S$ admits a register transducer implementation
  \item $S$ admits an implementation
  \end{itemize}
\end{proposition}
Thus, we have:
\begin{theorem}\label{thm:synt-decidable-Q}
  For specifications defined by deterministic input-driven output register
  automata over data domains $(\bbQ,\leq)$,
  the register transducer synthesis problem is equivalent with the synthesis problem (for arbitrary implementations) and can be solved in time polynomial in
  $\size{Q}$ and exponential in $c$ and $\size{R}$.
\end{theorem}

\begin{remark}
For data domain $(\bbQ, \leq)$, the synthesis problem for specifications defined by
two-sided register automata is also decidable, if the target implementation is
any program, as the Church game again reduces to a parity game: checking
feasibility is still doable using a parity automaton. However, in general,
register transducers might not suffice; e.g.\ the environment can ask the
system to produce an infinite sequence of data values in increasing order. Yet,
it can be shown that implementations can be restricted to simple programs,
which can be modelled by register transducers which have the additional
ability to pick a data between two others, e.g.\ by computing
$\frac{\d_1+\d_2}{2}$: such ability suffices to translate a finite-memory
strategy in the automaton game to an implementation.
\end{remark}

We now shift to the main result of the paper, namely that Church synthesis games are decidable over $(\bbN, \leq)$. We start by providing some results on actions sequences over $(\bbN,\leq)$ that highlight the difficulties and hint at how to overcome them (Section~\ref{sec:AS_N}). We then use those results to define an $\omega$-regular approximation of the automaton game that we show to be sound and complete (Section~\ref{sec:Greg_N}).

\subsection{Action sequences over $(\bbN,\leq)$}
\label{sec:AS_N}
\paragraph{Action sequences over $(\bbN,\leq)$ are not $\omega$-regular}
First, contrary to $(\bbQ, \leq)$, one needs a global condition on action sequences to check whether they are feasible. To get an intuition, consider the action sequence $(\top\{r\})((r > *){r})^\omega$, that asks for an initial data value (stored in $r$), and then repeatedly asks to provide smaller and smaller data values. While feasible in $(\bbQ, \leq)$, such a sequence is not feasible in $(\bbN,\leq)$, as it would yield an infinite descending chain in $\bbN$. And, actually, the discreteness of $(\bbN,\leq)$ implies that the set of feasible action sequences is not $\omega$-regular in $(\bbN, \leq)$ (see, e.g., \cite[Corollary~6.5]{DD07} or~\cite[Appendix~C]{ST11}). We provide an example, for self-containedness.
\begin{example}
  \label{ex:AS_N_not_reg}
consider the automaton of Figure~\ref{fig:UpWeGoAgain}, which essentially consists in that of Figure~\ref{fig:UpWeGo} (on page~\pageref{fig:UpWeGo}) where we allow Adam to repeatedly try his luck by taking the transition from $C$ to $B$. Note that the priorities (written above the states) ensure that if he does so, he loses.
\begin{figure}[ht]
  \resizebox{\textwidth}{!}{%
    ~~~~\begin{tikzpicture}[auto, node distance=2.5cm]
      \tikzstyle{every state}=[text=black,font=\scriptsize]
      \tikzstyle{input}=[rectangle,fill=red!30,minimum size=0.7cm,inner sep=0cm]
      \tikzstyle{output}=[fill=green!30,minimum size=0.8cm,inner sep=0cm]
      \tikzstyle{input char}=[text=purple] \tikzstyle{output char}=[text=teal]

      \tikzset{every edge/.append style={font=\small}}

      \node[state, initial, initial text={}, input,label={above:{$2$}}] (1) {$A$};
      \node[state, right= of 1, output,label={above:{$2$}}] (2) {$B$};
      \node[state, right= of 2, input,label={above:{$1$}}] (3) {$C$};
      \node[state, below=1.8cm of 3, output,label={above:{$1$}}] (4) {$D$};
      \node[state, right= of 4, input,label={above left:{$0$}}] (5) {$E$};
      \node[state, output,label={above:{$2$}}] at (3-|5) (6) {$F$};
      \node[state, right=0.75cm of 6, input,label={above:{$2$}}] (6b) {$F'$};
      \node[state, right= of 5, output,label={above:{$1$}}] (7) {$G$};
      \node[state, above=0.75cm of 7, input,label={above:{$1$}}] (7b) {$G'$};

      \path[->, >=stealth'] (1) edge node[above,input char] {$\top / \da r_M$} (2);
      \path[->, >=stealth'] (2) edge[bend left] node[above,output char] {$a,b$} (3);
      \path[->, >=stealth'] (3) edge[bend right=35] node[left,input char] {$r_l < \indata < r_M / \da r_l$} (4);
      \path[->, >=stealth'] (3) edge[bend left] node[above,input char] {$else$} (2);
      \path[->, >=stealth'] (4) edge[bend right=35] node[right,output char] {$a$} (3);
      \path[->, >=stealth'] (4) edge node[above,output char] {$b$} (5);
      \path[->, >=stealth'] (5) edge node[right,input char] {$else$} (6);
      \path[->, >=stealth'] (5) edge node[above,input char] {$r_l < \indata < r_M$} (7);
      \path[->, >=stealth'] (6) edge[bend left] node[above,output char] {$\top$} (6b);
      \path[->, >=stealth'] (6b) edge[bend left] node[above,input char] {$\top$} (6);
      \path[->, >=stealth'] (7) edge[bend left] node[left,output char] {$\top$} (7b);
      \path[->, >=stealth'] (7b) edge[bend left] node[right,input char] {$\top$} (7);
    \end{tikzpicture}~~~~
    }
    \caption{Eve wins this game in $\bbN$ (but loses in $\bbQ$).}
    \label{fig:UpWeGoAgain}
\end{figure}
Then, consider sequences of states in $A (B C (DC)^*)^\omega$, where Adam initially picks a value, the game transitions to $B$ then $C$, then Adam and Eve loop between $B$ and $C$ for some time, until at some point Adam transitions back to $B$, and so on. To check whether such a sequence actually corresponds to a play, one needs to check that there exists a uniform bound (the content of $r_M$) over the iterations of $DC$. Formally, plays in $A (B C (DC)^*)^\omega$ are of the form $A (B C (DC)^{n_0})(B C (DC)^{n_1}) \dots$ where there exists $\Bound \geq 0$ such that for all $i \geq 0$, $n_i \leq \Bound$. By an elementary pumping argument, one can show that this language is not $\omega$-regular~\cite{BC06}.
\end{example}
This implies that $\Feasible_{\bbN}(R)$ is not $\omega$-regular whenever $\size{R} \geq 2$, and neither is the automaton game.
We thus consider an $\omega$-regular over-approximation of the automaton game, and show that both games are actually equivalent.

\paragraph*{Constraint sequences, consistency and satisfiability}
\label{sec:def:constraints}

To introduce the said approximation, we first require a further study of $\Feasible_{\bbN}(R)$, that we conduct through the notion of constraint sequences. To ease the comparison between $(\bbQ, \leq)$ and $(\bbN, \leq)$, we define them for both domains. Thus, in this section, fix an ordered domain $\D$.

Given a set of registers $R$ (which can also be thought of as variables),
we let $R' = \{r'\mid r \in R\}$ be the set of their \emph{primed} versions.
Given a valuation $\v \in \D^R$, define $\v' \in \D^{R'}$ to be the valuation that maps $\v'(r') = \v(r)$ for every $r \in R$.

\begin{definition}
  \label{def:constraint}
A \emph{constraint} over $R$ is a total non-strict preorder over $R \cup R'$,
i.e.\ a total order with ties allowed.
It can be represented as a maximally consistent set of atoms of the form
$t_1 \bowtie t_2$ where $t_1,t_2 \in R\cup R'$,
where the symbol $\bowtie$ denotes one of $>$, $<$, or $=$.

Given a constraint $C$,
the writing $C_{|R}$ denotes the subset of its atoms $r \bowtie s$ for $r,s \in R$,
and $C_{|R'}$ denotes the subset of atoms over primed registers.
Given a set $S$ of atoms $r' \bowtie s'$ over $r',s' \in R'$,
let $unprime(S)$ be the set of atoms derived by replacing every $r' \in R'$ by $r$.

A \emph{state constraint} relates registers in the current moment only:
it contains atoms over non-primed registers, so it has no atoms over primed registers.
Note that both $C_{|R}$ and $unprime(C_{|R'})$ are state constraints.
\end{definition}

A constraint describes how register values change in one step:
their relative order at the beginning (when $t_1,t_2 \in R$),
at the end (when $t_1,t_2 \in R'$), and in between (with $t_1 \in R$ and $t_2 \in R'$).

\begin{example}
For instance, the ordering $r_1 < r'_1 < r'_2 < r_2$ is a
constraint over $R=\{r_1,r_2\}$ and can be represented by
$\{r_1<r_2, r_1<r'_1, r_2>r'_2, r'_1<r'_2\}$;
it is satisfied e.g.\ by the two successive valuations $\v_a\: \{r_1 \mapsto 1, r_2 \mapsto 4
\}$ and $\v_b\:\{r_1 \mapsto 2, r_2 \mapsto 3 \}$. Similarly, $r_1 =
r'_1 < r'_2 = r_2$ is a constraint corresponding to the set
$\{r_1<r_2, r_1=r'_1, r_2=r'_2, r'_1<r'_2\}$.
Note that the set $\{r_1<r_2, r_1>r'_1, r_2<r'_2, r'_1>r'_2\}$
does not represent a constraint:
it is not consistent since $r_1 > r'_1 > r'_2 > r_2 > r_1$
implies $r_1 > r_1$, violating irreflexivity,  and thus does not correspond to any total non-strict preorder.
\manufixed{R1: Strictly speaking, this violates *irreflexvity*, not reflexivity.}{changed; also added "total non-strict preorder" and another non-example (non-total)}
Another counter-example is $r \leq r'$ for $R=\{r\}$: it is not a constraint since it is not total.
\end{example}

\begin{definition}
  \label{def:constraint_sequence}
A \emph{constraint sequence} is then an infinite sequence of constraints $C_0 C_1 \dots$
(when a sequence is finite, we explicitly state it).

It is \emph{consistent} if for every $i$:
$unprime({C_i}_{|R'}) = {C_{i+1}}_{|R}$,
i.e.\ the register order at the end of step $i$ equals the register order at the beginning of step $i+1$.

A valuation $\mc w \in \D^{R\cup R'}$ \emph{satisfies} a constraint $C$, written $\mc w \models C$,
if every atom holds when we replace every $r \in R \cup R'$ by $\mc w(r)$.
A constraint sequence is \emph{satisfiable} if there exists a sequence of valuations
$\v_0 \v_1 ... \in (\D^R)^\omega$ such that
$\v_i\cup\v'_{i+1} \models C_i$ for all $i\geq0$.
If, additionally\footnote{Recall that over $(\bbN,\leq)$, $0$ denotes its minimal element. Over $(\bbQ, \leq)$, its choice is irrelevant.}, $\v_0 = 0^R$, then it is \emph{$0$-satisfiable}.
Note that satisfiability implies consistency, but not vice versa, as we show below.
\end{definition}
Note also that the notions of constraints and constraint sequences over $(\bbN,\leq)$ and over $(\bbQ,\leq)$ syntactically coincide. This is done on purpose, to ease the comparison between the two domains. When this matters, we always make it clear on which domain a constraint sequence is meant to be interpreted.

Finally, remark that consistency also coincides for both domains, while satisfiability does not, as witnessed by the constraint sequence $(\{r>r'\})^\omega$ over $R = \{r\}$: it is satisfiable in $\bbQ$ but not in $\bbN$.
\begin{example}
  We give a richer example. Let $R = \{r_1,r_2,r_3,r_4\}$.
  Let a consistent constraint sequence $C_0 C_1\dots$ start with
  $$
  \{r'_2 < r_1 = r'_1 < r_2 < r_3 = r'_4 < r_4 = r'_3\} \{r'_1 < r_2 = r_2' < r_1 < r_4  = r'_3 < r_3 = r'_4\}
  $$
  Figure~\ref{fig:constraint-seq} visualises $C_0 C_1$ plus a bit more constraints.
  The black lines represent the evolution of the same register;
  ignore the colored paths for now.
  The constraint $C_0$ describes the transition from moment $0$ to $1$,
  and $C_1$ the transition from moment $1$ to $2$.
  This finite constraint sequence is satisfiable in $\bbQ$ and in $\bbN$.
  For example, the valuations can start with
  $\v_0 = \{ r_4 \mapsto 6, r_3 \mapsto 5, r_2 \mapsto 4, r_1 \mapsto 3 \}$.
  In $\bbN$, no valuations starting with $\v_0(r_3)<5$ can satisfy the sequence.
  Further, since the constraint $C_0$ requires all registers in $R$ to differ,
  the sequence is not $0$-satisfiable in $\bbQ$ nor in $\bbN$.
\begin{figure}[htb]
  \centering
    \begin{tikzpicture}[->, >=stealth', bend angle=45, auto, node distance=2.75cm,yscale=0.9,scale=0.4,nodes={font=\scriptsize}]
      \tikzstyle{every state}=[text=black]
      \tikzstyle{dot}=[circle,draw=black,fill=black,minimum size=0.8mm,inner sep=0]


      \draw[->,black!30] (0, 8) -- (0, 16.5);
      \draw[->,black!30] (0, 8) -- (26, 8);
      \node at (-1.5,16.3) {\footnotesize order};
      \node at (26.3,7.3)  {\footnotesize time};
      \node at (0,7.3) {\scriptsize $0$};

      \foreach \x in {1,...,6}
      {
        \draw[-,black!10] (\x*4,8)--(\x*4,15);
        \node at (\x*4,7.3) {\scriptsize $\x$};
      }

      \node[dot] (n03) at (0,15){}; \node at (-1,15) {$r_4$};
      \node[dot] (n02) at (0,14){}; \node at (-1,14) {$r_3$};
      \node[dot] (n01) at (0,13){}; \node at (-1,13) {$r_2$};
      \node[dot] (n00) at (0,12){}; \node at (-1,12) {$r_1$};

      \node[dot] (n12) at (4,15){}; \draw[color=black!70,-] (n02) -- (n12);
      \node[dot] (n13) at (4,14){}; \draw[color=black!70,-] (n03) -- (n13);
      \node[dot] (n11) at (4,11){}; \draw[color=black!70,-] (n01) -- (n11);
      \node[dot] (n10) at (4,12){}; \draw[color=black!70,-] (n00) -- (n10);

      \node[dot] (n23) at (8,15){}; \draw[color=black!70,-] (n13) -- (n23);
      \node[dot] (n22) at (8,14){}; \draw[color=black!70,-] (n12) -- (n22);
      \node[dot] (n21) at (8,11){}; \draw[color=black!70,-] (n11) -- (n21);
      \node[dot] (n20) at (8,10){};  \draw[color=black!70,-] (n10) -- (n20);

      \node[dot] (n33) at (12,14){}; \draw[color=black!70,-] (n23) -- (n33);
      \node[dot] (n32) at (12,15){}; \draw[color=black!70,-] (n22) -- (n32);
      \node[dot] (n31) at (12,9){};  \draw[color=black!70,-] (n21) -- (n31);
      \node[dot] (n30) at (12,10){};  \draw[color=black!70,-] (n20) -- (n30);

      \node[dot] (n43) at (16,15){}; \draw[color=black!70,-] (n33) -- (n43);
      \node[dot] (n42) at (16,14){}; \draw[color=black!70,-] (n32) -- (n42);
      \node[dot] (n41) at (16,12){}; \draw[color=black!70,-] (n31) -- (n41);
      \node[dot] (n40) at (16,10){};  \draw[color=black!70,-] (n30) -- (n40);

      \node[dot] (n53) at (20,14){}; \draw[color=black!70,-] (n43) -- (n53);
      \node[dot] (n52) at (20,15){}; \draw[color=black!70,-] (n42) -- (n52);
      \node[dot] (n51) at (20,12){}; \draw[color=black!70,-] (n41) -- (n51);
      \node[dot] (n50) at (20,10){};  \draw[color=black!70,-] (n40) -- (n50);

      \node[dot] (n63) at (24,15){}; \draw[color=black!70,-] (n53) -- (n63);
      \node[dot] (n62) at (24,14){}; \draw[color=black!70,-] (n52) -- (n62);
      \node[dot] (n61) at (24,12){}; \draw[color=black!70,-] (n51) -- (n61);
      \node[dot] (n60) at (24,10){};  \draw[color=black!70,-] (n50) -- (n60);

      \node[color = black!80!yellow] (c1) at (2,15.5) {$c_1$};
      \node[color = black!80!blue] (c2) at (2,11) {$c_2$};
      \node[color = black!80!green] (c3) at (10.25,12) {$c_3$};
      \node[color = black!80!red] (c4) at (18,9.5) {$c_4$};

      \begin{scope}[on background layer]
        \draw[-,color=yellow,line width=0.6mm,opacity=0.7] (n03) -- (n12) -- (n23) -- (n32) -- (n43) -- (n52) -- (n63);
        \draw[-,color=blue,line width=0.6mm,opacity=0.35] (n03) -- (n02) -- (n01) -- (n00) -- (n11) -- (n20) -- (n31);
        \draw[-,color=red,line width=0.6mm,opacity=0.4] (n31) -- (n40) -- (n50) -- (n51) -- (n53) -- (n52);
        \draw[-,color=green,line width=0.6mm,opacity=0.4] (n23) -- (n33) -- (n21) -- (n30) -- (n31);
      \end{scope}
    \end{tikzpicture}
    \caption{Visualisation of a constraint sequence.
             Individual register values are depicted by black dots,
             and dots are connected by black lines when they talk about the same register.
             Yellow/blue/green/red paths depict chains (cf infra).}
    \label{fig:constraint-seq}
\end{figure}
\end{example}

\paragraph{Chains}
This section describes a characterisation of satisfiable constraint sequences
that is amenable to being recognised by automata. The proofs are quite technical, so we defer them to Section~\ref{sec:constraints} and for the time being we only give an intuition.
\begin{definition}[Chains]
  \label{def:chains}
Fix $R$ and a consistent constraint sequence $C_0 C_1 \dots$ over $R$.
A (decreasing) \emph{two-way chain} is a finite or infinite sequence
$(r_0,m_0) \tr_0 (r_1,m_1) \tr_1 ... \in \big((R\x\bbN) \cdot \{=,>\}\big)^{*,\omega}$
satisfying the following (note that $m_0$ can differ from $0$).
\li
\- $m_{i+1} \!=\! m_i$, or $m_{i+1} \!=\! m_i+1$ (time flows forward), or $m_{i+1} = m_i-1$ (backwards).
\- If $m_{i+1} = m_i$ then $(r_i \triangleright_i r_{i+1}) \in C_{m_i}$.
\- If $m_{i+1} = m_i+1$ then $(r_i \triangleright_i r'_{i+1}) \in C_{m_i}$.
\- If $m_{i+1} = m_i-1$ then $(r'_i \triangleright_i r_{i+1}) \in C_{m_i-1}$.
\il
The \emph{depth} of a chain is the number of $>$;
when it is infinity, the chain is \emph{infinitely} decreasing.
Figure~\ref{fig:constraint-seq} highlights four two-way chains (there are more) with yellow, blue, green and red colors.
For instance,
the green-colored chain $c_3$, defined as $(r_4,2) > (r_3,3) > (r_2,2) > (r_1,3) > (r_2,3)$, has depth $4$.

Given a moment $i$ and a register $x$,
a (decreasing) \emph{right two-way chain starting in $(x,i)$} (\emph{r2w} for short)
is a two-way chain
$(x,i) \tr_1 (r_1,m_1) \tr_2 (r_2,m_2) \ldots $
such that $m_j\geq i$, $\tr_j \in \{=,>\}$,
for all $j$.
Thus, all elements appear to the right of the starting moment $(x,i)$.

We define \emph{one-way} chains similarly, except that time now flows forwards or stays the same, and that they can be either increasing or decreasing:
\li
\- $m_{i+1} \!=\! m_i$ (time does not flow), or $m_{i+1} \!=\! m_i+1$ (time flows forward).
\- If $m_{i+1} = m_i$ then $(r_i \bowtie_i r_{i+1}) \in C_{m_i}$.
\- If $m_{i+1} = m_i+1$ then $(r_i \bowtie_i r'_{i+1}) \in C_{m_i}$.
\il
A one-way chain is \emph{decreasing} (respectively, \emph{increasing}) if for all $i \geq 0$, $\bowtie_i \in \{>,=\}$ (resp., $\bowtie_i \in \{<,=\}$).

In Figure~\ref{fig:constraint-seq},
the blue ($c_2$) chain $(r_4,0) > (r_3,0) > (r_2,0) > (r_1,0) > (r_2,1) > (r_1,2) > (r_2,3)$ is one-way decreasing chain of depth 6;
the same sequence is also a two-way chain.
The red ($c_4$) chain $(r_2,3) < (r_1,4) = (r_1,5) < (r_2,5) < (r_4,5) < (r_3,5)$ is one-way increasing of depth 4;
if we read the sequence in reverse, it represents a two-way chain (two-way chains are always decreasing).
Sometimes we write ``chain'' omitting whether it is two- or one-way.

A \emph{stable chain} is an infinite chain
$(r_0,m) = (r_1,m+1) = (r_2,m+2) = ...$;
it can also be written as $(m,r_0r_1r_2...)$.
In Figure~\ref{fig:constraint-seq}, the yellow ($c_1$) chain $(0,(r_4r_3)^\omega)$ is stable.
Given a stable chain $\chi_r = (m,r_0r_1...)$ and a
chain
$\chi_s = (s_0,n_0) \bowtie_0 (s_1,n_1) \bowtie_1 ...$,
where $n_i \geq m$ for all $i$,
the chain $\chi_r$ is \emph{above} $\chi_s$ (equiv., $\chi_s$ is \emph{below} $\chi_r$) if
for all $i$ the constraint $C_{n_i}$ contains $r_{n_i-m} > s_{i}$ or $r_{n_i-m} = s_{i}$;
here we used $n_i-m$ because the register at
moment $n_i$ in the chain $\chi_r$ is $r_{n_i-m}$.
In Figure~\ref{fig:constraint-seq}, the yellow chain
$(0,(r_4r_3)^\omega)$ is above all colored chains.
A stable chain $(m,r_0r_1...)$ is \emph{maximal} if
it is above all other stable chains starting after $m$.
In Figure~\ref{fig:constraint-seq},
the yellow chain $(0,(r_4r_3)^\omega)$ is maximal (assuming the
sequence evolves in a similar fashion). %
%
%
Notice that if a sequence has a stable chain, then it has a maximal one.
A \emph{ceiled chain} is a chain that is below a maximal stable chain.
A constraint sequence can have an infinite number of ceiled chains;
it can also have zero, e.g.\ when there are no stable chains.
\end{definition}
Note that in this section, we mostly focus on one-way chains and right two-way chains, while two-way chains are used in Section~\ref{sec:chain-characterisation} as a technical intermediate. In the latter section, we show:
\begin{restatable}{lemma}{lemzsatisfNo}
  \label{lem:0-satisf-N-1}
  A consistent constraint sequence is $0$-satisfiable in $\bbN$ iff there exists $\Bound \geq 0$ such that:
  \lo[1.]
  \- it has no infinitely decreasing one-way chains,
  \- \label{itm:N_ceiled_bounded} the ceiled one-way chains have a depth at most $\Bound$
  \- it starts in $C_0$ s.t.\ ${C_0}_{|R} = \{r\!=\!s\mid r,s \in R\}$, and
  \- it has no decreasing one-way chains of depth $\geq\!1$ from $(r,0)$ for any $r$.
  \ol
\end{restatable}
In line with Example~\ref{ex:AS_N_not_reg}, the above characterisation is not $\omega$-regular; the culprit is item~\ref{itm:N_ceiled_bounded}.
We thus define quasi-feasible constraint sequences, by relaxing the condition to asking that there are no infinite increasing ceiled chains.
\begin{definition}
  \label{def:QFeasible}
  A consistent constraint sequence is \emph{quasi-feasible} whenever:
  \li
  \- it has no infinitely decreasing one-way chains,
  \- it has no infinitely increasing ceiled one-way chains,
  \- it starts in $C_0$ s.t.\ ${C_0}_{|R} = \{r\!=\!s\mid r,s \in R\}$, and
  \- it has no decreasing one-way chains of depth $\geq\!1$ from $(r,0)$ for any $r$.
  \il
\end{definition}
In Section~\ref{sec:lasso-characterisation} on page~\pageref{lem:N-satisf-lasso}, we show:

\medskip

\textbf{Lemma~\ref{lem:N-satisf-lasso}.} \emph{A lasso-shaped consistent constraint sequence is $0$-satisfiable if and only if it is quasi-feasible.}

\medskip

We conclude the section by formally relating action words (see Definition~\ref{def:action_word}) with constraint sequences.
\paragraph{Action words and constraint sequences}
Every action word naturally induces a unique constraint sequence.
For instance, for registers $R = \{r,s\}$,
an action word starting with $(\{r<*, s<*\},\{s\})$
(test whether the current data $\d$ is above the values of $r$ and $s$, store it in $s$)
induces a constraint sequence starting with $\{ r=s, r=r', s<s', r'<s'\}$
(the atom $r=s$ is due to all registers being equal initially).
This is formalised in the next lemma, which is notation-heavy but says a simple thing:
given an action word,
we can construct, on the fly, a constraint sequence that is $0$-satisfiable iff the action word is feasible.
For technical reasons, we need a new register $r_d$ to remember the last Adam data. The proof is on page~\pageref{page:constr-description}, so as not to break the flow of the argument.
\begin{restatable}{lemma}{lemMappingConstr}
  \label{lem:mapping-constr}
  Let $R$ be a set of registers, $R_d = R \uplus \{r_d\}$, and
  $\D$ be $(\bbN,\leq)$ or $(\bbQ,\leq)$.
  There exists a mapping $constr : \Pi \x\Tst\x\Asgn \to \sf C$
  from state constraints $\Pi$ over $R_d$ and tests-assignments over $R$ to constraints $\sf C$ over $R_d$,
  such that for all action words $a_0 a_1 a_2 ... \in (\Tst\x\Asgn)^\omega$,
    $a_0 a_1 a_2 ...$ is feasible iff $C_0 C_1 C_2 ...$ is $0$-satisfiable,
  where
  $\forall i\!\geq\!0$: $C_i = constr(\pi_i, a_i)$, $\pi_{i+1}\!=\!unprime({C_i}_{|R'_d})$,
  $\pi_0 = \{r\!=\!s\mid r,s\in R_d\}$.
\end{restatable}
Then, given a set of registers $R$, we say that an action word $\overline{a}$ is \emph{quasi-feasible} whenever $constr(\overline{a})$ is quasi-feasible. We correspondingly denote by $\QFeasible_\bbN(R)$ the set of \emph{quasi-feasible} action words over $R$.

\subsection{The \emph{$\omega$-regular} game $G_S^{reg}$}
\label{sec:Greg_N}

After this long but necessary detour through constraint sequences, we are ready to define the $\omega$-regular game associated with the automaton game. Recall that in Section~\ref{sec:automaton_game}, given a one-sided automaton $S$, we defined $G_S^f = (\VA,\VE,v_0,E, W_S^f)$. We now let $G_S^{reg} = (\VA,\VE,v_0,E, W_S^{reg})$. Thus, it has the same vertices and edge relation:
$\VA = \{\qinit\} \cup (\Sigma \x Q_A)$,
$\VE = \Tst\x\Asgn \x Q_E$,
$v_0 = \qinit$,
$E = E_0 \cup E_\forall \cup E_\exists$
where:
\li
\-
$E_0 = \big\{\big(v_0, (\tst,\asgn,u_0)\big) \mid \delta(v_0,\tst)=(\asgn,u_0)\big\}$,
\-
$E_\forall = \big\{ \big((\sigma,v),(\tst,\asgn,u)\big)\mid \delta(v,\tst)=(\asgn,u) \big\}$, and
\-
$E_\exists = \big\{ \big((\tst,\asgn,u),(\sigma,v)\big)\mid \delta(u,\sigma)=v \big\}$.
\il
However, the winning condition is now:
$$
W_S^f = \left\{ v_0(\tst_0,\asgn_0,u_0)(\sigma_0,v_1) \ldots\;\middle|\;
  \begin{array}{l}
    (\tst_0\asgn_0) \ldots \in \QFeasible_{\bbN}(R) \\
    \Impl
    v_0u_0v_1u_1\dots\models \alpha
  \end{array}\right\}
$$%
i.e., we replaced $\Feasible_{\bbN}(R)$ with $\QFeasible_{\bbN}(R)$.

First, by Proposition~\ref{lem:DPA-for-satisf-lasso}, we know that $\QFeasible_{\bbN}(R)$ is $\omega$-regular. Thus:
\begin{proposition}
  Let $S$ be a one-sided automaton, and define $G_S^{reg}$ as above. Then, $G_S^{reg}$ is an $\omega$-regular game.
\end{proposition}
We now show that it is equivalent with the Church game $G_S$.
\begin{proposition}
  \label{prop:GS_Gf_Gfreg}
  Let $S$ be a one-sided automaton, $G_S$ the corresponding Church game, $G_S^f$ its automaton game, and $G_S^{reg}$ its associated $\omega$-regular game. The following are equivalent:
  \begin{enumerate}
  \item \label{itm:Eve_wins_reg} Eve has a winning strategy in $G_S^{reg}$
  \item \label{itm:Eve_fin_wins_reg} Eve has a finite-memory winning strategy in $G_S^{reg}$
  \item \label{itm:Eve_fin_wins_f} Eve has a finite-memory winning strategy in $G_S^f$
  \item \label{itm:Eve_wins_f} Eve has a winning strategy in $G_S^f$
  \item \label{itm:Eve_wins} Eve has a winning strategy in $G_S$.
\end{enumerate}
\end{proposition}
\begin{proof}
  We start with the chain of implications $\ref{itm:Eve_wins_reg} \Rightarrow \ref{itm:Eve_fin_wins_reg} \Rightarrow \ref{itm:Eve_fin_wins_f} \Rightarrow \ref{itm:Eve_wins_f} \Rightarrow \ref{itm:Eve_wins}$.

  The implication $(\ref{itm:Eve_wins_reg}) \Rightarrow (\ref{itm:Eve_fin_wins_reg})$ holds because $G_S^{reg}$ is $\omega$-regular, and we know that those games are finite-memory determined~\cite{GH82}.

  Then, $(\ref{itm:Eve_fin_wins_reg}) \Rightarrow (\ref{itm:Eve_fin_wins_f})$ follows from the fact that $G_S^{reg}$ is actually harder than $G_S^f$, i.e. $W_S^{reg} \subseteq W_S^f$, because $\Feasible_{\bbN}(R) \subseteq \QFeasible_{\bbN}(R)$.

  $(\ref{itm:Eve_fin_wins_f}) \Rightarrow (\ref{itm:Eve_wins_f})$ is immediate.

  $(\ref{itm:Eve_wins_f}) \Rightarrow (\ref{itm:Eve_wins})$ is exactly Proposition~\ref{prop:Gf_GS_sound}.

  It remains to show that $(\ref{itm:Eve_wins}) \Rightarrow (\ref{itm:Eve_wins_reg})$.
  We proceed by contraposition. Thus, assume that Eve does not have a winning strategy in $G_f^{reg}$. By finite-memory determinacy of games with parity objectives,
  in $G_f^{reg}$
  Adam has a finite-memory winning
  strategy $\lambda_\forall^f : \VA(\VE\VA)^* \to \VE$
  (equiv., $\lambda_\forall^f : \Sigma^* \to \Tst$).
  We show the following:
  \begin{proposition}
    \label{prop:Adam_Greg_GS}
    If Adam has a winning strategy in $G_S^{reg}$, then he has a winning strategy in $G_S$.
  \end{proposition}
  \begin{proof}
  At first, it is not clear how to instantiate it to
  a \emph{data} strategy $\lambda_\forall^\bbN : \Sigma^* \to \bbN$ winning in $G_S$.
  For instance,
  if the strategy $\lambda_\forall^f$ in $G_f^{reg}$ dictates Adam to pick the test $*>r$,
  it is not clear which data should $\lambda_\forall^\bbN$ pick
  ($\v(r)+1$, $\v(r)+2$, more?) because for different strategies of Eve different values may be needed.
  To construct $\lambda_\forall^\bbN$ from $\lambda_\forall^f$ that beats every Eve, we show that for any finite-memory strategy of Adam, there is a uniform bound on the depth of all its r2w chains. This is formalised by the following claim (that we prove afterwards):
  \begin{claim}\label{clm:Adam-is-bounded}
    Let $\lambda^f_\forall$ be a finite-memory strategy of Adam that is winning in
    $G^{reg}_f$. There exists a bound $\Bound\geq 0$ such that for each play $\rho$ consistent with $\lambda_\forall^f$, for each right two-way chain $\gamma$ of the constraint sequence induced by $\rho$ (starting in some $(r,i) \in R \times \bbN$), $\mathit{depth}(\gamma) \leq \Bound$.
  \end{claim}
  Thanks to existence of this uniform bound $\Bound$,
  we can construct $\lambda_\forall^\bbN$ from $\lambda_\forall^f$ as follows.
  First, translate the currently played action-word prefix
  $(\tst_0,\asgn_0) ... (\tst_m, \asgn_m)$
  into a constraint-sequence prefix using Lemma~\ref{lem:mapping-constr}.
  Then apply to it the data-assignment function from Lemma~\ref{lem:data-assign-func}.
  By construction, for each play in $G$ consistent with $\lambda_\forall^{\bbN}$, the
  corresponding run in $S$ is a play consistent with
  $\lambda_\forall^f$ in $G_f^{reg}$.
  As $\lambda_\forall^f$ is winning, this run is not accepting,
  i.e.\ the play is winning for Adam in $G_S$.

  Therefore,
  $\lambda_\forall^{\bbN}$ is a winning Adam's strategy in $G_S$.
  \hfill {\footnotesize End of the proof of Prop.~\ref{prop:Adam_Greg_GS}}
  \end{proof}
  As a consequence, Eve does not have a winning strategy in $G_S$, which means that $(\ref{itm:Eve_wins}) \Rightarrow (\ref{itm:Eve_wins_reg})$.
  \hfill {\footnotesize End of the proof of Prop.~\ref{prop:GS_Gf_Gfreg}}
\end{proof}

We are left to prove Claim~\ref{clm:Adam-is-bounded}.

\paragraph{Boundedness of right two-way chains induced by Adam \linebreak (Proof of Claim~\ref{clm:Adam-is-bounded})}
\begin{proof}[Proof idea]
If Adam has a finite-memory strategy, then if a decreasing right two-way chain $\gamma$ is sufficiently deep, Eve can force Adam to loop in a memory state in a way such that the loop can be iterated while preserving the chain. We can additionally ensure that this chain contains a strictly decreasing or increasing segment. When iterated, this segment makes the chain unfeasible. Indeed, if the segment is decreasing, iterating the loop yields an infinite descending chain in $\bbN$, which is not feasible. The case of an increasing fragment happens when $\gamma$ is decreasing from right to left (recall that it is a two-way chain), so increasing from left to right. When iterated, this yields an infinite increasing chain, which is perfectly fine in $\bbN$. However, it can be bounded from above with the help of $\gamma$: before decreasing from right to left, $\gamma$ has to go from left to right, since it is a right chain (i.e. it is not allowed to go to the left of its initial position). On the strictly increasing segment, this left-to-right prefix is either constant or decreasing, so when the loop is iterated it provides an upper bound for our increasing chain.
\end{proof}
\begin{proof}
  We now move to the formal proof. We could use a Ramsey argument in the spirit of Lemma~\ref{lem:satisf-N-1} to extract an infinite \emph{one-way} chain that is either increasing or decreasing. However, this amounts to breaking a butterfly upon the wheel, and we prefer to rely on a simpler pumping argument, which also gives a finer-grained perception of what is happening there. In particular, it provides a bound $\Bound$ that does not depend on a Ramsey number.

  Thus, let $\lambda_\forall^f$ be a finite-memory strategy of Adam with memory $M$ that is winning in $G_S$. Suppose, towards a contradiction, that there exists a play $\rho$ that is consistent with $\lambda_\forall^f$ and which contains a decreasing right two-way chain of depth $D > \size{M} \cdot 2^{2\size{R}^2}$.
We denote it $\gamma = (r_0,m_0) \tr_0 (r_1,m_1) \tr_1 (r_2,m_2) \tr_2 \dots \tr_{n-1} (r_n,m_n)$, where for all $0 \leq i \leq n$, $\tr_i \in \{>,=\}$, $r_i \in R$ and $m_i \in \bbN$. Given a two-way chain and a position $i \geq m_0$, we define the crossing section at $i$ as the sequence of registers that occur at position $i$, ordered by their appearance in the chain: $\crossSection{\gamma}{i}$ is the maximal subword of $\gamma$ that contains letters of the form $(r,i)$ for some $r \in R$ (see Fig.~\ref{fig:chain_crossing_section}, where we depicted a chain that has two identical crossing sections at positions $i$ and $j$).
   \begin{figure*}[htp]
     \centering
  \begin{subfigure}[t]{0.49\linewidth}
     \begin{tikzpicture}[xscale=0.35,yscale=0.4]
       \draw[->,thick,>=stealth'] (-1,0) -- (10,0);
       \coordinate[label=right:play] (lp) at (10,0);
       \coordinate[label=above:$i$] (li) at (0,0.25);
       \coordinate[label=above:$j$] (lj) at (9,0.25);
       \draw (0,0.25) -- (0,-11);
       \draw (9,0.25) -- (9,-11);
       \coordinate[label=left:$r_3$] (a0) at (0,-1.1);
       \coordinate[label=above left:$r_2$] (a1) at (0,-3);
       \coordinate[label={below left:$r_4$}] (a2) at (0,-4);
       \coordinate[label=above left:$r_1$] (a3) at (0,-8);
       \coordinate[label=below left:$r_6$] (a4) at (0,-10);
       \coordinate[label=above right:$r_3$] (b0) at (9,-1.1);
       \coordinate[label=below right:$r_2$] (b1) at (9,-2);
       \coordinate[label=above right:$r_4$] (b2) at (9,-5);
       \coordinate[label=below right:$r_1$] (b3) at (9,-6);
       \coordinate[label=right:$r_6$] (b4) at (9,-10);
       \coordinate (c1) at (8,-7);
       \draw (a0) -- (b0);
       \draw (b0) .. controls (12,-1) and (12,-1.5) .. (b1);
       \draw (b1) -- (a1);
       \draw (a1) arc (90:270:0.5);
       \draw (a2) -- (b2);
       \draw (b2) arc (90:-90:0.5);
       \draw (b3) .. controls (6,-6) and (6,-7) .. (c1) .. controls  (10,-7) and (8,-7.5) .. (a3);
       \draw (a3) arc (90:270:1);
       \draw[->] (a4) -- (b4);
     \end{tikzpicture}
     \caption[A chain with two identical crossing sections]{A chain with two identical crossing sections.}
     \label{fig:chain_crossing_section}
   \end{subfigure} \hfill %
   \begin{subfigure}[t]{0.49\linewidth}
     \begin{tikzpicture}[xscale=0.25, yscale=0.4]
       \draw[->,thick,>=stealth'] (-1,0.1) -- (20,0.1);
       \coordinate[label=right:play] (lp) at (20,0.25);
       \coordinate[label=above:$i$] (li) at (0,0.25);
       \coordinate[label=above:$j$] (lj) at (9,0.25);
       \coordinate[label=above:$j'$] (lk) at (18,0.25);
       \draw (0,0.25) -- (0,-15);
       \draw (9,0.25) -- (9,-15);
       \draw (18,0.25) -- (18,-15);
       \coordinate[label=above left:$r_3$] (a0) at (0,-1);
       \coordinate[label=above left:$r_2$] (a1) at (0,-4);
       \coordinate[label=below left:$r_4$] (a2) at (0,-5);
       \coordinate[label=above left:$r_1$] (a3) at (0,-12);
       \coordinate[label=below left:$r_6$] (a4) at (0,-14);
       \coordinate[label=above right:$r_3$] (b0) at (9,-1);
       \coordinate[label=below right:$r_2$] (b1) at (9,-3);
       \coordinate[label=above right:$r_4$] (b2) at (9,-6);
       \coordinate[label=below right:$r_1$] (b3) at (9,-10);
       \coordinate[label=below right:$r_6$] (b4) at (9,-14);
       \coordinate[label=above right:$r_3$] (c0) at (18,-1);
       \coordinate[label=below right:$r_2$] (c1) at (18,-2);
       \coordinate[label=above right:$r_4$] (c2) at (18,-7);
       \coordinate[label=below right:$r_1$] (c3) at (18,-8);
       \coordinate[label=below right:$r_6$] (c4) at (18,-14);
       \draw (a0) -- (b0) -- (c0);
       \draw (c0) .. controls (22,-1) and (22,-1.5) .. (c1);
       \draw (c1) -- (b1) -- (a1);
       \draw (a1) arc (90:270:0.5);
       \draw (a2) -- (b2) -- (c2);
       \draw (c2) arc (90:-90:0.5);
       \draw (c3) .. controls (15,-8) and (15,-9) .. (17,-9) .. controls  (19,-9) and (17,-9.5) .. (b3);
       \draw (b3) .. controls (6,-10) and (6,-11) .. (8,-11) .. controls  (10,-11) and (8,-11.5) .. (a3);
       \draw (a3) arc (90:270:1);
       \draw[->] (a4) -- (b4) -- (c4);
     \end{tikzpicture}
     \caption[Iterating a fragment of a play]{Iterating a fragment of a play. We are able to glue the chain since the crossing sections and the order between registers are the same at positions $i$ and $j$.}
     \label{fig:chain_crossing_section_iter}
     \end{subfigure}
   \end{figure*}
   This construction is reminiscent of the techniques that are used to study loops in two-way automata or transducers, hence the name.
   At each position, there are $\size{M}$ distinct memory states for Adam, less than $2^{\size{R}^2}$ many distinct crossing sections and less than $2^{\size{R}^2}$ many possible orderings of the registers. As a consequence there exists two positions $m_0 \leq i < j$ such that $\crossSection{\gamma}{i} = \crossSection{\gamma}{j}$, the memory state of Adam at position $i$ and $j$ is the same, the order between registers at position $i$ is the same at position $j$, and there is at least one occurrence of $>$ in the chain segment. Since $\lambda_\forall^f$ is finite-memory, Eve can repeat her actions between positions $i$ and $j$ indefinitely to iterate this fragment of the play $\rho$. Since the crossing sections match and the order between registers is the same at positions $i$ and $j$, we can glue the chain fragments together to get an infinite two-way chain (see Fig.\ref{fig:chain_crossing_section_iter}), with infinitely many occurrences of $>$. There are two cases:
   \begin{itemize}
   \item There is a fragment that strictly decreases from left to right (as the chain fragment over register $r_4$ in Fig.\ref{fig:chain_crossing_section_iter}). Then, when Eve repeats her actions indefinitely, this yields an infinite descending chain, which means that the play is not feasible (Lemma~\ref{lem:satisf-N}), so Eve wins. This contradicts the fact that $\lambda_\forall^f$ is winning.
   \item All decreasing fragments occur from right to left (as do the fragments over $r_2$ and $r_1$ in Fig.\ref{fig:chain_crossing_section_iter}). Necessarily, the topmost fragment, i.e. the fragment of the register that appears first in $\crossSection{\gamma}{i}$, is left-to-right, since $\gamma$ is a right two-way chain. It is not strictly decreasing, otherwise we are back to the first case. Then, the strictly decreasing fragments are bounded from above by this constant fragment. Iterating the loop yields an infinite increasing chain that is bounded from above, which means that the play is again not feasible, so we again obtain a contradiction.
   \end{itemize}
   Overall, the depth of the decreasing right two-way chains induced by $\lambda_\forall^f$ is uniformly bounded by $\Bound = \size{M} \cdot 2^{2\size{R}^2}$, where $\size{M}$ is the size of Adam's memory.
\end{proof}

We finally have all the cards in hand to show:
\begin{theorem}\label{thm:synt-games-N}
  Let $S = (\Sigma, Q, \qinit, R, \delta, \alpha)$ be a one-sided register automaton over ${(\bbN,\leq)}$.
  \lo[1.~]
  \- \label{itm:synt-games-N-dec} The problem of determining if Eve wins the Church synthesis game $G = (\D,\D,S)$ is decidable in time polynomial in $\size{Q}$ and exponential in $c$ and $\size{R}$. 
  \- \label{itm:synt-games-N-det} $G_S$ is determined, i.e. either Eve or Adam has a winning strategy in $G_S$.
  \ol
\end{theorem}
\begin{proof}
For $(\bbN,\leq)$,
item (1) follows from Proposition~\ref{prop:GS_Gf_Gfreg}
and from the fact that $G_f^{reg}$ is of size polynomial in $|Q|$ and exponential in $|R|$.
Item (2) on determinacy is proven as follows.
Assume Eve loses $G_S$.
By Proposition~\ref{prop:GS_Gf_Gfreg}, Eve loses $G_f^{reg}$.
In the proof of Proposition~\ref{prop:GS_Gf_Gfreg},
we have shown (Proposition~\ref{prop:Adam_Greg_GS}) that in this case Adam has a strategy winning in the original Church game.
As a consequence, our Church games are determined.
\end{proof}
With the help of Proposition~\ref{prop:GS_Gf_Gfreg}, since finite-memory winning strategies of Eve in $G_S^f$ correspond to register transducer implementations (Proposition~\ref{prop:RT_implem_win_strat_GS}), we also get:
\begin{theorem}\label{thm:synt-decidable-N}
  For specifications defined by deterministic input-driven output register
  automata over data domains $(\bbN,\leq)$,
  the register transducer synthesis problem is equivalent with the synthesis problem (for arbitrary implementations) and can be solved in time polynomial in
  $\size{Q}$ and exponential in $c$ and $\size{R}$.
\end{theorem}

\section{Satisfiability of Constraint Sequences in \texorpdfstring{$(\bbN,\leq)$}{(N,≤)}} \label{sec:constraints}
This section studies the problem of checking whether a given infinite sequence of constraints
can be satisfied with values from domain $\bbN$. Recall that constraints and constraint sequences are respectively defined in Definitions~\ref{def:constraint} and~\ref{def:constraint_sequence} on page~\pageref{def:constraint}.
This section's structure is:
\li
\- We start with a simple and relatively known result on satisfiability of constraint sequences in data domain $\bbQ$.
   We then focus completely on $\bbN$.
\- Section~\ref{sec:chain-characterisation} describes conditions on chains that characterise satisfiable constraint sequences (in $\bbN$).
\- Section~\ref{sec:max-automata-characterisation} describes ``max-automata'' characterisation of satisfiable constraint sequences. The max-automaton characterisation checks the conditions on chains introduced in Section~\ref{sec:chain-characterisation}.
\- In the study of Church synthesis games on $\bbN$,
   the crucial role play lasso-shaped constraint sequences and their satisfiability.
   We rely on them when proving Proposition~\ref{prop:GS_Gf_Gfreg}.
   The satisfiability of such sequences is the focus of Section~\ref{sec:lasso-characterisation},
   which shows that the regularity of sequences allows for characterisation of the satisfiability
   using classical $\omega$-regular automata instead of max-automata.
   Thus, in the context of Church synthesis games, the max-automaton characterisation is not used.
\- Section~\ref{sec:data-assignment-function} shows that ``depth-bounded'' constraint sequences can be mapped to satisfying valuations on-the-fly:
   such a data assignment function is used when proving the decidability of Church synthesis games (Proposition~\ref{prop:GS_Gf_Gfreg}),
   namely, to show that winning Adam's strategies in abstracted finite-alphabet games can be instantiated to winning data Adam's strategies in Church synthesis games.
\il

\paragraph*{Satisfiability of constraint sequences in $\bbQ$}
Before proceeding to our main topic of satisfiability of constraint sequences in $\bbN$,
we describe, for completeness, similar results for $\bbQ$.

The following result is glimpsed in several places (e.g.\ in~\cite[Appendix C]{ST11}):
a constraint sequence is satisfiable in $\bbQ$ iff it is consistent.
This is a consequence of the following property which holds because $\bbQ$ is dense:
for every constraint $C$ and $\v\in\bbQ^R$ such that $\v \models C_{|R}$,
there exists $\v'\!\! \in \!\bbQ^{R'}$\! such that $\v\!\cup\!\v' \!\models C$.
Consistency can be checked by comparing every two consecutive constraints of the sequence.
Thus, it is not hard to show that
consistent -- hence satisfiable -- constraint sequences in $\bbQ$
are recognisable by deterministic parity automata.

\begin{theorem} \label{thm:0-satisf-Q}
  There is a deterministic parity automaton with two colors and of size exponential in $|R|$
  that accepts exactly all constraint sequences satisfiable (or $0$-satisfiable) in $\bbQ$.
\end{theorem}

\manufixed{R1: p8 Thm 1: Actually, the condition is Buchi, maybe this could be remarked in the proof?}{mentioned in the proof that safety (aka looping) acceptance suffices}

To prove the result,
we first show that a constraint sequence in $\bbQ$ is satisfiable iff it is consistent,
then we construct an automaton checking the consistency.

\begin{lemma}\label{lem:satisf-Q}
  Let $R$ be a set of registers and $\D = \bbQ$.
  A constraint sequence $C_0 C_1\dots$ is satisfiable iff it is consistent.
  It is $0$-satisfiable iff it is consistent and ${C_0}_{|R} = \{r_1=r_2\mid r_1,r_2 \in R\}$.
\end{lemma}

\begin{proof}
  Direction $\Impl$ is simple for both claims, so we only prove direction $\Implied$.

  Consider the first claim, direction $\Implied$.
  Assume the sequence is consistent.
  We construct $\v_0 \v_1 \dots \in (\bbQ^R)^\omega$
  such that
  $\v_i\cup\v'_{i+1} \models C_i$ for all $i$.
  The construction proceeds step-by-step and relies on the following fact ($\dagger$):
  for every constraint $C$ and $\v\in\bbQ^R$ such that $\v \models C_{|R}$,
  there exists $\v' \in \bbQ^{R'}$ such that $\v\cup\v' \models C$.
  Then define $\v_0,\v_1\dots$ as follows:
  start with an arbitrary $\v_0$ satisfying $\v_0 \models {C_0}_{|R}$.
  Given $\v_i \models {C_i}_{|R}$, let $\v_{i+1}$ be any valuation in $\bbQ^R$ that satisfies $\v_i\cup\v'_{i+1} \models C_i$
  (it exists by ($\dagger$)).
  Since $\v_{i+1} \models {C_i}_{|R'}$, and $unprime({C_i}_{|R'}) = {C_{i+1}}_{|R}$ by consistency,
  we have $\v_{i+1} \models {C_{i+1}}_{|R}$, and we can apply the argument again.

  We are left to prove the fact ($\dagger$).
  The constraint $C$ completely specifies the order on $R \cup R'$,
  while $\v$ fixes the values for $R$, and $\v \models C_{|R}$.
  Thus, we can uniquely order registers $R'$ and the values $\{\v(r)\mid r\in R\}$ of $R$ on the $\bbQ$-line.
  Since $\bbQ$ is dense, it is always possible to choose the values for $R'$ that respect this order;
  we leave out the details.

  Consider the second claim, direction $\Implied$.
  Since $C_0 C_1\dots$ is consistent, then by the first claim, it is satisfiable, hence it has a witnessing valuation $\v_0 \v_1 \dots$.
  The constraint $C_0$ requires all registers in $R$ to start with the same value,
  so define $\d = \v_0(r)$ for arbitrary $r\in R$.
  Let $\v'_0 \v'_1 \dots$ be the valuations decreased by $\d$: $\v'_i(r) = \v_i(r) - \d$ for every $r \in R$ and $i\geq 0$.
  The new valuations satisfy the constraint sequence
  because the constraints in $\bbQ$ are invariant under the shift
  (follows from the fact: if $r_1<r_2$ holds for some $\v \in \D^R$, then it holds for any $\v-\d$ where $\d \in \D$).
  The equality $\v'_0 = 0^R$ means that the constraint sequence is $0$-satisfiable.
\end{proof}

We now prove Theorem~\ref{thm:0-satisf-Q}.

\begin{proof}[Proof of Theorem~\ref{thm:0-satisf-Q}]
  \newcommand\Constr{\mathit{Constr}}
  The sought automaton has an alphabet consisting of all constraints.
  By Lemma~\ref{lem:satisf-Q}, for satisfiability, it suffices to construct the automaton that checks consistency,
  namely that every two adjacent constraints $C_1 C_2$ in the input word satisfy the condition $unprime({C_1}_{|R'}) = {C_2}_{|R}$.
  We only sketch the construction.
  The automaton memorises the atoms ${C_1}_{|R'}$ of the last constraint $C_1$ into its state,
  and on reading the next constraint $C_2$ the automaton checks that $unprime({C_1}_{|R'}) = {C_2}_{|R}$.
  If this holds, the automaton transits into the state that remembers ${C_2}_{|R'}$;
  if the check fails, the automaton goes into the rejecting sink state.
  And so on.
  The automaton for checking $0$-satisfiability additionally checks that ${C_0}_{|R} = \{r=s\mid r,s \in R\}$.
  The number of states is exponential in $|R|$, the number of colors is $2$, and
  in fact the so-called safety (aka looping) acceptance suffices.
\end{proof}

For the rest of this section, we focus on domain $\bbN$.


\subsection{Chains characterise satisfiability of constraint sequences}
\label{sec:chain-characterisation}
In this section we prove the characterisation of satisfiable constraint sequences that we used to $\omega$-regularly approximate the automaton game over $(\bbN, \leq)$ (Section~\ref{sec:AS_N}). Recall that chains are defined in Definition~\ref{def:chains} on page~\pageref{def:chains}.

While the target characterisation relies on one-way chains, we start by presenting a characterisation using two-way chains:
such chains compare register values forwards and backwards in time.
This characterisation is intuitive and easy to prove but difficult to implement using \emph{one}-way automata.
Therefore, later we provide an alternative characterisation using \emph{one}-way chains
which read constraint sequences in forward direction only.
The lifting from two-way to one-way chains is done using Ramsey theorem~\cite{RamseyTheorem}.
A similar proof strategy is employed in~\cite[Appendix C]{ST11},
but our notion of chains is simpler, and we describe the previously missing application of Ramsey theorem.
We start with the definitions of two-way chains, then describe the characterisations in Lemmas~\ref{lem:satisf-N}~and~\ref{lem:satisf-N-1}.
\begin{lemma}\label{lem:satisf-N}
  A consistent constraint sequence is satisfiable in $\bbN$ iff
  \li
  \-[\condAT.] it has no infinite-depth two-way chains, and
  \-[\condBT.] every ceiled two-way chain has a bounded depth\\
           (i.e., there exists $\Bound \in \bbN$ such that the depth of
            every ceiled two-way chain is $\leq\Bound$).
  \il
\end{lemma}
\manufixed{R2:  Lemma 3: Give the type of B in the Lemma. In the outline of the proof, A and B occur non-primed when they should be primed. To the reader, it is also not clear that the detailed proof follows after the outline. In the detailed proof, I am wondering if there always exists in ceiled chain and where $\condBT$ has been used.}{The condition B2 was used in the first paragraph of the proof ("data value in N does not exist if A2 or B2 are violated"); in the other direction, we added that B2 is needed to show that "the largest depth exists"}

\begin{proof}
  The direction $\Impl$ is proven by contradiction:
  if $\condAT$ is not satisfied,
  then one needs infinitely many values below the maximal initial value of a register to satisfy the sequence,
  which is impossible in $\bbN$.
  Similarly for $\condBT$.
  We now state this formally.
  Suppose a constraint sequence $C_0 C_1...$ is satisfiable by some valuations $\v_0 \v_1...$.
  Towards a contradiction, assume that $\condAT$ does not hold, i.e. there is an infinite decreasing two-way chain $\chi = (r_0,m_0)(r_1,m_1)...$.
  Let $\v_{m_0}(r_0) = \d^\star$ be the data value at the start of the chain.
  Each decrease $(r_i,m_i)>(r_{i+1},m_{i+1})$ in the chain $\chi$ requires the data to decrease as well:
  $\v_i(r_i)>\v_{i+1}(r_{i+1})$, so there must be an infinite number of data values
  between $\d^\star$ and $0$, which is impossible in $\bbN$. Hence $\condAT$ must hold.
  Now consider $\condBT$.
  If there are no ceiled chains, we are done, so assume there is at least one ceiled chain.
  Then there exists a maximal stable chain, by definition.
  Let $\d^\star$ be the value of the registers in the maximal stable chain.
  All ceiled chains lie below the maximal stable chain,
  therefore the values of their registers are bounded by $\d^\star$.
  Thus the depth of each such a chain is bounded by $\Bound = \d^\star$, so $\condBT$ holds.

  The direction $\Implied$.
  Given a consistent constraint sequence $C_0 C_1...$ satisfying $\condAT$ and $\condBT$,
  we construct a sequence of register valuations $\v_0 \v_1...$ such that $\v_i \cup \v'_{i+1} \models C_i$ for all $i \geq 0$
  (recall that $\v' = \{r' \mapsto \v(r)\mid r \in R\}$).
  For a register $r$ and moment $i \in \bbN$,
  let $d(r,i)$ be the largest depth of two-way chains from $(r,i)$;
  such a number exists by assumption $\condBT$;
  it is not $\infty$ by assumption $\condAT$;
  it can be $0$.
  \emph{Then, for every $r \in R$ and $i \in \bbN$, set $\v_i(r) = d(r,i)$.}

  We now prove that for all $i$, the satisfaction $\v_i\cup\v'_{i+1} \models C_i$ holds,
  i.e.\ all atoms of $C_i$ are satisfied.
  Pick an arbitrary atom $t_1 \bowtie t_2$ of $C_i$, where $t_1,t_2 \in R\cup R'$.
  Define $m_{t_1} = i+1$ if $t_1$ is a primed register, else $m_{t_1} = i$; similarly define $m_{t_2}$.
  There are two cases.
  \li
  \- $t_1 \bowtie t_2$ is $t_1=t_2$.
     Then the deepest chains from $(t_1,m_{t_1})$ and $(t_2,m_{t_2})$ have the same depth,
     $d(t_1,m_{t_1}) = d(t_2,m_{t_2})$, and hence $\v_i \cup \v'_{i+1}$ satisfies the atom.
  \- $t_1\bowtie t_2$ is $t_1 > t_2$.
     Then, any chain $(t_2,m_{t_2})...$ from $(t_2,m_{t_2})$ can be prefixed by $(t_1,m_{t_1})$ to create the deeper chain $(t_1,m_{t_1})>(t_2,m_{t_2})...$.
     Thus, $d(t_1,m_{t_1}) > d(t_2,m_{t_2})$, therefore $\v_i\cup\v'_{i+1}$ satisfies the atom.
  \il
  This concludes the proof.
\end{proof}

\begin{remark*}
  The proof describes a data-assignment function which maps a sequence of constraints to a sequence of valuations satisfying it. Such functions are widespread, see e.g.~\cite[Lemma~C.7]{ST11}~or~\cite[Lemma 15]{CKL13}. Later in Section~\ref{sec:data-assignment-function} we describe a different kind of data-assignment function, which does not see the whole constraint sequence beforehand but only the prefix read so far. This changes how much the register values get separated from each other: from $\Bound$ in the above proof to approx.\ $2^B$.
\end{remark*}

The previous lemma characterises satisfiability in terms of two-way chains,
but our final goal is the characterisation by automata.
It is hard to design a \emph{one}-way automaton tracing \emph{two}-way chains,
so we lift the previous lemma to one-way chains.
\manufixed{R2:  Page 10, line 44: You should explain the final goal of building an automaton way earlier as an overview so the reader knows what you are building up to.}{added an outline to this subsection}
\begin{lemma}\label{lem:satisf-N-1}
  A consistent constraint sequence is satisfiable in $\bbN$ iff
  \li
  \-[\condAO.] it has no infinitely decreasing one-way chains, and
  \-[\condBO.] every ceiled one-way chain has a bounded depth\\
  (i.e., there exists $\Bound \in \bbN$ such that the depth of
  every ceiled one-way chain is $\leq\Bound$).
  \il
\end{lemma}

We describe a proof idea then provide a full proof.
\begin{proof}[Proof idea]
  We start from Lemma~\ref{lem:satisf-N} and
  show that hypotheses $\condAT$ and $\condBT$ can be refined to $\condAO$ and $\condBO$ respectively.
  From an infinite (decreasing) two-way chain, we can
  always extract an infinite decreasing one-way chain, since
  two-way chains are infinite to the right and not to the
  left. Hence, for every moment $i$, there always exists a moment $j>i$
  such that one register of the chain is smaller at step $j$ than a
  register of the chain at step $i$.
  Then, given a sequence of ceiled two-way chains of unbounded
  depth, we are able to construct a sequence of one-way chains of
  unbounded depth. This construction is more difficult than in the above case.
  Indeed, even though there are by
  hypothesis deeper and deeper ceiled two-way chains, they may start at later
  and later moments in the constraint sequence and go to the left. Thus, one
  cannot simply take an arbitrarily deep two-way chain and extract an arbitrarily deep one-way chain from it.
  However, we demonstrate, using a Ramsey argument, that it is
  still possible to extract arbitrarily deep one-way chains since the
  two-way chains are not completely independent.
\end{proof}

\begin{proof}
  Thanks to Lemma~\ref{lem:satisf-N}, it suffices to show that $\condAO \Leftrightarrow \condAT$ and $\condBO \Leftrightarrow \condBT$.
  The implications $\condAT \Impl \condAO$ and $\condBT \Impl \condBO$ follow from the definitions of chains.

  Now, let us show that $\neg \condAT \Impl \neg \condAO$: let $C_0 C_1 \dots$ be a consistent constraint sequence, and assume that it has an infinite two-way chain $\chi = (r_a,i) \dots$.
  We then construct an infinite descending one-way chain $\chi'$.
  The construction is illustrated in Figure~\ref{fig:proof:eq:A}.
  Our one-way chain $\chi'$ starts in $(r_a,i)$.
  The area on the left from $i$-timeline contains $i \cdot |R|$ points,
  but $\chi$ has an infinite depth hence at some point it must go to the right from $i$.
  Let $r_b$ be the smallest register visited at moment $i$ by $\chi$;
  we first assume that $r_b$ is different from $r_a$ (the other case is later).
  Let $\chi$ go $(r_b,i) \triangleright (r',i+1)$.
  We append this to $\chi'$ and get $\chi' = (r_a,i) > (r_b,i) \triangleright (r',i+1)$.
  If $r_a$ and $r_b$ were actually the same, so the chain $\chi$ moved $(r_a,i) \triangleright (r',i+1)$,
  then we would append only $(r_a,i) \triangleright (r',i+1)$.
  By repeating the argument from the point $(r',i+1)$,
  we construct the infinite descending one-way chain $\chi'$.
  Hence $\neg \condAO$ holds.

  \begin{figure}[tb]
    \centering
    \begin{tikzpicture}[bend angle=45, auto, node distance=2.75cm,scale=0.15,nodes={font=\footnotesize}]
      \tikzstyle{every state}=[text=black]
      \tikzstyle{dot}=[circle,draw=black,fill=black,minimum size=0.8mm,inner sep=0]

      \tikzset{every edge/.append style={font=\small}} 


      \draw[->,black!40] (0, 4) -- (0, 26);
      \draw[->,black!40] (0, 4) -- (28, 4);
      \node at (-3,26) {\footnotesize order};
      \node at (29,3) {\footnotesize time};

      \foreach \x in {1,...,6}
      {
        \draw[-,black!20] (\x*4,4)--(\x*4,24);
      }

      \node at (12,3) {$i$};
      \node at (16.2,2.93) {$i\!+\!1$};

      \node[dot,label={[label distance=-1mm]above: $r_a$}] (1) at (12,24){};
      \node[dot] (2) at (8,23){};
      \node[dot] (3) at (4,22){};
      \node[dot] (4) at (8,21){};
      \node[dot] (5) at (4,20){};
      \node[dot] (6) at (0,19){};
      \node[dot] (7) at (4,18){};
      \node[dot] (8) at (8,17){};
      \node[dot] (9) at (12,16){};

      \node[dot] (10) at (16,15){};
      \node[dot] (11) at (12,14){};
      \node[dot] (12) at (8,13){};
      \node[dot,label={[label distance=-0.8mm]below: $r_b$}] (13) at (12,12){};

      \node[dot,label={[label distance=-0.8mm]right: $r'$}] (14) at (16,12){};
      \node[dot] (15) at (16,10){};
      \node[dot] (16) at (20,9){};
      \node[dot] (17) at (24,8){};
      \node[dot] (18) at (20,7){};
      \node[dot] (19) at (24,6){};

      \draw[line width=0.5mm,opacity=0.3] plot [smooth,tension=0.5] coordinates{(1) (2) (3) (4) (5) (6) (7) (8) (9) (10) (11) (12) (13) (14)};
      \draw[line width=0.5mm,opacity=0.3] plot [smooth] coordinates{(14) (15)};
      \draw[line width=0.5mm,opacity=0.3] plot [smooth,tension=0.5] coordinates{(15) (16) (17) (18) (19)};

      \begin{scope}[on background layer]
        \draw[-,color=blue,line width=0.6mm,opacity=0.5]
             plot [smooth] coordinates { (1)  (13) };
        \draw[-,color=blue,line width=0.6mm,opacity=0.5,transform canvas={yshift=0.3mm}]
             plot coordinates {($(13)+(0.2,0.08)$)  ($(14)+(0.2,0.08)$)  ($(15)+(0.2,0.08)$)  ($(16)+(0,0.08)$)  (18) (19)};
      \end{scope}
    \end{tikzpicture}
    \caption{Proving the direction $\neg \condAT \Impl \neg \condAO$ in Lemma~\ref{lem:satisf-N-1}.
             The two-way chain is in grey, the constructed one-way chain is in blue.}
    \label{fig:proof:eq:A}
  \end{figure}
\manufixed{R2: Figure 3: A black line connected the same register before in Figure 2 while colors marked chains. Here, the black line seems to have a different semantics?}{changed to grey}
  Now, let us show $\neg \condBT \Impl \neg \condBO$.
  Given a sequence of ceiled two-way chains of unbounded depth,
  we need to create a sequence of ceiled one-way chains of unbounded depth.
  We extract a witnessing one-way chain of a required depth from a sufficiently deep two-way chain.
  To this end, we represent the two-way chain as a clique with colored edges,
  and whose one-colored subcliques represent all one-way chains.
  We then use the Ramsey theorem that says a monochromatic subclique of a required size always exists if a clique is large enough.
  From the monochromatic subclique we extract the sought one-way chain.

  The Ramsey theorem~\cite{RamseyTheorem} is about clique graphs with colored edges.
  For the number $n \in \bbN$ of vertices,
  let $K_n$ denote the clique graph and let $E_{K_n}$ be its set of edges.
  Then, we let $color\: E_{K_n} \to \{1,\dots,\#c\}$ be an edge-coloring function,
  where $\#c$ is the number of edge colors in the clique.
  A clique is \emph{monochromatic} if all its edges have the same color ($\#c=1$).
  The Ramsey theorem says:
  \begin{quote}
    \emph{%
    Fix the number $\#c$ of edge colors.
    $(\forall n)(\exists l)(\forall color\: E_{K_l}\to \{1,\dots,\#c\})$:
    there exists a monochromatic subclique of $K_l$ with $n$ vertices.
    The number $l$ is called the Ramsey number for $(\#c,n)$.}
\end{quote}
\manufixed{R2: Page 12, line 41: You quantify 'color' but you never use it afterwards.}{added clarification after the theorem, saying that the coloring function "color" does not affect the Ramsey number.}
  I.e., for any given $n$,
  there is a sufficiently large size $l$ such that any colored clique of this size contains a monochromatic subclique of size $n$.
  Ramsey numbers depend on the number $\#c$ of colors and size $n$ of the clique
  and are independent of a coloring function $color$.
  We use the theorem with three colors only: $\#c=3$.

  \begin{figure}[tb]
    \centering

    \begin{subfigure}[t]{.45\textwidth}
      \begin{tikzpicture}[->, >=stealth', bend angle=45, auto, node distance=2.75cm,xscale=1.5,scale=0.1,nodes={font=\footnotesize}]
        \tikzstyle{every state}=[text=black]
        \tikzstyle{dot}=[circle,draw=black,fill=black,minimum size=0.8mm,inner sep=0mm]

        \tikzset{every edge/.append style={font=\small}} 


        \draw[->,black!40] (2, -2) -- (2, 28);
        \draw[->,black!40] (2, -2) -- (20, -2);
        \node at (-2,28) {\footnotesize order};
        \node at (22,-3) {\footnotesize time};

        \foreach \x in {1,...,4}
        {
          \draw[-,black!15] (\x*4,-2)--(\x*4,28);
        }


        \node[dot,label={[label distance=-1mm]left:\scriptsize 1}] (a) at (16,28){};
        \node[dot,label={[label distance=-1mm]left:\scriptsize 2}] (b) at (12,24){};
        \node[dot,label={[label distance=-1mm]left:\scriptsize 3}] (c) at (16,20){};
        \node[dot,label={[label distance=-1mm]left:\scriptsize 4}] (d) at (12,16){};
        \node[dot,label={[label distance=-1mm]left:\scriptsize 5}] (e) at (8,12){};
        \node[dot,label={[label distance=-1mm]left:\scriptsize 6}] (f) at (12,8){};
        \node[dot,label={[label distance=-1mm]left:\scriptsize 7}] (g) at (8,4){};
        \node[dot,label={[label distance=-1mm]left:\scriptsize 8}] (h) at (4,0){};

        \draw[-,line width=0.5mm,opacity=0.3] (a) -- (b) -- (c) -- (d) -- (e) -- (f) -- (g) -- (h);

      \end{tikzpicture}
      \caption{A given two-way chain (wo stuttering)}
    \end{subfigure}
    \begin{subfigure}[t]{.45\textwidth}
      \setcounter{subfigure}{3}
      \begin{tikzpicture}[->, >=stealth', bend angle=45, auto, node distance=2.75cm,xscale=1.5,scale=0.1,nodes={font=\footnotesize}]
        \tikzstyle{every state}=[text=black]
        \tikzstyle{dot}=[circle,draw=black,fill=black,minimum size=0.8mm,inner sep=0]

        \tikzset{every edge/.append style={font=\small}} 


        \draw[->,black!40] (2, -2) -- (2, 28);
        \draw[->,black!40] (2, -2) -- (20, -2);
        \node at (-2,28) {\footnotesize order};
        \node at (22,-3) {\footnotesize time};

        \foreach \x in {1,...,4}
        {
          \draw[-,black!15] (\x*4,-2)--(\x*4,28);
        }

        \node[dot,label={[label distance=-1mm]left:\scriptsize 1}] (a) at (16,28){};
        \node[dot,label={[label distance=-1mm]left:\scriptsize 2}] (b) at (12,24){};
        \node[dot,label={[label distance=-1mm]left:\scriptsize 3}] (c) at (16,20){};
        \node[dot,label={[label distance=-1mm]left:\scriptsize 4}] (d) at (12,16){};
        \node[dot,label={[label distance=-1mm]left:\scriptsize 5}] (e) at (8,12){};
        \node[dot,label={[label distance=-1mm]left:\scriptsize 6}] (f) at (12,8){};
        \node[dot,label={[label distance=-1mm]left:\scriptsize 7}] (g) at (8,4){};
        \node[dot,label={[label distance=-1mm]left:\scriptsize 8}] (h) at (4,0){};

        \draw[-,line width=0.5mm,opacity=0.3] (a) -- (b) -- (c) -- (d) -- (e) -- (f) -- (g) -- (h);

        \begin{scope}[on background layer]
          \draw[-,color=red,line width=0.6mm,opacity=0.5] (a) -- (b) -- (e) -- (h);
        \end{scope}
      \end{tikzpicture}
      \caption{Constructed increasing one-way chain}
      \label{fig:proof:equiv:B:1-chain}
    \end{subfigure}%
    \\
    \begin{subfigure}[t]{.42\textwidth}
      \setcounter{subfigure}{1}
      \begin{tikzpicture}[-, >=stealth', bend angle=45, auto, node distance=2.75cm,xscale=1.5,scale=0.1,nodes={font=\footnotesize}]
        \tikzstyle{every state}=[text=black]
        \tikzstyle{dot}=[circle,draw=black,fill=black,minimum size=0.8mm,inner sep=0]

        \tikzset{every edge/.append style={font=\small}} 

        \draw[->,black!40] (2, -2) -- (2, 28);
        \draw[->,black!40] (2, -2) -- (20, -2);
        \node at (-2,28) {\footnotesize order};
        \node at (22,-3) {\footnotesize time};

        \foreach \x in {1,...,4}
        {
          \draw[-,black!15] (\x*4,-2)--(\x*4,28);
        }

        \node[dot] (a) at (16,28){};
        \node[dot] (b) at (12,24){};
        \node[dot] (c) at (16,20){};
        \node[dot] (d) at (12,16){};
        \node[dot] (e) at (8,12){};
        \node[dot,draw=black!40,fill=black!40] (f) at (12,8){};
        \node[dot,draw=black!40,fill=black!40] (g) at (8,4){};
        \node[dot,draw=black!40,fill=black!40] (h) at (4,0){};

        \draw[red] (a) -- (b) -- (e);
        \draw[red] (c) -- (d) -- (e);
        \draw[red] (a) -- (e);
        \draw[red] (a) -- (d);

        \draw[blue,semithick] (b) -- (c);
        \draw[green,semithick] (a) -- (c);
        \draw[green,semithick] (b) -- (d);
        \draw[red] (c) to[bend left=15] (e);

      \end{tikzpicture}
      \caption{Clique: shown the edges for the top $5$ points only. Try completing the rest.}
      \label{fig:proof:equiv:B:clique}
    \end{subfigure}
    \hspace{3mm}
    \begin{subfigure}[t]{.46\textwidth}
      \begin{tikzpicture}[-, >=stealth', bend angle=45, auto, node distance=2.75cm,xscale=1.5,scale=0.1,nodes={font=\footnotesize}]
        \tikzstyle{every state}=[text=black]
        \tikzstyle{dot}=[circle,draw=black,fill=black,minimum size=0.8mm,inner sep=0]

        \tikzset{every edge/.append style={font=\small}} 

        \draw[->,black!40] (2, -2) -- (2, 28);
        \draw[->,black!40] (2, -2) -- (20, -2);
        \node at (-2,28) {\footnotesize order};
        \node at (22,-3) {\footnotesize time};

        \foreach \x in {1,...,4}
        {
          \draw[-,black!15] (\x*4,-2)--(\x*4,28);
        }

        \node[dot,label={[label distance=-1mm]left:\scriptsize 1}] (a) at (16,28){};
        \node[dot,label={[label distance=-1mm]left:\scriptsize 2}] (b) at (12,24){};
        \node[dot,label={[label distance=-1mm]left:\scriptsize 3}] (c) at (16,20){};
        \node[dot,label={[label distance=-1mm]left:\scriptsize 4}] (d) at (12,16){};
        \node[dot,label={[label distance=-1mm]left:\scriptsize 5}] (e) at (8,12){};
        \node[dot,label={[label distance=-1mm]left:\scriptsize 6}] (f) at (12,8){};
        \node[dot,label={[label distance=-1mm]left:\scriptsize 7}] (g) at (8,4){};
        \node[dot,label={[label distance=-1mm]left:\scriptsize 8}] (h) at (4,0){};

        \draw[red] (a) -- (b) -- (e);
        \draw[red] (a) -- (e);

        \draw[red] (e) -- (h);
        \draw[red] (b) to [bend right=15] (h);
        \draw[red] (a) to (h);

      \end{tikzpicture}
      \caption{Monochromatic subclique with elements $1$, $2$, $5$, $8$}
      \label{fig:proof:equiv:B:subclique}
    \end{subfigure}
    \caption{Proving the direction $\neg \condBT \Impl \neg \condBO$ in Lemma~\ref{lem:satisf-N-1}}
    \label{fig:proof:equiv:B}
  \end{figure}

  Given a sequence of two-way chains of unbounded depth,
  we show how to build a sequence of one-way chains of unbounded depth.
  Suppose we want to build a one-way chain of depth $n$,
  and let $l$ be the Ramsey number for $(3,n)$.
  Since the two-way chains from the sequence have unbounded depth,
  there is a two-way chain $\chi$ of depth $l$.
  From it we construct the following colored clique (the construction is illustrated in Figure~\ref{fig:proof:equiv:B}).
  \li
  \- Remove stuttering elements from $\chi$:\manufixed{Page 12, line 56: After removing stuttering elements from X the resulting sequence might skip points in time, correct? What are the implications of this?}{The main implication is that the resulting constructed sequence does not necessarily satisfy the definition of one-way chains, since it may skip time points. However, that is not an issue, as we can use removed points to fill in the missing gaps.}
     whenever $(r_i,m_i)=(r_{i+1},m_{i+1})$ appears in $\chi$, remove $(r_{i+1},m_{i+1})$.
     We repeat this until no stuttering elements appear.
     Let $\chi_> = (r_1,m_1) > \dots > (r_l,m_l)$ be the resulting sequence;
     it is strictly decreasing, and contains $l$ pairs (the same as the depth of the original $\chi$).
     Note the following property $(\dagger)$:
     for every not necessarily adjacent $(r_i,m_i)>(r_j,m_j)$,
     there is a one-way chain $(r_i,m_i)\dots(r_j,m_j)$;
     it is decreasing if $m_i<m_j$, and increasing otherwise;
     its depth is at least $1$.
     The resulting sequence may skip points in time, but this -- as will be explained later -- does not affect the construction.

  \- The elements $(r,m)$ of $\chi_>$ serve as the vertices of the colored clique.
     The edge-coloring function is: for every not necessarily adjacent $(r_a,m_a)>(r_b,m_b)$ in $\chi_>$,
     let
     $color\big((r_a,m_a), (r_b,m_b)\big)$ be
     ${\color{red}\nearrow}$ if $m_a < m_b$,
     ${\color{blue}\searrow}$ if $m_a > m_b$,
     ${\color{green}\,\downarrow}$ if $m_a = m_b$.
     Thus, we assign a color to an edge between every two vertices.
     Figure~\ref{fig:proof:equiv:B:clique} gives an example.
  \il
  By applying the Ramsey theorem,
  we get a monochromatic subclique of size $n$
  with vertices $V \subseteq \{(r_1,m_1),\dots,(r_l,m_l)\}$.
  Its color cannot be $\color{green}\downarrow$ when $n > |R|$,
  because a timeline has maximum $|R|$ points.
  Suppose the subclique's color is ${\color{red}\nearrow}$ (the case of $\color{blue}\searrow$ is similar).
  We build the increasing sequence $\chi^\star = (r_1^\star,m_1^\star) < \dots < (r_n^\star,m_n^\star)$,
  where $m_i^\star < m_{i+1}^\star$ and $(r_i^\star,m_i^\star) \in V$
  for every $i$.
  The sequence $\chi^\star$ may not satisfy the definition of one-way chains,
  because the removal of stuttering elements that we performed at the beginning can cause time jumps i.e.\ $m_{i+1} > m_i+1$.
  But it is easy---relying on the property $(\dagger)$---%
  to construct the one-way chain $\chi^{\star\star}$ of depth $n$ from $\chi^\star$
  by inserting the necessary elements between $(r_i,m_i)$ and $(r_{i+1},m_{i+1})$.
  The case when the subclique has color $\color{blue}\searrow$, the resulting constructed chain is decreasing.
\manufixed{R2: Page 13 come online 15: Is the 'color' assignment transitive, i.e., do we assign a color two nonadjacent elements of $X_>$?}{the color-assignment function assigns colors for edges between all elements of X>. We added a clarification.}

  Thus, for every given $n$, we constructed either a decreasing or increasing ceiled one-way chain of depth $n$. In other words, a sequence of such chains of unbounded depth.
  Hence $\neg \condBO$ holds, which concludes the proof.
\end{proof}

The next easy lemma (first stated on page~\pageref{lem:0-satisf-N-1}) refines the characterisation to $0$-satisfiability:
\lemzsatisfNo*
\begin{proof}
  \label{pf:lem:0-satisf-N-1}
  Direction $\Impl$.
  The first two items follow from Lemma~\ref{lem:satisf-N-1}; the third one follows from the definition of satisfiability.
  Consider the last item: suppose there is such a chain.
  Then, at the moment when the chain strictly decreases and goes to some register $s$,
  the register $s$ would need to have a value below $0$, which is impossible in $\bbN$.

  Direction $\Implied$.
  The first two items are exactly $\condAO$ and $\condBO$ from Lemma~\ref{lem:satisf-N-1}, so the sequence is satisfiable,
  hence it also satisfies the conditions $\condAT$ and $\condBT$ from Lemma~\ref{lem:satisf-N}.
  In the proof of Lemma~\ref{lem:satisf-N},
  we showed that in this case the following valuations $\v_0 \v_1...$ satisfy the sequence:
  for every $r \in R$ and moment $i \in \bbN$,
  set $\v_i(r)$ (the value of $r$ at moment $i$) to the largest depth of the two-way chains starting in $(r,i)$.
  We construct $\v_0 \v_1...$ as above, and get a witness of satisfaction of our constraint sequence.
  Note that at moment $0$, $\v_0 = 0^R$, by the last item. Hence the constraint sequence is $0$-satisfiable.
\end{proof}

\paragraph{Action words and constraint sequences}
In this section, we provide the proof of the following lemma, stated on page~\pageref{lem:mapping-constr}:
\lemMappingConstr*
\begin{proof}\label{page:constr-description}
  Given $\pi$, $\tst$, $\asgn$, we define the mapping $constr : (\pi,\tst,\asgn) \mapsto C$ as follows.
  The definition is as expected, but we should be careful about handling of $r_d$, it is the last item.
  \li
  \- The constraint $C$ includes all atoms of the state constraint $\pi$ (that relates the registers at the beginning of the step).
  \- Recall that neither $\tst$ nor $\asgn$ talk about $r_d$.
     For readability, we shorten $(t_1 \bowtie t_2) \in C$ to simply $t_1 \bowtie t_2$,
     $(* \bowtie r) \in \tst$ to $* \bowtie r$,
     and $a \leq b$ means $(a<b) \lor (a=b)$.

  \- We define the order at the end of the step as follows. For every two different $r,s \in R$:
     \li
     \- $r' = s'$ iff
          $(r = s) \land r,s\not\in\asgn$ or
          $r \in \asgn \land (*=s)$ or
          $r,s\in\asgn$;

     \- $r' < s'$ iff
          $(r<s) \land r,s\not\in\asgn$ or
          $(*<s) \land r \in \asgn \land s \not\in\asgn$;

     \- $r' = r_d'$ iff $(r=*)$ or $r \in \asgn$;

     \- $r' \bowtie r_d'$ iff
          $(r\bowtie *) \land r\not\in\asgn$, for $\bowtie\, \in \{<,>\}$;
     \il
  \- So far we have defined the order of the registers at the beginning and the end of the step.
     Now we relate the values between these two moments. For every $r \in R$:
     \li
     \- $r = r'$ iff $r \not\in \asgn$ or $r \in \asgn \land (*=r)$;
     \- $r \bowtie r'$ iff $r \in \asgn \land (r \bowtie *)$, for $\bowtie\, \in \{<,>\}$;
     \il
  \- Finally, we relate the values of $r_d$ between the moments.
     There are two cases.
     \li
     \- The value of $r_d$ crosses another register:
        $\exists r \in R\: (r_d<r) \land (* \geq r)$.
        Then $(r_d'>r_d)$.
        Similarly for the opposite direction: if $\exists r \in R\: (r_d>r) \land (* \leq r)$ then $(r_d'<r_d)$.

     \- Otherwise, the value of $r_d$ does not cross any register boundary.
        Then $r_d' = r_d$.
     \il
  \il

  Using the mapping $constr$,
  every action word $\overline{a} = (\tst_0\asgn_0)(\tst_1\asgn_1)\dots$ can be uniquely mapped to
  the constraint sequence $C_0 C_1 \dots$ as follows:
  $C_0 = constr(\pi_0,\tst_0,\asgn_0)$,
  set $\pi_1 = unprime({C_0}_{|R_d'})$,
  then $C_1 = constr(\pi_1,\tst_1,\asgn_1)$,
  and so on.

  We now prove that an action word is feasible iff the constructed constraint sequence is $0$-satisfiable.
  This follows from the definitions of feasibility and $0$-satisfiability, and from the following simple property of feasible action words.
  Every feasible action word has a witness $\v_0\d_0\v_1\d_1\dots \in (\D^R\cdot\D)^\omega$ such that:
  if some $\tst$ is repeated twice and no assignment is done, then the value $\d$ stays the same.
  This property is needed due to the last item in the definition of $constr$ where we set $r'_d = r_d$.
\end{proof}

\subsection{Max-automata recognise satisfiable constraint sequences}
\label{sec:max-automata-characterisation}

This section presents an automaton characterisation of constraint sequences satisfiable in $\bbN$.
The automaton construction verifies the conditions on one-way chains stated in Lemma~\ref{lem:satisf-N-1}:
the absence of (\condAO) \emph{infinite} decreasing one-way chains
and of (\condBO) \emph{unbounded} one-way ceiled chains.
The boundedness requirement of the second condition cannot be checked by $\omega$-regular automata%
\footnote{For a formal statement, see~\cite[Theorem~4.3]{ST11}
saying that the class of languages of finite-alphabet projections of ``constraint automata'' and
the class of $\omega$B-languages coincide.},
and for that reason in~\cite{ST11} the authors used nondeterministic $\omega$B-automata.
Since nondeterminism is usually hard to handle in synthesis,
we picked deterministic max-automata~\cite{B11},
which are incomparable with $\omega$B-automata, expressivity-wise.
We now define max-automata and then present the characterisation.

\emph{Deterministic max-automata} extend classic finite-alphabet parity automata with a finite set of
counters $c_1,\dots,c_n$ which can be incremented, reset to $0$, or updated by taking the
maximal value of a set of counters, but the counters cannot be tested.
On reading a word, the automaton builds a sequence of counter valuations.
The acceptance condition is given as a conjunction of the parity acceptance condition
and a Boolean combination of conditions
``counter $c_i$ is bounded along the run''.
Such a condition on a counter is satisfied by a run if
there exists a bound $\Bound\in\bbN$
such that counter $c_i$ has value at most $\Bound$ along the run.
By using negation, conditions such as ``$c_i$ is unbounded along the run'' can also be expressed.
A run is accepting if it satisfies the parity condition and the Boolean formula on the counter conditions.
Deterministic max-automata are strictly more expressive than $\omega$-regular automata.
For instance,
they can express the non-$\omega$-regular language of words of the form $a^{n_1}b a^{n_2} b \dots$ such that
$n_i\leq \Bound$ for all $i\geq 0$, for some $\Bound\in\bbN$ that can vary from word to word.
A max-automaton recognising the language is in Figure~\ref{fig:max-atm-example}.


\begin{figure}[tb]
  \centering
    \begin{tikzpicture}[initial text={}]
       \node[initial,state,minimum size=5mm] (q_0) {}; 
       \path[->]
        (q_0) edge [loop above] node {\scriptsize $a: \textit{increase}~c$} ()
              edge [loop below] node {\scriptsize $b: \textit{reset}~c$} ();
    \end{tikzpicture}
    \caption{Max-automaton recognising $\{a^{n_1}b a^{n_2} b \ldots \mid \exists \Bound\in\bbN\ \forall i\:n_i \leq \Bound \}$. It uses a single counter $c$, the acceptance condition is ``counter $c$ is bounded'', and the parity acceptance is trivial (always accept). The operation max is not used.}
    \label{fig:max-atm-example}
\end{figure}

We now prove the main result of this section.
\begin{theorem}\label{thm:0-satisf-N}
  For every $R$, there is a deterministic max-automaton accepting exactly all constraint sequences satisfiable in $\bbN$.
  The number of states is exponential in $|R|$, the number of counters is $O(|R|^2)$, and the number of priorities is polynomial in $|R|$.
  The same holds for $0$-satisfiability in $\bbN$.
\end{theorem}
\manufixed{R2:  Theorem 6: Give intuition on how max-automata are used.}{added a clarification into the proof idea that the max operation is used to ensure that we track the largest chains only}
\begin{proof}[Proof idea]
  We design a deterministic max-automaton that checks conditions $\condAO$ and $\condBO$ of Lemma~\ref{lem:satisf-N-1}.
  Condition $\condAO$, namely the absence of infinitely decreasing one-way chains, is checked as follows.
  We construct a nondeterministic B\"uchi automaton that guesses a chain and verifies that it is infinitely decreasing, i.e. that `$>$' occurs infinitely often and that there is no `$<$' (only `$>$' and `$=$').
  Determinising and complementing yields a deterministic parity automaton,
  that can be disjuncted through a synchronised product with the deterministic max-automaton checking condition $\condBO$.
  \manufixed{R2: Page 14, line 14: Give some intuition why omega-regular automata don't suffice.}{added in the intro to the section; to the description of $A_\text{\sc b}$ we added a short introduction}
  The latter condition (the absence of ceiled one-way chains of unbounded depth)
  is more involved.
  We design a master automaton that tracks every chain $\chi$ that currently exhibits a stable behaviour.
  To every such a chain $\chi$, the master automaton assigns a tracer automaton whose task is to ensure the absence of unbounded-depth ceiled chains below $\chi$.
  For that, the tracers use $2|R|$ counters -- one for tracking increasing and one for tracking decreasing chains -- and requires them to be bounded.
  We use the max operation on counters to ensure that we trace the largest chains only.
  The overall acceptance condition ensures that if the chain $\chi$ is stable,
  then there are no ceiled chains below $\chi$ of unbounded depth.
  Finally, we take the product of all these automata,
  which preserves determinism.
\end{proof}

In the next section, we provide the details of the proof.

\paragraph{Proof of Theorem~\ref{thm:0-satisf-N}}
  \newcommand{\pargray}[1]{\smallskip\noindent{\sethlcolor{lightgray}\hl{#1}\sethlcolor{yellow}}\,}
  We describe a max-automaton $A$ that accepts a constraint sequence
  iff it is consistent and has no infinitely decreasing one-way chains and no ceiled one-way chains of unbounded depth.
  By Lemma~\ref{lem:satisf-N-1}, such a sequence is satisfiable.

  The automaton has three components $A = A_c \land A_{\neg\infty} \land A_\text{\sc b}$.

  \pargray{$A_c$}
  The parity automaton $A_c$ checks consistency,
  i.e.\ that $\forall i\: unprime({C_i}_{|R'}) = (C_{i+1})_{|R}$.
  It has exponential in $|R|$ number of states and two priorities (the safety language).

  \pargray{$A_{\neg \infty}$}
  The parity automaton $A_{\neg\infty}$ ensures there are no infinitely decreasing one-way chains.
  First, we construct its negation, an automaton that accepts a constraint sequence iff it has such a chain.
  Intuitively, the automaton guesses such a chain and then verifies that the guess is correct.
  It loops in the initial state $\qinit$ until it nondeterministically decides that now is the starting moment of the chain and guesses the first register $r_0$ of the chain,
  and transits into the next state while memorising $r_0$.
  When the automaton is in a state with $r$ and reads a constraint $C$,
  it guesses the next register $r_n$,
  verifies that $(r_n' > r) \in C$ or $(r_n' = r) \in C$,
  and transits into the state that remembers $r_n$.
  The B\"uchi acceptance condition ensures that the automaton leaves the initial state
  and transits from some $r$ to some $r_n$ with $(r_n'>r) \in C$ infinitely often.
  Determinising and complementing this automaton gives $A_{\neg\infty}$.
  The number of states is exponential and the number of priorities is polynomial in $|R|$,
  due to the determinisation.

  \newcommand{\Tr}{\mathit{Tr}}
  \renewcommand{\idle}{\mathit{idle}}
  \newcommand{\GetTr}{\mathit{getTr}}
  \newcommand{\GetCn}{\mathit{getCn}}
  \newcommand\Cn{\mathit{Cn}}

  \newcommand\cidle{{\tt   idle}\xspace}
  \newcommand\cstart{{\tt  start}\xspace}
  \newcommand\cmove{{\tt   move}\xspace}
  \newcommand\creset{{\tt  reset}\xspace}

  \pargray{$A_\text{\sc b}$}
  The max-automaton $A_\text{\sc b}$ ensures that all ceiled one-way chains have bounded depth.
  It relies on the \emph{master} automaton controlling the team of $|R|$ \emph{chain tracers} $\Tr = \{tr_1, ..., tr_{|R|}\}$.
  Each tracer $tr$ is equipped with a counter $\idle_{tr}$ and a set $\Cn_{tr}$ of $2|R|$ of counters,
  thus overall there are $|R|(2|R|+1)$ counters.
  The construction ensures that
  every stable chain is tracked by a single tracer $tr$ and its counter $\idle_{tr}$ is bounded;
  and vice versa, if a tracer $tr$ has its counter $\idle_{tr}$ bounded, it tracks a stable chain.
  Suppose for a moment that tracer $tr$ tracks a stable chain $\chi$.
  Then the goal of counters $\Cn_{tr}$ is to track the deepest increasing and decreasing chains below $\chi$.
  Since there are only $|R|$ registers,
  it suffices to track $|R|$ decreasing chains,
  every chain ending in a different register (similarly for increasing chains).
  This is because there is no need to track two decreasing chains ending in the same register:
  once the two chains ``meet'' in a register $r$,
  we continue tracking only the one with the larger depth and forget about the other.
  We use the max operation of automata to implement this idea.
  Overall, the construction ensures that the counters in $\Cn_{tr}$ are bounded iff
  the increasing and decreasing chains ceiled by the stable chain tracked by the tracer $tr$ have bounded depths.
  The acceptance of $A_\text{\sc b}$ is the formula
  $$
  \bigwedge_{tr \in \Tr} \big(\idle_{tr}\text{ is bounded }\!\impl \!\bigwedge_{c \in \Cn_{tr}}\!\!\! c \text{ is bounded}\big).$$
  \label{eq:Ab-acceptance}

  The work of tracers is controlled by the master automaton via four commands
  \cidle (``track nothing''),
  \cstart (``start tracking a potentially stable chain''),
  \cmove (``continue tracking''), and
  \creset (``stop tracking'').
  Before we formally describe the master and the tracers,
  we define the concept of ``levels'' used in the presentation.
  Intuitively, the levels abstract concrete data values,
  and the tracers actually track the levels instead of specific registers.

  \manufixed{R2: Page 14, line 59: Why do we need to define 'levels'? Give the intuition behind their purpose as motivation.}{Added intuition (`level' abstracts a concrete data value)}
  \label{def:levels}
  Fix a constraint $C$.
  A \emph{level} $l \subseteq R \,{\setminus} \{\emptyset\}$ is an equivalence class of registers
  wrt.\ $C_{|R}$ or wrt.\ $unprime(C_{|R'})$.
  Thus, in the constraint $C$ we distinguish the levels of two kinds:
  \emph{start levels} (at the beginning of the step) and \emph{end levels} (at the end of the step).
  A start level $l\subseteq R$ \emph{disappears} when
  $C$ contains no atoms of the form $r=s'$ for $r \in l$ and $s \in R$;
  this means that a data value abstracted by the level disappears from the registers.
  An end level $l\subseteq R$ is \emph{new} if
  $C$ contains no atoms of the form $r=s'$ where $r \in R$ and $s \in l$;
  intuitively, the constraint requires a new data value to appear in registers $l$.
  A start level $l$ \emph{morphs} into an end level $l'$ if
  $C$ contains an atom $r=s'$ for some $r \in l$ and $s \in l'$;
  i.e., the constraint requires the registers in $l'$ to hold the data value previously held by the registers in $l$.
  Notice that there can be at most $|R|$ start and $|R|$ end levels, for a fixed constraint $C$.
  Figure~\ref{fig:levels} illustrates the definitions.
  We are now ready to describe the master and the tracers.

  \begin{figure}[t]
    \begin{minipage}[c]{0.75\textwidth}
      \caption{Example of levels:
        start levels are $\{r_1,r_2\}$ and $\{r_3\}$,
        end levels are $\{r_3\}$, $\{r_2\}$, and $\{r_1\}$.
        The start level $\{r_1,r_2\}$ morphs into end level $\{r_3\}$,
        the start level $\{r_3\}$ disappears,
        and two new end levels appear, $\{r_1\}$ and $\{r_2\}$.
        The constraint is $\{r_1=r_2=r'_3>r'_2>r_3>r'_1\}$.%
        }
      \label{fig:levels}
    \end{minipage}\hspace{0.6cm}
    \begin{minipage}[c]{0.18\textwidth}
      \vspace{-4mm}
      \centering
      \includegraphics[width=1\textwidth]{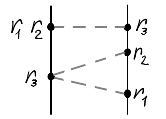}
    \end{minipage}
  \end{figure}

  \parit{Master}
  States of $A_\text{\sc b}$ are of the form $(\GetTr, \vec{q})$,
  where the partial mapping $\GetTr:l \mapsto tr$ maps a level $l \subseteq R\,{\setminus}\{\emptyset\}$ to a tracer $tr \in \Tr$,
  and $\vec{q} = (q_1,...,q_{|\Tr|})$ describes the states of individual tracers.
  The master updates the state component $\GetTr$ while the tracers update their states.
  Initially, there is only one start level $R$ (assuming the registers start with the same value),
  so we define $\GetTr = \{R \mapsto tr_1\}$.
  Suppose the automaton reads a constraint $C$,
  let $L$ and $L'$ be the start and end levels of $C$,
  and suppose the automaton is in state $(\GetTr, \vec{q})$ and $\GetTr : L \to \Tr$.
  We define the successor state $(\GetTr', \vec{q}\,')$, where $\GetTr': L' \to \Tr$,
  and operations on the counters using the following procedure.
  \li
  \- To every tracer $tr$ that does not currently track a level,
     i.e.\ $tr \in \Tr\setminus\GetTr(L)$,
     the master commands \cidle (causing the tracer to increment $\idle_{tr}$).

  \- For every start level $l \in L$ that morphs into $l' \in L'$:~ let $tr = \GetTr(l)$, then
     \li
     \- the master sends $\cmove(r_\top)$ to $tr$ where $r_\top \in l$ is chosen arbitrary;
        this will cause the tracer $tr$ to update its counters $\Cn_{tr}$ and move into a successor state $q'_{tr}$;
        the register $r_\top$ will be used as a descriptor of a stable chain tracked by $tr$.
     \- we set $\GetTr'(l') = \GetTr(l)$, thus the tracer continues to track it.
     \il

  \- For every start level $l \in L$ that disappears: let $tr=\GetTr(l)$, then
     \li
     \- the master sends \creset to $tr$,
        which causes the reset of the counters in $\Cn_{tr}$
        and the increment of $\idle_{tr}$.
     \il

  \- For every new end level $l' \in L'$:
     \li
     \- we take an arbitrary $tr$ that is not yet mapped by $\GetTr'$ and map $\GetTr'(l') = tr$;
     \- the master sends \cstart to $tr$.
     \il
     \il

     \manufixed{R2:  Page 15, line 34: Give this overview earlier.}{moved this overview to an earlier place}
  \parit{Tracers}
  We now describe the tracer component.
  Its goal is to trace the depths of ceiled chains.
  When the counters of a tracer are bounded,
  the depths of the chains it tracks are also bounded.
  The tracer consists of two components, $B_{\ssearrow}$ and $B_{\nnearrow}$,
  which track decreasing and increasing chains.
  We only describe $B_\ssearrow$, the other one is similar.

  \newcommand\Bd{B_\ssearrow}
  The component $\Bd$ has a set $\Cn\cup\{\idle\}$ of $|R|+1$ counters.
  A state of $\Bd$ is either the initial state $\qinit$
  or a partial mapping $\GetCn : R \pto \Cn$.
\manufixed{R2: Page 15, line 48: This does not make sense to the reader
  yet. Put the register r into the move command and explain its use
  below.}{Here, the register r is different from the register passed by the move command. For clarity, we explicitly parameterised the move command with a register called $r_\top$.}
\manufixed{R2: Page 15, line 49: Introduce getTr'(l') before using it.}{done: added the definition of getTr'}
  Intuitively, in each $\GetCn$-state,
  for each register $r$ mapped by $\GetCn$,
  the value of the counter $\GetCn(r)$ reflects the depth of the deepest ceiled decreasing one-way chain ending in $r$.
  When several chains end in $r$, the counter gets the maximal value of the depths.
  We maintain this property of $\GetCn$ during the transition of $\Bd$ on reading a constraint $C$,
  using operations of max-automata on counters and register-order information from $C$.
  The component $\Bd$ does the following:
  \li
  \- If the master's command is \cidle,
     then increment the counter $\idle$ and stay in $\qinit$.

  \- If the master's command is \creset,
     reset all counters in $\Cn$,
     increment the counter $\idle$,
     and go into state $\qinit$.

  \- If the master's command is \cstart,
     move from state $\qinit$ into the state with the empty mapping $\GetCn$.
  \il
  Otherwise, the master's command is $\cmove(r_\top)$,
  for some $r_\top \in R$ passed by the master and serving as a descriptor of a stable chain traced by the current tracer.
  The tracer performs the operations on its counters and updates the mapping $\GetCn$ as follows.
  \li
  \- Release counters.
     For every $r$ such that $r<r_\top<r'$,
     the component resets the counter $\GetCn(r)$ and removes $r$ from the mapping $\GetCn$.
     I.e.,
     we stop tracking chains ending in register $r$
     since such chains are no longer below the stable chain assigned to the tracer.

  \- Allocate counters.
     For every $r$ such that $r \geq r_\top > r'$:
     pick a counter $c \in \Cn\setminus \GetCn(R)$ and map $\GetCn(r) = c$.
     I.e., we start tracking chains ending in $r$.

  \- Update counters.
     For every $r$ such that $r \leq r_\top$ and $r' < r_\top$ do the following.
     Let $R_{>r'} = \{r_o \mid r' < r_o < r_\top\}$
     be the registers larger than the updated $r$ but below $r_\top$,
     and let $\GetCn(R_{>r'})$ be the associated counters.
     Let $r_=$ be a register s.t.\ $r_= = r'$ (may not exist).
     We update the counter $\GetCn(r)$ depending on the case:
     \li
     \- $R_{>r'}$ is empty and $r_=$ does not exist:
        the condition means that no decreasing ceiled chain can be extended into $r'$.
        Then we $\mathit{reset}$ the counter $\GetCn(r)$.

     \- $R_{>r'}$ is empty and $r_=$ exists:
        only the chains ending in $r_=$ can be extended into $r'$, and since $r_= = r'$,
        the deepest chain keeps its depth.
        Therefore, we $copy(\GetCn(r_=))$ into the counter $\GetCn(r)$.

     \- $R_{>r'}$ is not empty and $r_=$ does not exist:
        the chains from registers in $R_{>r'}$ can be extended into $r'$,
        and since $r'$ is lower than any register in $R_{>r'}$, their depths increase.
        The new value of counter $\GetCn(r)$ must reflect the deepest chain,
        therefore the counter gets the value $max\big(\GetCn(R_{>r'})\big)+1$.

     \- $R_{>r'}$ is not empty and $r_=$ exists:
        some chains from registers in $R_{>r'}$ can be decremented into $r'$,
        there is also a chain from $r_=$ that can be extended into $r'$ without its depth changed.
        The counter gets $max\big(max(\GetCn(R_{>r'}))+1, \GetCn(r_=)\big)$,
        which describes the deepest resulting chain.
     \il
     \il

     \manufixed{R2: - Page 16, line 20: Is $B_{arrow up}$ analogous? Why does it exist if we only consider ceiled chains?
- Page 16, line 23: Give the intuition for getCn(r) earlier.
- Page 16, line 40: Give the intuition of why we release counters.
}{added clarifications in respective places}

  The number of states in $\Bd$ is no more than $|R|^{|R|}+1$,
  and the number of counters is $|R|+1$.
  The construction for $B_{\nnearrow}$ is similar to this construction for $\Bd$,
  except that we need to track \emph{increasing} ceiled chains instead of decreasing ones.
  The number of counters in $\Bd$ and $B_{\nnearrow}$ is $2|R|+1$.
  Since we use $|R|$ number of tracers, the total number of counters becomes $|R|(2|R|+1)$.
  Overall, $A_\text{\sc b}$ has an exponential in $|R|$ number of states,
  the number of counters is in $O(|R|^2)$,
  and the parity condition is trivial.
  This concludes the description of the tracers and of the automaton $A_\text{\sc b}$.

  \smallskip
  We have described all three components $A = A_c \land A_{\neg\infty}\land A_\text{\sc b}$,
  where
  $A_c$ expresses a safety language,
  $A_{\neg\infty}$ is a classic deterministic parity automaton,
  and $A_\text{\sc b}$ is a deterministic max-automaton with the trivial parity acceptance condition.
  All the automata has no more than an exponential in $|R|$ number of states,
  $A_{\neg\infty}$ has a polynomial in $|R|$ number of colors,
  and $A_\text{\sc b}$ has a polynomial in $|R|$ number of counters.
  It is not hard to see that the product of these automata gives the desired automaton $A$
  with exponentially many states, polynomially many colors and counters, in $|R|$.
  The acceptance condition is the parity acceptance in conjunction with the formula of $A_\text{\sc b}$ described on page \pageref{eq:Ab-acceptance}.

  Finally, for the case of $0$-satisfiability,
  the automaton $A$ also needs to satisfy the additional conditions stated in Lemma~\ref{lem:0-satisf-N-1},
  in particularly there shall be no decreasing one-way chains from moment $0$ of depth $\geq\!1$.
  This check is simple and omitted.
  This concludes the proof of Theorem~\ref{thm:0-satisf-N}.
  \qed

\parit{Remark}
In~\cite[Appendix C]{ST11} it is shown that satisfiable constraint sequences in $\bbN$
are characterised by nondeterministic $\omega$B-automata~\cite{BC06}.
These automata are incomparable with deterministic max-automata.%

  The following two languages separate these classes:
  $\overline{(a^B b)^\omega}$ is recognised by det max automata but not by nondet $\omega$B automata,
  and $\{a^{n_1} b\, a^{n_2} b\, a^{n_3} b \ldots \mid \lim \inf n_i < \infty\}$ witnesses the opposite direction.
  The latter language is recognisable by the nondet $\omega$B automaton
  which guesses a bounded subsequence of $n_1 n_2 \ldots$.
  The non-recognisability by det max automata follows from~\cite[Section 6]{B11}.%

  We prove the claim about $\overline{(a^B b)^\omega}$.
  First,
  the language $(a^B b)^\omega$ is recognisable by det $\omega$B automata
  and hence by det max automata.
  Since det max automata are closed under the complement,
  $\overline{(a^B b)^\omega}$ is also recognisable by det max automata.
  Now, by contradiction,
  assume that $\overline{(a^B b)^\omega}$ is recognisable by nondet $\omega$B automata.
  The result~\cite[Lemma 2.5]{BC06} says:
  if an $\omega$B language over alphabet $\{a,b\}$
  contains a word with infinitely many $b$s then it contains a word from $(a^B b)^\omega$.
  The language $\overline{(a^B b)^\omega}$ contains the former
  (e.g.\ take any word from $(a^Sb) ^\omega$)
  but not the latter.
  Contradiction. Hence it is not an $\omega$B language.

\subsection{Satisfiability of lasso-shaped sequences}
\label{sec:lasso-characterisation}
An infinite sequence is \emph{lasso-shaped} (or \emph{regular}) if it is of the form $w = uv^\omega$.
Lasso-shaped sequences are prevalent in automata theory and in the data setting in particular.
For instance,
\cite{DD07} studies satisfiability of logic Constraint LTL in the data domain $(\bbN,\leq)$
and shows that considering lasso-shaped witnesses of satisfiability is sufficient.
Another work~\cite{DBLP:conf/icalp/ExibardF022} shows that
if there is an $\omega$-regular over-approximation of satisfiable constraint sequences
and which is exact on lasso-shaped sequences,
then a synthesis problem is decidable in $(\bbN,\leq)$.
In this paper, when proving the decidability of Church synthesis problem,
we do not directly rely on lasso-shaped sequences,
but we use a characterisation similar to the one proven in this section.

This section shows that considering lasso-shaped constraint sequences greatly simplifies
the task of characterisation of satisfiability.
We first show how lasso-shaped sequences simplify the condition $\condBO$ of characterisation Lemma~\ref{lem:satisf-N-1},
then describe the chain characterisation under assumption of lasso-shaped sequences, and
finally state the $\omega$-regular automaton characterisation.

\begin{lemma}\label{lem:lasso-inf-unb}
  For every lasso-shaped consistent constraint sequence,
  it has ceiled one-way chains of \emph{unbounded} depth
  \,iff\,
  it has ceiled one-way chains of \emph{infinite} depth.
\end{lemma}

\begin{proof}
  Direction $\Implied$ is trivial, so consider direction $\Impl$.
  The argument uses the standard pumping technique.
  Fix a lasso-shaped constraint sequence $C_0 \dots C_{k-1} (C_k \dots C_{k+l})^\omega$
  having ceiled chains of unbounded depth.
  Since these chains have unbounded depth,
  they pass through $C_k$ more and more often.
  At moments when the current constraint is $C_k$,
  each such a chain is in one of the finitely-many registers.
  Hence there is a chain, say increasing,
  that on two separate occasions of reading the constraint $C_k$ goes through the \emph{same} register $r$,
  \emph{and} the chain suffix from the first pass through $r$ until the second pass has at least one $<$.
  Then we create an increasing chain of infinite depth by repeating this suffix forever.
\end{proof}

The above lemma together with Lemma~\ref{lem:0-satisf-N-1} yields the following result.

\begin{lemma}\label{lem:N-satisf-lasso}
  A lasso-shaped consistent constraint sequence is $0$-satisfiable iff it is quasi-feasible, i.e.:
  \li
  \- it has no infinite-depth decreasing one-way chains,
  \- it has no ceiled infinite-depth increasing one-way chains,
  \- it has no decreasing one-way chains of depth $\geq\!1$ from moment $0$, and
  \- it starts with $C_0$ s.t.\ ${C_0}_{|R} = \{r=s\mid r,s\in R\}$.
  \il
\end{lemma}

The conditions of this lemma can be checked by an $\omega$-regular automaton:
Its construction is similar to the components $A_c$ and $A_{\neg\infty}$ from the proof of Theorem~\ref{thm:0-satisf-N} and is omitted.
Thus, we get the theorem below.

\begin{theorem}\label{lem:DPA-for-satisf-lasso}
  For every $R$,
  there is a deterministic parity automaton
  that accepts a lasso-shaped constraint sequence
  iff it is $0$-satisfiable in $\bbN$;
  its number of states and priorities is exponential and polynomial in $|R|$, respectively.
\end{theorem}

\subsection{Data-assignment function}
\label{sec:data-assignment-function}
In this section,
we design a data-assignment function that
maps a sequence of constraints to a sequence of register valuations satisfying it,
while doing it on the fly, i.e.\ by reading the constraint sequence from left to right.
It is significant that the entire constraint sequence is not known in advance.
Such a function is used in Section~\ref{sec:synt-games} when proving Proposition~\ref{prop:GS_Gf_Gfreg},
namely that Adam's winning strategy in the finite-alphabet game transfers
to the winning strategy in the Church synthesis game.
There, Adam has to produce data values given only the prefix of a play.

In the next section, we state the lemma on existence of a data-assignment function,
and then devote a significant amount of space to proving it.

\subsubsection{Lemma~\ref{lem:data-assign-func} on existence of a data-assignment function}
\label{sec:data-assign-func-exists}
Intuitively, a data-assignment function produces register valuations while reading a constraint sequence from left to right.
We are interested in functions that produce register valuations satisfying given constraint sequences.
Since data-assignment functions cannot look into the future and
do not know how many values will be inserted between any two registers,
knowing a certain bound on such insertions is necessary.
Moreover, to simplify the presentation,
we restrict how many new data values can appear during the step.
In our Church synthesis games,
at most one new value provided by Adam can appear.
We start by defining data-assignment functions, then describe the assumptions and state the lemma.

Let ${\sf C}$ denote the set of all constraints over registers $R$,
and let ${\sf C}_{|R}$ denote the set of all constraints over atoms over $R$ only.
A \emph{data-assignment function} has the type
$({\sf C}_{|R}\cup {\sf C}^+) \to \bbN^R$.
A data-assignment function $f$ \emph{maps} a constraint sequence $C_0 C_1 ...$ into a sequence of valuations
$f({C_0}_{|R}) f(C_0) f(C_0 C_1)...$.

We now describe the two assumptions used by our data-assignment function.

Intuitively, the first assumption states that only a bounded number of insertions between any two registers can happen,
and this bound is known.
To formalise the assumption, we define a special kind of chains, called right two-way chains.
Informally, right chains are two-way chains that operate to the right of their starting point.
Knowing a bound on the depths of right chains amounts to knowing how many values in the future can be inserted between the registers.
\manufixed{R2: Page 18, line 29: When defining right chains give the intuition that everything has to happen on the right of the chain's starting point, that is the chain is not allowed to go to the left of its initial position. You mention this later on page 36 line 58 but the earlier you give the reader this intuition the more useful it is to them.}{added the intuition before the definition}
Fix a constraint sequence.
Given a moment $i$ and a register $x$,
a (decreasing) \emph{right two-way chain starting in $(x,i)$} (\emph{r2w} for short)
is a two-way chain
$(x,i) \tr_1 (r_1,m_1) \tr_2 (r_2,m_2) \ldots $
such that $m_j\geq i$, $\tr_j \in \{=,>\}$,
for all $j$.
As these chains are two-way,
they can start and end in the same moment $i$.
Notice that in Lemma~\ref{lem:satisf-N}
on characterisation of satisfiable constraint sequences
we can replace two-way chains by r2w chains.
Our data-assignment function will assume the knowledge of a bound on the r2w chains.
\manufixed{R2: - Page 18, line 37: Why is it a reasonable assumption that the bound B is known? line 38: What we are recalling here has not been mentioned this clearly before, consider doing that earlier.}{added intuition why having a bound on r2w chains is reasonable: because the bound tells us how many insertions between any two registers can we expect in the future, knowing such a bound is necessary for any data-assignment function.}
\begin{wrapfigure}{r}{20mm}
  \vspace{-3mm}
  \centering
  \includegraphics[width=0.16\textwidth]{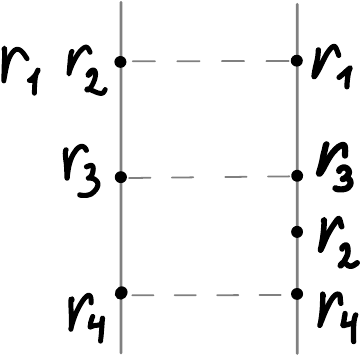}
  \vspace{-10mm}
\end{wrapfigure}

We now describe the second assumption about one-new-value appearance during a step.
Its formalisation uses the notion of levels introduced in Section~\ref{sec:max-automata-characterisation} on page~\pageref{def:levels} (see also Figure~\ref{fig:levels}).
We briefly recall those notions.
Recall that a constraint describes a set of totally ordered equivalence classes of registers from $R \cup R'$.
The figure on the right describes a constraint that can be defined by the ordered
equivalence classes $\{r_4,r'_4\} < \{r'_2\} < \{r_3,r'_3\} < \{r_1,r_2,r'_1\}$.
It shows two columns of levels, start levels (in the left column) and end levels (in the right column),
where a level describes a set of registers that are equivalent at this point of time.
The assumption $\dagger$ says:
\begin{equation}\tag{$\dagger$}
\begin{array}{l}
  \textit{In every constraint of a given sequence, at most one new end level appear.}
  \end{array}
\end{equation}
The constraint depicted in the above figure satisfies this assumption, the one in Figure~\ref{fig:levels} does not.
This assumption helps to simplify the proofs, and is satisfied by the constraint sequences induced in our Church synthesis games.

One final notion before stating the lemma.
A constraint sequence is \emph{$0$-consistent} if
it is consistent, starts in $C_0$ with ${C_0}_{|R} = \{r=s\mid r,s\in R\}$,
and has no decreasing chains of depth $\geq1$ starting at moment $0$.
\manufixed{R1 p18 59: The definition of meaningful is easy to miss. Maybe it can be emphasized more?}{changed the notion to $0$-consistent}
Note that a $0$-consistent constraint sequence whose r2w chains are bounded is $0$-satisfiable (follows from Lemma~\ref{lem:satisf-N}).

\begin{lemma}[data-assignment function]\label{lem:data-assign-func}
  For every $\Bound\geq 0$,
  there exists a data-assignment function
  $f : ({\sf C}_{|R}\cup {\sf C}^+) \to \bbN^R$ such that
  for every finite or infinite $0$-consistent constraint sequence $C_0 C_1 C_2 ...$ satisfying assumption~\!$\dagger$
  and whose r2w chains are depth-bounded by $\Bound$,
  the register valuations $f({C_0}_{|R}) f(C_0) f(C_0 C_1)...$
  satisfy the constraint sequence.
\end{lemma}

\begin{proof}[Proof idea]
  We define a special kind of $xy^{(m)}$-chains
  that help to estimate how many insertions between the values of registers $x$ and $y$ at moment $m$ we can expect in the future.
  As it turns out, without knowing the future,
  the distance between $x$ and $y$ has to be exponential in the maximal depth of $xy^{(m)}$-chains.
  We describe a data-assignment function that maintains such exponential distances.
  The function is surprisingly simple:
  if the constraint inserts a register $x$ between two registers $r$ and $s$ with already assigned values $\d_r$ and $\d_s$,
  then set $\d_x = \floor{\frac{\d_r+\d_s}{2}}$;
  and if the constraint puts a register $x$ above all other registers,
  then set $\d_x = \d_M + 2^{\Bound}$ where $\d_M$ the largest value currently held in the registers and $\Bound$ is the given bound on the depth of r2w chains.
\end{proof}

The rest of the section is devoted to the proof of this lemma.

\subsubsection{Proof of Lemma~\ref{lem:data-assign-func}}

\paragraph{$xy^{(m)}$-connecting chains and the exponential nature of register valuations}

Fix an arbitrary $0$-satisfiable constraint sequence $C_0 C_1 ...$
whose r2w chains are depth-bounded by $\Bound$.
Consider a moment $m$ and two registers $x$ and $y$ such that $(x>y) \in C_m$.

We would like to construct witnessing valuations $\v_0 \v_1 ...$ using the current history only,
e.g.\ a register valuation $\v_m$ at moment $m$ given only the prefix $C_0 ... C_{m-1}$.
Note that the prefix $C_0 ... C_{m-1}$ defines the ordered partition of registers at moment $m$ as well,
since $C_{m-1}$ is defined over $R \cup R'$.
Let us see how much space we might need between $\v_m(x)$ and $\v_m(y)$,
relying
\begin{wrapfigure}{r}{19mm}
  \centering
  \includegraphics[width=0.15\textwidth]{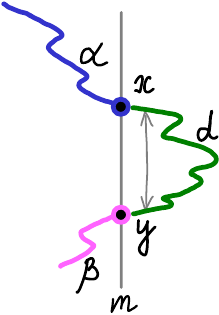}
  \vspace{-5mm}
\end{wrapfigure}
on the fact that the depths of r2w chains are bounded by $\Bound$.
Consider decreasing two-way chains that
start at moment $i\leq m$,
end in $(x,m)$,
and which are contained within time moments $\{i,...,m\}$ (shown in blue).
Further, consider decreasing two-way chains
starting in $(y,m)$,
ending at moment $j \in \{i,...,m\}$,
and contained within time moments $\{j,...,m\}$ (shown in pink).
Among such chains,
pick two chains of depths $\alpha$ and $\beta$, respectively,
that maximise the sum $\alpha+\beta$.
After seeing $C_0 C_1 ... C_{m-1}$,
we do not know how the constraint sequence will evolve,
but by boundedness of r2w chains,
any r2w chain starting in $(x,m)$ and ending in $(y,m)$
(contained within time moments $\geq m$)
will have a depth $d \leq \Bound - \alpha - \beta$
(otherwise, we could add prefix $\alpha$ and postfix $\beta$ to it and construct an r2w chain of depth larger than $\Bound$).
We conclude that $\v_m(x) - \v_m(y) \geq \Bound - \alpha - \beta$,
since the number of values in between two registers should be greater or equal than the longest two-way chain connecting them.
To simplify the upcoming arguments,
we introduce $xy^{(m)}$-connecting chains which consist of $\alpha$ and $\beta$ parts
and directly connect $x$ to $y$.

An \emph{$xy^{(m)}$-connecting chain}
is any r2w chain of the form
$(a,i) \tr \ldots (x,m) > (y,m) \tr \ldots \tr (b,j)$:
it starts in $(a,i)$ and ends in $(b,j)$, where $i \leq j \leq m$ and $a,b \in R$,
and it \emph{directly} connects $x$ to $y$ at moment $m$.
Note that it is located solely within moments $\{i,...,m\}$.
Continuing the previous example,
the $xy^{(m)}$-connecting chain starts with $\alpha$,
directly connects $(x,m)>(y,m)$, and ends with $\beta$;
its depth is $\alpha+\beta+1$
(we have ``+1'' no matter how many registers are between $x$ and $y$,
 since $x$ and $y$ are connected directly).

With this new notion,
the requirement $\v_m(x) - \v_m(y) \geq \Bound - \alpha - \beta$ becomes
$\v_m(x) - \v_m(y) \geq \Bound - d_{xy} + 1$,
where $d_{xy}$ is the largest depth of $xy^{(m)}$-connecting chains.

\begin{wrapfigure}{r}{26mm}
  \vspace{-8mm}
  \includegraphics[width=0.2\textwidth]{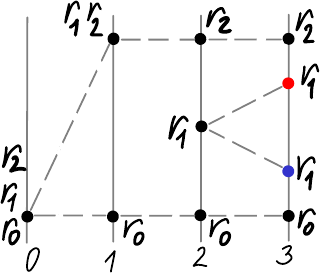}
  \vspace{-16mm}
\end{wrapfigure}

However, since we do not know how the constraint sequence evolves after $C_0 ... C_{m-1}$,
we might need even more space between the registers at moment $m$.
Consider an example on the right,
with $R = \{r_0,r_1,r_2\}$ and the bound $\Bound=3$ on the depth of r2w chains.
\li
\- Suppose at moment $1$, after seeing the constraint $C_0$,
   which is $\{r'_1,r'_2\}>\{r_0,r_1,r_2,r'_0\}$,
   the valuation is $\v_1 = \{r_0\mapsto 0; r_1,r_2 \mapsto 3\}$.
   It satisfies $\v_1(r_2) - \v_1(r_0) \geq \Bound-d_{r_2r_0}+1$
   (indeed, $\Bound=3$ and $d_{r_2 r_0}=1$ at this moment);
   similarly for $\v(r_1)-\v(r_0)$.

\- Let the constraint $C_1$ be $\{r_1,r_2,r'_2\}>\{r'_1\}>\{r_0,r'_0\}$.
   What value $\v_2(r_1)$ should register $r_1$ have at moment $2$?
   Note that the assignment should work no matter what $C_2$ will be in the future.
   Since the constraint $C_1$ places $r_1$ between $r_0$ and $r_2$ at moment $2$,
   we can only assign $\v_2(r_1)=2$ or $\v_2(r_1)=1$.
   If we choose $2$, then the constraint $C_2$ having
   $\{r_2,r'_2\}>\{r_1'\}>\{r_1\}>\{r_0,r'_0\}$
   (the red dot in the figure)
   shows that there is not enough space between $r_2$ and $r_1$ at moment $2$
   ($\v_2(r_2)=3$ and $\v_2(r_1)=2$).
   Similarly for $\v_2(r_1)=1$:
   the constraint $C_2$ having
   $\{r_2,r'_2\}>\{r_1\}>\{r'_1\}>\{r_0,r'_0\}$
   (the blue dot in the figure)
   eliminates any possibility for a correct assignment.
\il
Thus, at moment $2$, the register $r_1$ should be equally distanced from $r_0$ and $r_2$,
i.e.\ $\v_2(r_1)\approx\frac{\v_2(r_0)+\v_2(r_2)}{2}$,
since its evolution can go either way, towards $r_2$ or towards $r_0$.
This hints at the exponential nature of distances between the registers.
This is formalised in the next lemma
showing that any data-assignment function that places two registers $x$ and $y$ at any moment $m$
closer than $2^{\Bound-d_{xy}}$ is bound to fall.
Intuitively,
$\Bound - d_{xy}$ describes how many more times an insertion between the values of registers $x$ and $y$ can happen in the future.
Since each newly inserted value should be equidistant from the boundaries,
we get the $2^{\Bound-d_{xy}}$ lower bound.

\begin{lemma}[tightness]\label{lem:conn:exp}
  Fix $\Bound\geq 3$, registers $R$ of $|R|\geq 3$,
  a $0$-consistent constraint sequence prefix $C_0 ... C_{m-1}$
  where $m \geq 1$ and whose r2w chains are depth-bounded by $\Bound$,
  two registers $x,y \in R$ s.t.\ $(x'>y') \in C_{m-1}$,
  and a data-assignment function $f : ({\sf C}_{|R} \cup {\sf C}^+) \to \bbN^R$.
  Let $\v_m=f(C_0 ... C_{m-1})$
  and $d_{xy}$ be the maximal depth of $xy^{(m)}$-connecting chains.
  If $\v_m(x) - \v_m(y) < 2^{\Bound-d_{xy}}$,
  then there exists a continuation $C_m C_{m+1} ...$
  such that the whole sequence $C_0 C_1 ...$ is $0$-consistent and its r2w chains are depth-bounded by $\Bound$ (hence $0$-satisfiable),
  yet $f$ cannot satisfy it.
\end{lemma}

\begin{proof}
  We use the idea from the previous example.
  The constraints $C_m C_{m+1}...$ are:
  \lo
  \- If at moment $m$ there are registers different from $x$ and $y$,
     we add the step that makes them equal to $x$ (or to $y$):
     this does not affect the depth of $xy$-connecting chains at moments $m$ and $m+1$;
     also, the maximal depths of r2w chains defined at moments $\{0,...,m\}$ and $\{0,...,m+1\}$ stay the same.
     Therefore, below we assume that at moment $m$ every register is equal to $x$ or to $y$.

  \-\label{item:contra}
     If $\Bound-d_{xy} = 0$, we are done:
     $\v_m(x)-\v_m(y) < 2^{\Bound-d_{xy}}$ gives $\v_m(x) \leq \v_m(y)$ but $C_{m-1}$ requires $\v_m(x)>\v_m(y)$.
     The future constraints then simply keep the registers constant.
     Otherwise, when $\Bound-d_{xy}>0$, we proceed as follows.
  \manufixed{R2: - Page 21, line 14: How did you come up with the bound $2^{B-d_{xy}}$?}{added a clarification before the proof}
  \- To ensure consistency of constraints,
     $C_m$ contains all atoms over $R$ that are implied by atoms over $R'$ of $C_{m-1}$.

  \- $C_m$ contains $x=x'$ and $y=y'$.

  \- $C_m$ places a register $z$ between $x$ and $y$: $x'>z'>y'$.\\
     This gives $d'_{xz} = d'_{zy} = d_{xy}+1 \leq b$,
     where $d_{xy}$ is the largest depth of connecting chains for $xy^{(m)}$,
     $d'_{xz}$--- for $xz^{(m+1)}$, and
     $d'_{zy}$--- for $zy^{(m+1)}$.
     Since $\v_{m+1}(x) - \v_{m+1}(y) < 2^{\Bound-d_{xy}}$,
     either $\v_{m+1}(x)-\v_{m+1}(z)<2^{\Bound-d'_{xz}}$ ~or~ $\v_{m+1}(z)-\v_{m+1}(y)<2^{\Bound-d'_{zy}}$;
     this is the key observation.
     If the first case holds, we have the original setting $\v_{m+1}(x) - \v_{m+1}(z) < 2^{\Bound-d'_{xz}}$
     but at moment $m+1$ and with registers $x$ and $z$;
     for the second case --- with registers $z$ and $y$.
     Hence we repeat the entire procedure, again and again,
     until reaching the depth $\Bound$, which gives the sought conclusion in item (\ref{item:contra}).
  \ol
  Finally, it is easy to prove that the whole constraint sequence $C_0 C_1 ...$ is $0$-satisfiable,
  e.g.\ by showing that it satisfies the conditions of Lemma~\ref{lem:0-satisf-N-1}.
  Moreover, it is $0$-consistent, and all r2w chains of $C_0 C_1 ...$ are depth-bounded by $\Bound$ because:
  (a) in the initial moment $m$, all r2w chains are depth-bounded by $\Bound$; and
  (b) the procedure deepens only $xy$-connecting chains and only until the depth $\Bound$,
      whereas other r2w chains existing at moments $\{0,...,m\}$ keep their depths unchanged
      (or at moments $\{0,...,m+1\}$, if we executed item 1).
\end{proof}

\paragraph{Proof of Lemma~\ref{lem:data-assign-func} under additional assumption about 0}\label{page:data-assign-func}

Tightness by Lemma~\ref{lem:conn:exp} tells us that if a data-assignment function exists,
it should separate the register values by at least $2^{\Bound-d_{xy}}$.
Such separation is sufficient as we show below.
We first describe a data-assignment function,
then prove an invariant about it,
and finally conclude with the proof of Lemma~\ref{lem:data-assign-func}.
For simplicity,
we assume that the constraints contain a register that never changes and always holds $0$.
That is not true in general, so later we will lift this assumption.

\subparagraph{\normalfont \textbf{Data-assignment function}}
The function $f : ({\sf C}_{|R} \cup {\sf C}^+) \to \bbN^R$
is constructed inductively on the length of $C_0...C_{m-1}$ as follows.

Initially, $f({C_0}_{|R})=\v_0$ where $\v_0(r)=0$ for all $r \in R$
(since $C_0$ has $r=s$, $\forall r,s \in R$).
Suppose at moment $m$,
the register valuation is $\v_m = f({C_0}_{|R} C_0 ... C_{m-1})$.
Let $C_m$ be the next constraint, then $\v_{m+1} = f({C_0}_{|R} C_0 ... C_m)$ is as follows:
\newcommand\DAone{{\sf D}$\frak1$\xspace}
\newcommand\DAtwo{{\sf D}$\frak2$\xspace}
\newcommand\DAthree{{\sf D}$\frak3$\xspace}
\lo
\-[\DAone.] If a register $x$ at moment $m+1$ lays above all registers at moment $m$,
   i.e.\ $(x'>r) \in C_m$ for every register $r$,
   then set $\v_{m+1}(x) = \v_m(r)+2^\Bound$, where $r$ is one of the largest registers at moment $m$.
   In Church games this case happens when the test contains the atom $*>r$.

\-[\DAtwo.] If a register $x$ at moment $m+1$ lays between two adjacent registers $a>b$ at moment $m$,
   then $\v_{m+1}(x) = \floor{\frac{\v_m(a)+\v_m(b)}{2}}$.
   In Church games this happens when the test contains $a>*>b$.

\-[\DAthree.] If a register $x$ at moment $m+1$ equals a register $r$ at previous moment $m$,
   so $(r=x') \in C_m$,
   then $\v_{m+1}(x) = \v_m(r)$.
   In Church games this case corresponds to a test containing the atom $*=r$ for some register $r$.
\ol
Note that the case when a register $x$ must lay below all registers never happens,
since the special register $r_0$ always holds $0$ and a given constraint sequence is $0$-consistent and hence never requires $r_0>r'$ for some register $r$.
This is where $r_0$ comes handy.

\subparagraph{Invariant}
The data-assignment function satisfies the following invariant:
$$
  \forall m \in \bbN. ~\forall x,y \in R \textit{~s.t.~} (x>y) \in C_m\:~ \v_m(x) - \v_m(y) \geq 2^{\Bound-d_{xy}},
$$%
where $d_{xy}$ is the largest depth of $xy^{(m)}$\!-connecting chains and $\Bound$ is the bound on the depth of r2w chains.

\manufixed{R2: Page 22: I assume you refer with the items on the bottom of the page to the conditions on the next constraint at the top of the page even though these conditions have never been introduced as 'items'. The references are also inconsistent: (item 3) in line 49, item 1 in line 59, and item (3) in line 46 on page 23.}{we gave them the proper names to avoid any confusion}
\subparagraph{Proof of the invariant}
The invariant holds initially since $(r_1=r_2) \in C_0$ for all $r_1,r_2 \in R$.
Assuming it holds at step $m$, we show that it holds at $m+1$.
Fix two arbitrary registers $x,y \in R$ such that $(x'>y') \in C_m$;
we will prove that $\v_{m+1}(x)-\v_{m+1}(y) \geq 2^{\Bound-d_{xy}}$,
where $d_{xy}$ is the largest depth of $xy^{(m+1)}$-connecting chains.
There are four cases depending on whether the levels of $x$ and $y$ at moment $m+1$
are present at moment $m$ or not, illustrated in Figure~\ref{fig:da-invariant-proof}.

\newcommand\dr{\ensuremath{d_{r_2r_1}}}
\newcommand\ds{\ensuremath{d_{s_2w_1}}}

\parit{Case 1: both present}
The levels of $x$ and $y$ at $m+1$ also exist at moment $m$.
Let $a,b$ be registers s.t.\ $(a>b) \in C_m$ laying at moment $m$ on the same levels as $x$ and $y$ at moment $m+1$.
By data-assignment function (item \DAthree),
$\v_m(a) = \v_{m+1}(x)$ and $\v_m(b) = \v_{m+1}(y)$.
Note that the number of levels between $x$-$y$ and between $a$-$b$ may differ.
Consider the depths of connecting chains for $ab^{(m)}$ and $xy^{(m+1)}$:
Since every $ab^{(m)}$-connecting chain can be extended to $xy^{(m+1)}$-connecting chain of the same depth as shown on the figure,
we have\footnote{A stronger result holds, namely $d_{ab}=d_{xy}$, but it is not needed here.} $d_{ab} \leq d_{xy}$,
and hence $2^{\Bound-d_{ab}} \geq 2^{\Bound-d_{xy}}$.
Using the inductive hypothesis,
we conclude
$\v_{m+1}(x)-\v_{m+1}(y)=\v_m(a)-\v_m(b)\geq 2^{\Bound-d_{ab}}\geq 2^{\Bound-d_{xy}}$.

\begin{figure}
  \centering
  \begin{subfigure}[t]{.18\textwidth}
    \centering
    \includegraphics[width=\textwidth]{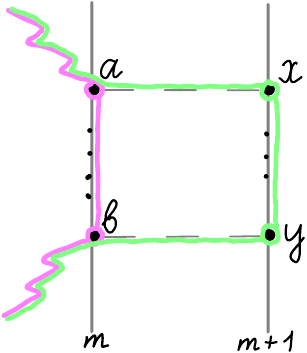}
    \caption*{Case 1}
  \end{subfigure}~~~~~~~
  \begin{subfigure}[t]{.17\textwidth}
    \centering
    \includegraphics[width=\textwidth]{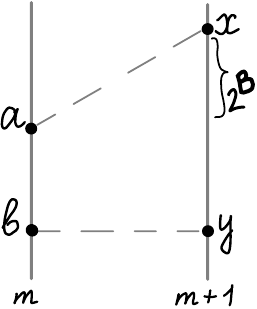}
    \caption*{Case 2}
  \end{subfigure}~~~~~~~
  \begin{subfigure}[t]{.18\textwidth}
    \centering
    \includegraphics[width=\textwidth]{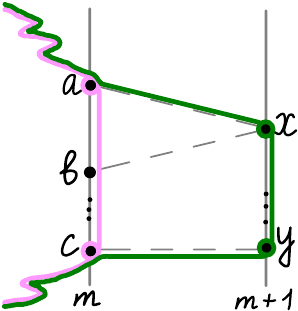}
    \caption*{Case 3}
  \end{subfigure}~~~~~~~
  \begin{subfigure}[t]{.17\textwidth}
    \centering
    \includegraphics[width=\textwidth]{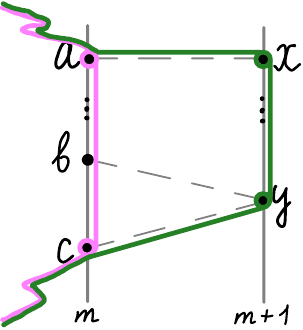}
    \caption*{Case 4}
  \end{subfigure}
  \caption{Proving the invariant}
  \label{fig:da-invariant-proof}
\end{figure}

\parit{Case 2: $x$ is new top}
The register $x$ lies on the top level of both moments $m$ and $m+1$,
and $y$ lies on a level that was also present at moment $m$.
This corresponds to item \DAone.
Let $(b = y') \in C_m$ and $a$ lies on the largest level at moment $m$
($a$ and $b$ may coincide).
Thus, $\v_{m+1}(x) = \v_m(a) + 2^\Bound$.
The invariant holds for $x,y$ because
$\v_{m+1}(x)=\v_m(a)+2^\Bound$ and $\v_m(a)\geq\v_m(b)=\v_{m+1}(y)$.

\parit{Case 3: $x$ is middle new, $y$ was present}
The register $x$ at moment $m+1$ lies on a new level that is between the levels of $a$ and $b$
at moment $m$, so $\v_{m+1}(x) = \floor{\frac{\v_m(a)+\v_m(b)}{2}}$ by item \DAtwo of data-assignment function.
The register $y$ at moment $m+1$ lies on a level that was also present at moment $m$, witnessed by register $c$.
Formally,
$C_m$ contains $a>x'>b$ for $a$ and $b$ adjacent at moment $m$,
$c=y'$, and $x'>y'$.
Note that $c$ and $b$ may coincide.
Then,
$
\v_{m+1}(x) - \v_{m+1}(y) = \floor{\frac{\v_m(a)+\v_m(b)}{2}} - \v_m(c) =
\floor{\frac{\v_m(a)-\v_m(c)}{2} + \frac{\v_m(b)-\v_m(c)}{2}} \geq
\floor{\frac{\v_m(a)-\v_m(c)}{2}} + \floor{\frac{\v_m(b)-\v_m(c)}{2}} \geq
\floor{2^{\Bound-d_{ac}-1}} + \floor{2^{\Bound-d_{bc}-1}}\geq
2^{\Bound-d_{ac}-1} + \floor{2^{\Bound-d_{bc}-1}}
$;
the latter holds because $d_{a c} < b$ while $d_{bc} \leq b$.
We need to prove that the last sum is greater or equal to
$2^{\Bound-d_{xy}}$.
Figure~\ref{fig:da-invariant-proof} (case 3) shows how the green $xy^{(m+1)}$-connecting chain can be constructed from the pink $ac^{(m)}$-connecting chain, hence $d_{xy} \geq d_{ac}+1$,
so we get $2^{\Bound-d_{ac}-1} \geq 2^{\Bound-d_{xy}}$.
Hence,
$\v_{m+1}(x)-\v_{m+1}(y) \geq 2^{\Bound-d_{ac}-1} + \floor{2^{\Bound-d_{bc}-1}} \geq 2^{\Bound-d_{xy}}$.

\parit{Case 4: $x$ was present, $y$ is middle new}
The case is similar to the previous one, but we prove it for completeness.
The constraint $C_m$ contains $a = x'$, $x'>y'$, $b>y'>c$,
where $b$ and $c$ are adjacent ($a$ and $b$ might be the same).
Then,
$
\v_{m+1}(x) - \v_{m+1}(y) =
\v_m(a) - \floor{\frac{\v_m(b)+\v_m(c)}{2}} \geq
\floor{\frac{\v_m(a)-\v_m(b)}{2} + \frac{\v_m(a)-\v_m(c)}{2}} \geq
\floor{\frac{\v_m(a)-\v_m(b)}{2}} + \floor{\frac{\v_m(a)-\v_m(c)}{2}} \geq
\floor{2^{\Bound-d_{ab}-1}} + \floor{2^{\Bound-d_{a c}-1}}\geq
\floor{2^{\Bound-d_{ab}-1}} + 2^{\Bound-d_{ac}-1}
$,
and since $d_{a c}+1 \leq d_{x y}$,
we get
$\v_{m+1}(x) - \v_{m+1}(y) \geq
 \floor{2^{\Bound-d_{a b}-1}} + 2^{\Bound-d_{ac}-1} \geq
 2^{\Bound-d_{xy}}$.
\qed

\subparagraph{Proof of Lemma~\ref{lem:data-assign-func}}
It is sufficient to show that for every atom
$(r \bowtie s)$ or $(r \bowtie s')$ of $C_m$,
where $r,s \in R$ and ${\bowtie} \in \{<,>,=\}$,
the expressions
$\v_m(r) \bowtie \v_m(s)$ or $\v_m(r) \bowtie \v_{m+1}(s)$ hold,
respectively.
Depending on $r \bowtie s$, there are the following cases.
\li
\- If $C_m$ contains $(r=s)$ or $(r=s')$ for $r,s \in R$,
   then item \DAthree implies resp.\ $\v_m(r)=\v_m(s)$ or $\v_m(r) = \v_{m+1}(s)$.

\- If $(r>s) \in C_m$, then $\v_m(r) > \v_m(s)$ by the invariant.

\- Let $(r > s') \in C_m$
   and the level of $s$ at moment $m+1$ be present at moment $m$,
   i.e.\ there is a register $t$ such that $(t=s') \in C_m$.
   Since $\v_m(t)=\v_{m+1}(s)$ by item \DAthree and
   since $\v_m(r) > \v_m(t)$ by $(r > t = s') \in C_m$,
   we get $\v_m(r) > \v_{m+1}(s)$.
   Similarly for the case $(r<s') \in C_m$ where $s$ lies on a level also present at moment $m$.

\- Let $(r < s') \in C_m$ and $s$ lies on the highest level among all levels at moments $m$ and $m+1$.
   Then $\v_m(r) < \v_{m+1}(s)$ because $\v_{m+1}(s) \geq \v_m(r)+2^\Bound$ by item \DAone.

\- \label{<>}
   Finally, there are two cases left:
   $(r > s') \in C_m$ or $(r < s') \in C_m$,
   where $s$ lies on a newly created level at moment $m+1$,
   and there are higher levels at moment $m$.
   This corresponds to item \DAtwo.
   Let $(a>b)\in C_m$ be two adjacent registers at moment $m$
   between which the register $s$ is inserted at moment $m+1$,
   so $(a>s'>b) \in C_m$.
   Let $d_{ab}$ be the maximal depth of $ab^{(m)}$-connecting chains;
   fix one such chain.
   We change it by going through $s$ at moment $m+1$,
   i.e.\ substitute the part $(a,m)>(b,m)$ by $(a,m)>(s,m+1)>(b,m)$:
   the depth of the resulting chain is $d_{ab}+1$ and it is $\leq \Bound$ by boundedness of r2w chains.
   Hence $d_{a b} \leq \Bound-1$,
   so $\v_m(a)-\v_m(b) \geq 2$,
   implying $\v_m(a) > \floor{\frac{\v_m(a)+\v_m(b)}{2}} > \v_m(b)$.
   When $(r>s')\in C_m$ we get $\v_{m+1}(r)\geq \v_m(a)$,
   and when $(r<s') \in C_m$ we get $\v_{m+1}(r)\leq \v_m(b)$,
   therefore we are done.
\il
Finally, the function always assigns nonnegative numbers, from $\bbN$, so we are done.
\qed

\paragraph{Lifting the assumption about $0$}

We now lift the assumption about a register always holding $0$.
This assumption was used in the definition of the data-assignment function (items \DAone, \DAtwo, \DAthree).
The idea is to convert a given constraint sequence over registers $R$
into a sequence over registers $R \uplus \{r_0\}$
while preserving satisfiability.

\subparagraph{Conversion function}
Given a $0$-consistent constraint sequence $C_0 C_1 ...$ over $R$ without a special register holding $0$,
we will construct, on-the-fly, a $0$-consistent sequence $\tilde C_0 \tilde C_1 ...$ over $R\uplus\{r_0\}$ that has such a register.
Intuitively, we will add atoms $r=r_0$ only if they follow from what is already known otherwise we add atoms $r>r_0$.

Initially, in addition to the atoms of $C_0$, we require $r=r_0$ for every $r\in R$
(recall that the original $C_0$ contains $r_1=r_2$ for all $r_1,r_2 \in R$).
This gives an incomplete constraint $\tilde C_0$ over $R_0 \cup R'_0$:
it does not yet have atoms of the form $r \bowtie r'_0$, $r_0 \bowtie r'$, $r'_0 \bowtie r'$,
where $r \in R_0$.
\newcommand\Ct{{\tilde C_{m|R_0}}}%

At moment $m \geq 0$,
given
a constraint $\Ct$ over $R_0$ (without primed registers $R'_0$) and
a constraint $C_m$ over $R \cup R'$ (without register $r_0$),
we construct $\tilde C_m$ over $R_0\cup R_0'$ as follows:
\li
\- $\tilde C_m$ contains all atoms of $C_m$.

\- $(r_0=r'_0) \in \tilde C_m$.

\-\label{conv:r}
   For every $r \in R$:
   if $r' = r_0$ is implied by the current atoms of $\tilde C_m$,
   then we add it, otherwise we add $r'>r_0$.

   Notice that the atom $r'<r_0$ is never implied by $\tilde C_m$, as we show now.
   Suppose the contrary.
   Then, since $C_m$ does not talk about $r_0$ nor $r_0'$,
   there should be $s\in R$ such that $(s=r_0) \in \tilde C_{m|R_0}$ and $(r'<s) \in C_m$.
   By construction, if this is the case, then there is a one-way chain
   $(r_1,0) = (r_2,1) = ... = (s, m)$ of zero depth. As a consequence,
   we can construct the one-way decreasing chain $(r_1,0)=(r_2,1)=...=(s,m)>(r,m+1)$ of depth $1$,
   which implies that $C_0 C_1 ...$ is not $0$-consistent.
   We reached a contradiction, so $(r'<r_0) \in \tilde C_m$ is not possible.

\- Finally, to make $\tilde C_m$ maximal,
   we add all atoms implied by $\tilde C_m$ but not present there.
\il
Using this construction,
we can easily define $\mathit{c0nv} : C^+ \to \tilde C$
and map a given $0$-consistent constraint sequence $C_0 C_1 ...$
to $\tilde C_0 \tilde C_1 ...$ with a dedicated register holding $0$.
Notice that the constructed sequence is also $0$-consistent,
because we never add inconsistent atoms and never add an atom $r'<r_0$ (see the third item).
Finally,
in the constructed sequence the depths of r2w chains can increase by at most $1$,
due to the register $r_0$:
it can increase the depth of a finite chain by one, unless the chain is already ending in a register holding $0$.
Hence we get the following lemma.

\begin{lemma}\label{lem:conv-zero-reg}
  For every $0$-consistent constraint sequence $C_0 C_1 ...$,
  the sequence $\tilde C_0 \tilde C_1 ...$ constructed with $\mathit{c0nv}$
  is also $0$-consistent.
  Moreover, the maximal depth of r2w chains cannot increase by more than $1$.
\end{lemma}

\subparagraph{Final proof of Lemma~\ref{lem:data-assign-func}}
We lift the assumption about constraint sequences having a special register always holding zero.
Using $\mathit{c0nv}$,
we automatically translate a given $0$-consistent constraint sequence prefix $C_0 ... C_m$ over $R$
into $\tilde C_0 ... \tilde C_m$ over $R \uplus \{r_0\}$ that contains a register $r_0$ always holding $0$.
Now we can apply the data-assignment function as described before.
By definition of $\mathit{c0nv}$,
the original constraint $C_i \subset \tilde C_i$ for every $i \geq 0$,
so the resulting valuation satisfies the original constraints as well.
This concludes the proof of Lemma~\ref{lem:data-assign-func}.\qed

\section{Conclusion}

Our main result states that one-sided Church games for
specifications given as \emph{deterministic} register automata over
$(\mathbb{N}, \leq)$ are decidable, in \textsc{ExpTime}. Moreover, we
show that those games are determined, and that strategies implemented by
transducers with registers suffice to win.

The decidability result involves a characterisation of satisfiable
infinite constraint sequences over $(\mathbb{N},\leq)$:
they must not have decreasing two-way chains of infinite depth, nor ceiled (bounded from the above) chains of unbounded depth.
A similar characterisation can be established for $(\mathbb{Z},\leq)$.
For instance, it should require that the two-way chains
which are bounded from both above and below have bounded depth.
Then, the decidability of one-sided Church synthesis for $(\mathbb{Z},\leq)$
can be established in a similar way to $(\bbN,\leq)$.
The decidability for $(\bbZ, \leq)$ can also be proven by reducing
to the problem for $(\bbN,\leq)$ as follows.
From a specification $S$, given as
a set of words $\d_1\sigma_1\d_2\sigma_2\dots$ alternating between a
value $\d_i\in\mathbb{Z}$ and a letter $\sigma_i$ from a finite
alphabet $\Sigma$, we construct a specification $S'$ of words of the form
$\text{max}(0,\d_1)\#\text{max}(0,-\d_1)\sigma_1\text{max}(0,\d_2)\#\text{max}(0,-\d_2)\sigma_2\dots
\in (\mathbb{N}(\Sigma\cup\{\#\}))^\omega$, where $\#$ acts as a
waiting symbol. Non-zero values given by Adam at positions $4n+1$
correspond to positive values, and non-zero values at positions $4n+3$ correspond
to negative values. Thus, if $S$ is given as a deterministic register automaton, one can construct a deterministic register automaton that recognises $S'$, which preserves the existence of solutions to
synthesis. An interesting future direction is to
establish a general reduction between data domains such that
decidability results for one-sided Church synthesis transfer from
one domain to the other.
A candidate notion for such a reduction was defined in the context of register-bounded transducer synthesis~\cite{DBLP:conf/icalp/ExibardF022}.

Another important future direction is to consider logical formalisms
instead of automata to describe specifications in a more
declarative and high-level manner.
Data word first-order logics~\cite{BMSSD06,DBLP:journals/corr/abs-1110-1439} have been studied with respect to the
satisfiability problem but when used as specification languages for synthesis, only few results are known.
The first steps in this direction were done in~\cite{DBLP:journals/lmcs/FigueiraMP20,BP22} for Constraint LTL on $(\bbZ,\leq)$; see also \cite{DQ23} for an overview of nonemptiness of constraint tree automata;
and see~\cite{DBLP:conf/fossacs/BerardBLS20} for a slightly different context of parameterised synthesis.

\bibliographystyle{plain}
\bibliography{refs}

\end{document}
